\documentclass[journal]{IEEEtran}

\usepackage[cmex10]{amsmath}
\usepackage{epsfig,subfigure,amssymb,color,multirow,array}
\usepackage{algorithm,algorithmic}
\usepackage[all,2cell]{xy} \UseAllTwocells \SilentMatrices
\DeclareGraphicsExtensions{.pdf}
\newtheorem{defn}{Definition}
\newtheorem{thm}{Theorem}
\newtheorem{cor}{Corollary}
\newtheorem{prop}{Proposition}
\newtheorem{lemma}{Lemma}

\newtheorem{ex}{Example}
\usepackage{cite}

\usepackage{verbatim}
\newenvironment{proof}{\begin{IEEEproof}}{\end{IEEEproof}}

\newcommand{\qed}              {\hfill \IEEEQED}
\newcommand  {\ab}        	{{\boldsymbol a}}
\newcommand  {\bb}        	{{\boldsymbol b}}
\newcommand  {\cb}        	{{\boldsymbol c}}
\newcommand  {\db}        	{{\boldsymbol d}}

\newcommand  {\vb}        	{{\boldsymbol v}}
\newcommand  {\xb}        	{\underline{x}}
\newcommand  {\yb}        	{\underline{y}}
\newcommand  {\kb}        	{{\boldsymbol k}}

\newcommand  {\ib}                {{\boldsymbol i}}
\newcommand  {\jb}                {{\boldsymbol j}}

\newcommand  {\wb}        	{{\boldsymbol w}}
\newcommand  {\ub}        	{{\boldsymbol u}}
\newcommand  {\eb}        	{\underline {e}}

\newcommand  {\rb}        	{{\boldsymbol r}}
\newcommand {\C}		{\mathbb C}

\newcommand{\mI}		{{\tt{I}}}

\newcommand{\Dm}		{{\tt{D}}}
\newcommand{\Am}		{{\tt{A}}}

\newcommand{\Hm}		{{\tt{H}}}

\newcommand{\Sf}		{{\frak{S}}}
\newcommand{\Cf}		{{\frak{C}}}

\newcommand{\rank}{\textnormal{rank}}
\newcommand{\adj}{\textnormal{adj}}
\renewcommand{\det}{\textnormal{det}}
\newcommand{\abs}[1]{\left| #1 \right|}
\def\beq{\begin{equation}}
\def\eeq{\end{equation}}
\def\bea{\begin{IEEEeqnarray}{rCl}} 
\def\eea{\end{IEEEeqnarray}}
\def\bean{\begin{IEEEeqnarray*}{rCl}} 
\def\eean{\end{IEEEeqnarray*}} 
\def\ben{\begin{enumerate}}
\def\een{\end{enumerate}}
\def\no{\IEEEnonumber} 
\def\yes{\IEEEyesnumber} 
\def\R{\mathbb R}

\newcommand{\F}{{\mathbb F}}

\begin{document}
\bibliographystyle{ieeetran}
\allowdisplaybreaks

\title{Analysis and Practice of Uniquely Decodable One-to-One Code}

\author{Chin-Fu Liu,
     Hsiao-feng (Francis) Lu, and
     Po-Ning Chen
\thanks{The authors are with the Department of 
Electrical and Computer Engineering, 
National Chiao-Tung University (NCTU), Hsinchu 30010, Taiwan (e-mails: hubert.liu.1031@gmail.com, francis@mail.nctu.edu.tw, poning@faculty.nctu.edu.tw). They are also with the Center of Information and Communications Technology of NCTU, Taiwan.}
}

\maketitle

\begin{abstract}
In this paper, we consider the so-called uniquely decodable one-to-one code (UDOOC) 
that is formed by inserting a ``comma" indicator, termed the {\em unique word} (UW), between consecutive one-to-one codewords for separation. Along this research direction,
we first investigate several general combinatorial properties of UDOOCs,
in particular the enumeration of the number of UDOOC codewords for any (finite) codeword length.
Based on the obtained formula on the number of length-$n$ codewords for a given UW, the per-letter average codeword length of UDOOC for the optimal compression of a given source statistics can be computed.
Several upper bounds on the average codeword length of such UDOOCs are next established.
The analysis on the bounds of average codeword length
then leads to two asymptotic bounds for sources having infinitely many alphabets,
one of which is achievable and hence tight for a certain source statistics and UW,
and the other of which proves the achievability of source entropy rate of UDOOCs
when both the block size of source letters for UDOOC compression and
UW length go to infinity.
Efficient encoding and decoding algorithms
for UDOOCs are also given in this paper.
Numerical results show that when grouping three English letters as a block,
the UDOOCs with $\text{UW}=0001$, $0000$, $000001$ and $000000$ 
can respectively reach the compression rates of $3.531$, $4.089$, $4.115$, $4.709$ 
bits per English letter (with the lengths of UWs included), where 
the source stream to be compressed is the book titled \emph{Alice's Adventures in Wonderland}. In comparison
with the first-order Huffman code, the second-order Huffman code, the third-order Huffman code  and the Lempel-Ziv code, which respectively achieve the compression rates of $3.940$, $3.585$, $3.226$ and $6.028$ bits per single English letter, the proposed UDOOCs can potentially result in comparable compression rate to the Huffman code under similar decoding complexity and yield
a smaller average codeword length than that of the Lempel-Ziv code, thereby confirming 
the practicability of UDOOCs.
\end{abstract}

\section{Introduction} 

The investigation of lossless source coding can be roughly classified into two categories,
one for the compression of a sequence of source letters and the other for a single ``one shot" source symbol \cite{Cover}.
A well-known representative for the former is the Huffman code, while the latter
is usually referred to as the one-to-one code (OOC).

The Huffman code is an optimal entropy code that can achieve the minimum average codeword length for a given statistics of source letters. 
It obeys the rule of unique decodability and hence the concatenation of Huffman codewords
can be uniquely recovered by the decoder. 
Although optimal in principle, it may encounter several obstacles in implementation.
For example, the rare codewords are exceedingly long in length, thereby hampering
the efficiency of decoding.
Other practical obstacles include 
\begin{enumerate}
\item[i)] the codebook needs to be pre-stored for encoding and decoding, which might demand a large memory space for sources with moderately large alphabet size, 
\item[ii)] the decoding of a sequence of codewords must be done in sequential, not in parallel, and 
\item[iii)] erroneous decoding of one codeword could affect the decoding of subsequent codewords, i.e., error propagation.
 \end{enumerate}
In contrast to unique decodability,
the OOC only requires an assignment of distinct codewords 
to the source symbols.
It has been studied since 1970s \cite{LEN72} and is shown
to achieve an average codeword length 
smaller than the source entropy
minus a nontrivial amount of quantity called \emph{anti-redundancy} \cite{LEN08O}.
Various research works over the years have shown that the anti-redundancy can be as large as the logarithm of the source entropy 
\cite{LEN94,LEN96,LEN07,LEN80,LEN78,LEN82,LEN08J,LEN04,LEN08O,LEN09,LEN86}
In comparison with an entropy coding like Huffman code, 
the codewords of an OOC can be sequenced alphabetically 
and hence the practice of an OOC is generally considered to be more computationally convenient.

A question that may arise from the above discussion is whether
we could add a ``comma'' indicator, termed \emph{Unique Word} (UW) in this paper, 
in-between consecutive OOC codewords, 
and use the OOC for the lossless compression of a sequence of source letters.
A direct merit of such a structure is that the alphabetically sequenced OOC codewords can be manipulated without {\em a priori} stored codebook at both the encoding and decoding ends.
This is however achieved at a price of an additional constraint that
the ``comma" indicator must not appear as an internal subword\footnote{
We say $\ab=a_1 \ldots a_m$ is not an \emph{internal subword} of $\bb=b_1 \ldots b_n$ if there does not exist $i$ such that $b_i \ldots b_{i+m-1}=\ab$ for all $1 < i < n-m+1$.
When the same condition holds for all $1 \leq i \leq n-m+1$ (i.e., with two equalities), we say $\ab$ is not a \emph{subword} of $\bb$.
} in 
the concatenation of either an OOC codeword with a comma indicator, or a comma indicator with an OOC codeword.

On the one hand, this additional constraint facilitates the fast identification of OOC codewords in a coded bit-stream 
and makes feasible the subsequent parallel decoding of them.
On the other hand, the achievable average codeword length of a UW-forbidden OOC
may increase significantly for a bad choice of UWs.
Therefore, it is of theoretical importance 
to investigate the minimum average codeword length of a UW-forbidden OOC, 
in particular the selection of a proper UW that could minimize this quantity.
Since the resultant UW-forbidden OOC coding system satisfies unique decodability (UD), we will refer to it conveniently as the UDOOC in the sequel.

We would like to point out that the conception of inserting UWs between consecutive words might not be new in existing applications. 
For example, in the IEEE 802.11 standard for wireless local area networks \cite{std}, an entity similar to the UW in a bit-stream has been specified as a boundary indicator for a frame, or as a synchronization 
support, or as a part of error control mechanism.
In written English, punctuation marks and spacing are essential to disambiguate the meaning of sentences. However, a complete theoretical study of the UDOOC conception remains undone.
This is therefore the main target of this paper.
We now give a formal definition of binary UDOOCs.

\medskip

\begin{defn} \label{defn:udooc}
Given UW $\kb=k_1 k_2 \ldots k_L \in \F\times\cdots\times \F=\F^L$, 
where $\F=\{0,1\}$, we say ${\cal C}_\kb(n)$ is a UDOOC of length $n\geq 1$ associated with $\kb$ if it contains all binary length-$n$ tuples ${\boldsymbol b}=b_1 \ldots b_n$ such that $\kb$ is not an internal subword of the concatenated bit-stream $\kb {\boldsymbol b} \kb$. As a special case, we set ${\cal C}_\kb(0) :=\{\text{null}\}$.\footnote{In our binary UDOOC, 
it is allowed to place two UWs side-by-side with 
nothing in-between in order to produce a \emph{null} codeword.
} 
The overall UDOOC associated with $\kb$, denoted by ${\cal C}_\kb$, is given by
\[
{\cal C}_\kb \ := \ \bigcup_{n \geq 0} {\cal C}_\kb(n).
\]
\end{defn}

\medskip

For a better comprehension of Definition \ref{defn:udooc}, we next give an example to illustrate 
how a UDOOC is generated and how it is used in encoding and decoding.
The way to count the number of length-$n$ UDOOC codewords will follow.

\medskip

\begin{ex}\label{ex1}
Suppose the UW $\kb=00$ is chosen. In order to prohibit the concatenation of any UDOOC codeword and UW, regardless of the ordering, from containing 00 as an internal subword, the following constraints must be satisfied. 
\begin{itemize}
\item Type-I constraints: The UW cannot be a subword of any UDOOC codeword. This means that within any codeword of length $n\geq1$:
\begin{itemize}
\item[] (C1) ``0'' can only be followed by ``1''.
\item[] (C2)  ``1'' can be followed by either ``0'' or ``1''.
\end{itemize}
\item Type-II constraints: Besides the type-I constraints, the UW cannot appear as an internal subword, containing the boundary of any UDOOC codeword and UW, regardless of the ordering. This implies that except for the ``null" codeword:
\begin{itemize}
\item[] (C3) The first bit of a codeword cannot be ``0''.
\item[] (C4) The last bit of a codeword cannot be ``0''.
\end{itemize}
\end{itemize}

\begin{figure}
\begin{center}
\includegraphics[width=0.5\textwidth]{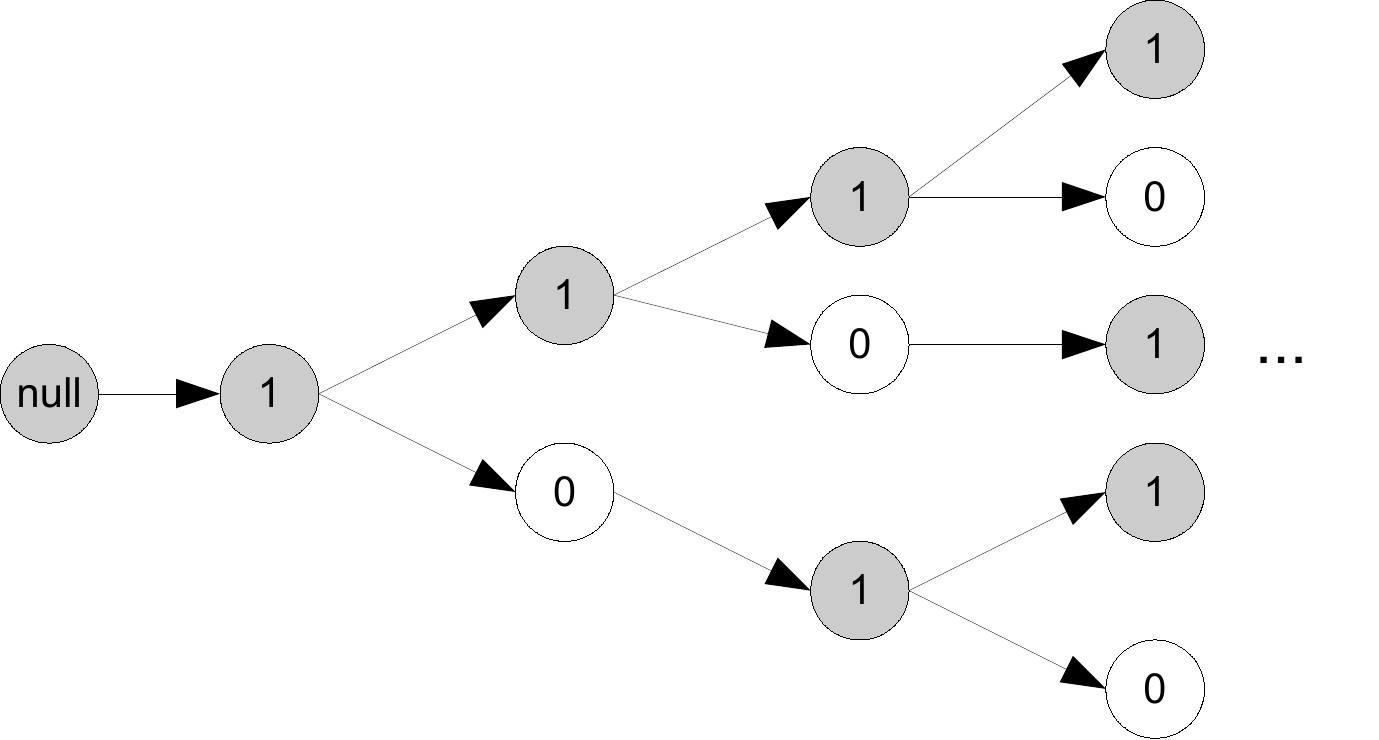}
\caption{UDOOC code tree for $\text{UW}=00$.}\label{00pctree}
\end{center}
\end{figure}

By Constraints (C1)-(C4), we can place the UDOOC codewords
on a code tree as shown in Fig. \ref{00pctree},
in which each path starting from the root node and ending
at a gray-shaded node corresponds to a codeword.
Thus, the codewords for $\text{UW}=00$
include null, 1, 11, 101, 111, 1011, 1101, 1111, etc.
It should be noted that we only show the codewords of length up to four, while the code tree actually can grow indefinitely in depth.

At the decoding stage, suppose the received bit-stream is 00100110010100111100,
where we add UWs at both the left and the right ends to indicate the margins of the bit-stream.
This may facilitate, for example, noncoherent bit-stream transmission.  
Then, the decoder first locates
UWs and parses the bit-stream into separate codewords as 1, 11, 101 and 1111,
after which the four codewords can be decoded separately (possibly in parallel) to their respective source symbols.

With the code tree representation, the number of length-$n$ codewords in a UDOOC code tree can be straightforwardly calculated. Let the ``null"-node be placed at level $0$.
For $n\geq 1$, denote by $a_n$ and $b_n$ the numbers of ``1"-nodes and ``0"-nodes at the $n$th level of the code tree, respectively.
By the two type-I constraints, the following recursions 
hold:
$$
\left\{\begin{array}{lcl}
a_n &=& a_{n-1} + b_{n-1}\\
b_n &=& a_{n-1}
\end{array}\right.\text{ for }n\geq 2.
$$
With the initial values of $a_1=1$ and $a_2=1$,
it  follows that 
$\{a_n\}_{n=1}^\infty$ is the renowned Fibonacci sequence \cite{AMM59}, i.e., $a_n = a_{n-1}+a_{n-2}$
for $n\geq 3$. 
This result, together with the two type-II constraints, implies
that the number of length-$n$ codewords is $|{\cal C}_{00}(n)|=a_n$ for $n \geq 1$, which
according to the Fibonacci recursion is given by:
\[
 a_n \ = \  \frac{\varphi^n - \bar{\varphi}^n }{\sqrt{5}},
\]
where $\varphi = \frac{1+\sqrt{5}}{2}$ is the Golden ratio and $\bar{\varphi} = \frac{1-\sqrt{5}}{2}$ is the Galois conjugate of $\varphi$ in number field ${\mathbb Q}(\sqrt{5})$. 
Thus, $|{\cal C}_{00}(n)|$ grows exponentially in $n$ with base $\varphi\approx 1.618$.
\qed
\end{ex}

\medskip

We can similarly examine the choice of $\text{UW}=01$ and draw the respective code tree in Fig.~\ref{01pctree}, where its type-I constraints become:
\begin{itemize}
\item[] (C1) ``0'' can only be followed by ``0''.
\item[] (C2)  ``1'' can be followed by either ``0'' or ``1''.
\end{itemize}
and no type-II constraints are required.
We then obtain
$$
\left\{\begin{array}{lcl}
a_n &=& a_{n-1}\\
b_n &=& a_{n-1} + b_{n-1}
\end{array}\right.\text{ for }n\geq 2,
$$
and $|{\cal C}_{01}(n)|=a_n+b_n=n+1$.
Although from Figs.~\ref{00pctree} and \ref{01pctree}, taking $\text{UW}=01$ seems to provide more codewords than taking $\text{UW}=00$ at small $n$, the linear growth of $|{\cal C}_{01}(n)|$ with respect to codeword length $n$ suggests that such choice is not as good as the choice of $\text{UW}=00$ when $n$ is  moderately large.

The above two exemplified UWs point to an important fact that
the best UW, which minimizes the average codeword length,
depends on the code size required.
Thus, the investigation of the efficiency of a UW 
may need to consider the transient superiority in addition to claiming the asymptotic 
winner. 

In this paper, we provide efficient encoding and decoding algorithms
for UDOOCs, and investigate their general combinatorial properties, in particular
the enumeration of the number of codewords for any (finite) codeword length.
Based on the obtained formula for $|{\cal C}_\kb(n)|$, i.e., the number of length-$n$ codewords for a given UW $\kb$,
the average codeword length of the optimal compression of a given source statistics using UDOOC can be computed.
Classifications of UWs are followed, where two types of equivalences are specified,
which are (exact) \emph{equivalence} and \emph{asymptotic equivalence}.
UWs that are equivalent in the former sense are required to yield exactly the same minimum average codeword length for every source statistics, while asymptotic equivalence
only dictates the UWs to result in the same asymptotic growth rate as codeword length approaches infinity. Enumeration of the number of asymptotic equivalent UW classes are then studied with the help of methodologies in \cite{MATH81} and \cite{MATH03}.
Furthermore, three upper bounds on the average codeword length of UDOOCs are established. The first one is a general upper bound when only the largest probability of source symbols is given.
The second upper bound refines the first one under the premise that
the source entropy is additionally known. When both the largest and second largest probabilities of source symbols are present apart from the source entropy,
the third upper bound can be used. Since these bounds are derived in terms of different techniques, actually none of the three bounds dominates 
the other two for all statistics. Comparison of these bounds for an English text 
with statistics from \cite{oxfdic}
and that with statistics from the book \emph{Alice's Adventures in Wonderland} will be accordingly provided.
The analysis on bounds of the average codeword length
gives rise to two asymptotic bounds on ultimate per-letter average codeword length,
one of which is tight for a certain choice of source statistics and UW,
and the other of which  leads to 
the achievability of the ultimate per-letter average codeword length 
to the source entropy rate when both the source block length for compression  
and UW length tend to infinity.

It may be of interest to note that the enumeration of the number of codewords, i.e., $c_{\kb,n}=|{\cal C}_\kb(n)|$,
is actually obtained indirectly via the determination of an auxiliary quantity $s_{\kb,n}$, which is the number of words satisfying the type-I constraints but not necessarily the type-II constraints.
By utilizing the Goulden-Jackson cluster method 
\cite{LNMATH,DIFF10,SP98,DIFF99,DIFF06},
an explicit formula for $s_{\kb,n}$ can be established. The desired enumeration formula for the number of length-$n$ UDOOC codewords
is then obtained by proving that both
the so-called linear constant coefficient difference equation (LCCDE)
and the asymptotic growth rate of $s_{\kb,n}$ and $c_{\kb,n}$ 
are identical.
We next show based on the obtained formula that  the all-zero UW has the largest asymptotic growth rate among all UWs of the same length, while the UW with the smallest growth rate is $00\ldots 01$. Interestingly, the all-zero UW is often the one that yields the smallest $c_{\kb,n}$ for small $n$, in contrast to UW $00\ldots 01$, whose $c_{\kb,n}$ tops all other UWs 
when $n$ is small.
We afterwards demonstrate by using these two special UWs that the general encoding and decoding algorithms can be considerably simplified when further taking into consideration 
the structure of particular UWs.
A side result from the enumeration of $c_{\kb,n}$ is that for all UWs, 
the codeword growth rate of UDOOCs will tend to $|\F|=2$ as the length of the UW goes to infinity. 

With regard to the compression performance of the proposed UDOOCs, numerical results show that when grouping three English letters as a block and separating
the consecutive blocks by UWs,
the UDOOCs with $\text{UW}=0001$, $0000$, $000001$ and $000000$ 
can respectively reach the compression rates of $3.531$, $4.089$, $4.115$, $4.709$ 
bits per English letter (with the length of UWs included), where 
the source stream to be compressed is the book titled \emph{Alice's Adventures in Wonderland}. In comparison
with the first-order Huffman code, the second-order Huffman code, the third-order Huffman code\footnote{A
$k$th-order Huffman code maps a block of $k$ source letters onto a variable-length codeword. 
}  and the Lempel-Ziv code, which respectively achieve the compression rates of $3.940$, $3.585$, $3.226$ and $6.028$ bits per English letter, the proposed UDOOCs can potentially result in comparable compression rate to the Huffman code under similar decoding complexity and yield
a smaller average codeword length than that of the Lempel-Ziv code, thereby confirming 
the practicability of the scheme of separating OOC codewords by UWs.

In the literature, there are a number of publications 
on enumeration of words in a set that forbids the appearance of a specific pattern  
\cite{MATH293,MATH195,MATH98,MATH99,MATH00}.
For example, Doroslova investigated the number of binary length-$n$ words,
in which a specific subword like $1010\ldots 10$ is not allowed \cite{MATH98}.
He then extended the result to non-binary alphabet and forbidden subwords of length $3$ \cite{MATH195,MATH00}, and forbidden subwords of length $4$ \cite{MATH99},
as well as the so-called ``good" forbidden subwords \cite{MATH293}.
The analyses in 
\cite{MATH293,MATH195,MATH98,MATH99,MATH00}
however depend 
on the specific structure of forbidden subwords considered, and no asymptotic 
examination is performed.
On the other hand, algorithmic approaches have been devoted
to a problem of similar (but not the same) kind, one of which
is called the Goulden-Jackson clustering method 
\cite{LNMATH,SP98,DIFF99,DIFF10,DIFF06}.

Instead of enumerating the number of words internally without a forbidden pattern,
some researchers investigate the inherent characteristic of such patterns.
In this literature, Rivals and Rahmann \cite{MATH03} provide an algorithm to account for the number of {\em overlaps}\footnote{
In \cite{MATH81} and \cite{MATH03}, the authors actually use a different name ``autocorrelation" for ``overlap"  originated from \cite{LNMATH}.
Specifically, they define the \emph{autocorrelation} $\vb=v_1\cdots v_L$ of a binary length-$L$ string $\ub=u_1\cdots u_L$ as a binary zero-one bit-stream of length $L$ such that $v_i=1$ if $i$ is a period of $\ub$, where
$i$ is said to be a period of $\ub$ when $u_j=u_{i+j}$ for every $1\leq j\leq L-i$.
Since the term \emph{autocorrelation} is extensively used in other literature ilke digital communications to illustrate similar but different conception, we adopt the name of ``overlap" in this paper.}
for a given set of patterns, for which 
the definition will be later given in this paper for completeness (cf. Definition \ref{definition-ao}).
Different from the algorithmic approach in \cite{MATH03},
Guibas and Odlyzko established 
upper and lower bounds
for the number of overlaps when the length of the concerned pattern goes to infinity
\cite{MATH81}.

\begin{figure}
\begin{center}
\includegraphics[width=0.5\textwidth]{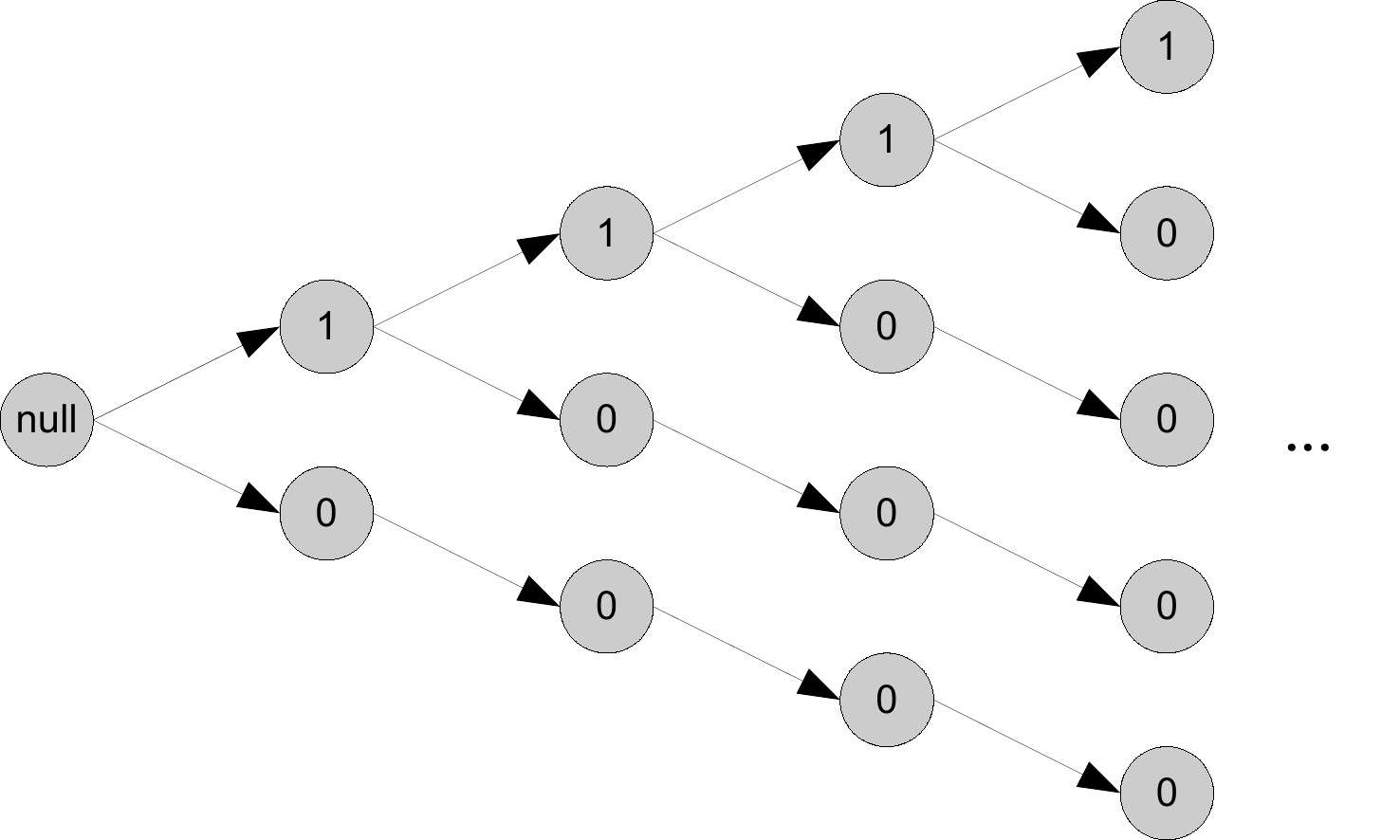}
\caption{UDOOC code tree for $\text{UW}=01$.}\label{01pctree}
\end{center}
\end{figure}

The rest of the paper is organized as follows. In Section \ref{sec:2},
construction of general UDOOCs is introduced. In Section \ref{sec:3},
combinatorial properties of UDOOCs, including the enumeration of the number of codewords,  are derived. In Section \ref{sec:4},
the encoding and decoding algorithms as well as bounds on average codeword length for general UDOOCs  are provided and discussed. 
In Section \ref{sec:5}, numerical results on the compression performance of UDOOCs are presented. 
Conclusion is drawn in Section \ref{sec:6}.

\section{Construction of UDOOCs} \label{sec:2}

In the previous section, we have seen that the code tree of a UDOOC
with $\text{UW}=00$ (or $\text{UW}=01$) is by far a useful tool for devising its properties.
Along this line, we will provide a systematic construction of code tree for general UDOOC
in this section. Specifically, a digraph \cite{DIGRAPH}  whose directional edges meet the type-I and type II constraints\footnote{For clarity of its explanation,
we introduce the so-called type-I and type-II constraints in Example \ref{ex1}. Listing
these constraints for a general UW however may be tedious and less comprehensive. 
As will be seen from this section, these constraints can actually  be 
absorbed into the construction of the digraph (See specifically Eq. \eqref{digraph});
hence, explicitly listing of constraints becomes of secondary necessity.
} from the UW will be first introduced. By the digraph,
the construction of a general UDOOC code tree as well as the determination 
of the growth rate of UDOOC codewords with respect to the codeword length will follow.

\subsection{Digraphs for UDOOCs}

Let $\kb=k_1\ldots k_L$ be the chosen UW of length $L$. 
Denote by $G_\kb=(V,E_\kb)$ the digraph for the UDOOC with $\text{UW}=\kb$, 
where $V=\F^{L-1}$ 
is the set of all binary length-$(L-1)$ tuples,
and $E_\kb$ is the set of directional edges given by
\begin{equation}\label{digraph}
E_\kb:= \left\{ ({\boldsymbol i}, {\boldsymbol j}) \in V^2: 
i_2^{L-1}=j_1^{L-2}\text{ and } i_1 {\boldsymbol j} \neq \kb 
\right\}.
\end{equation}
Here, we use the conventional shorthand $i_s^t=i_s i_{s+1} \ldots i_{t}$ 
to denote a binary string from index $s$ to index $t$,
and the elements in $V$ are interchangeably 
denoted by either ${\boldsymbol i}=i_1 \ldots i_{L-1}$ or $i_1^{L-1}$, depending on 
whichever is more convenient.

Define the $2^{L-1}$-by-$2^{L-1}$ adjacency matrix $\Am_\kb$ for the digraph $G_\kb$ by
putting its $(i+1,j+1)$th entry as
\begin{equation}\label{adjMatrix}
 \left( \Am_\kb \right)_{i+1,j+1}= 
	 			\begin{cases}
	                       1, & \mbox{if } ({\boldsymbol i}, {\boldsymbol j}) \in E_\kb, \\
						   0, & \textnormal{otherwise,} \\
				\end{cases}
\end{equation}
where we abuse the notation by using $i$ (resp.~$j$) to be
the integer corresponding to binary representation of ${\boldsymbol i}=i_1 \ldots i_{L-1}$ (resp.~${\boldsymbol j}=j_1 \ldots j_{L-1}$) with the leftmost bit being the most significant bit.
As an example, for $\kb=010$, we have $V=\F^2=\{00,01,10,11\}$, 
\begin{multline*}
E_{010}=\{(00,00),(00,01),(01,11),\\
(10,00),(10,01),(11,10),(11,11)\},
\end{multline*}
 $G_{010}=(V,E_{010})$ in Fig.~\ref{010digraph}, and 
\[\Am_{010}=\begin{bmatrix} 
	 					1&1&0&0\\
	 					0&0&0&1\\
	 					1&1&0&0\\
	 					0&0&1&1\\
	                  \end{bmatrix} .
\]
We remark that the adjacency matrix $\Am_\kb$ will be used 
for enumerating the number of UDOOC codewords in next section.
                                    
\subsection{Code Trees for UDOOCs}

	\begin{figure}
		\begin{center}
		\includegraphics[width=0.8\columnwidth]{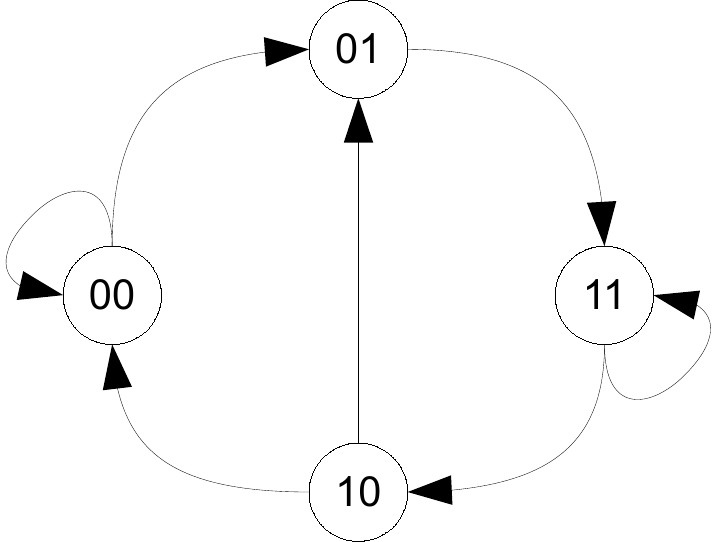}
		\caption{Digraph $G_{010}$ for UW $\kb=010$.}\label{010digraph}
		\end{center}
	\end{figure}
	
Equipped with digraph $G_\kb$,  
constructing the code tree for the UDOOC with $\text{UW}=\kb$
becomes straightforward. Recall that a UDOOC codeword of length $n$ is a binary $n$-tuple ${\boldsymbol b}=b_1 \ldots b_n$, satisfying that $\kb$ is not an internal subword of the concatenated bit-stream $\kb {\boldsymbol b} \kb$. As such, the traversal of the digraph for constructing a UDOOC code tree
should start from the vertex $k_2^{L}\in V$, which corresponds to the initial ``null"-node in the code tree. 
Next, a ``$0$"-node at level $1$ is generated if both $(k_2^{L},j_1^{L-1})\in E_\kb$ and $j_{L-1}=0$ are satisfied. 
By the same rule, the ``null"-node is followed by a ``$1$"-node at level $1$ if $(k_2^{L},j_1^{L-1})\in E_\kb$ and $j_{L-1}=1$.
We then move the current vertex to $j_1^{L-1}$ and draw a branch 
from ``$j_{L-1}$"-node at level $1$ 
to a followup
``$0$"-node (resp.~``$1$"-node) at level $2$ in the code tree if $(j_1^{L-1},\ell_1^{L-1})\in E_\kb$ and $\ell_{L-1}=0$ (resp.~$\ell_{L-1}=1$). We move the current vertex again to $\ell_1^{L-1}$ 
and re-do the above procedure to generate the nodes in the next level.
Repeating this process will 
complete the exploration of the nodes in the entire code tree.

Determination of the gray-shaded nodes that end a codeword 
can be done as follows. Since 
$\kb$ cannot be an internal subword of $\kb {\boldsymbol b} \kb$,
a node should be gray-shaded if 
it is immediately followed by a sequence of offspring nodes with 
their binary marks equal to $k_1\ldots k_{L-1}$.
The construction of the UDOOC code tree is accordingly finished.

As an example, we continue from the exemplified UW $\kb=010$ with digraph $G_\kb$ in Fig.~\ref{010digraph} and explore its respective UDOOC code tree in Fig.~\ref{010pctree}
by following the previously mentioned procedure.
By starting from the vertex $k_{2}^{3}=10$ that corresponds to the ``null"-node,
two succeeding nodes are generated since
both $(10,00)$ and $(10,01)$ are in $E_{010}$ (cf.~Fig.~\ref{010pctree}).
Now from vertex $00$ that corresponds to the ``$0$"-node at level $1$, we can reach either vertex $00$ or vertex $01$ in one transition; hence, both ``$0$"-node and ``$1$"-node are 
the succeeding nodes to the ``$0$"-node at level $1$.
However, since vertex $01$ can only walk to vertex $11$ in one transition,
the ``$1$"-node at level $1$ has only one succeeding node with mark ``$1$."
Continuing this process then exhausts all the nodes in the code tree in Fig.~\ref{010pctree}.
Next, all nodes that are followed by $k_1k_2=01$ in sequence in the code tree are gray-shaded. The construction of the code tree for the UDOOC with UW $\kb=010$ is then completed.

We end this section by giving the type-I and type-II constraints for the exemplified code tree as follows.
\begin{itemize}
\item Type-I constraints:
\begin{itemize}
\item[] (C1) ``0'' can be followed by either ``0'' or ``1''.
\item[] (C2)  ``1'' can be followed by ``0'' only when the node prior to this ``1''-node is not a 
``0''-node. 
\end{itemize}
\item  Type-II constraints:
\begin{itemize}
\item[] (C3) The first two bits of a UDOOC codeword cannot be ``10.''
\item[] (C4) The last two bits of a UDOOC codeword cannot be ``01.''
\end{itemize}
\end{itemize}
Note that with these constraints (in particular (C4)), one can also perform the node-shading step by first gray-shading all the nodes in the code tree, and then unshade those that end with ``01" 
(in addition to the ``1"-node at level 1 for this specific UW).
Nevertheless, it may be tedious to perform the node-unshading  
for a general UW. For example, 
when UW $\kb=01001$, all nodes that end a codeword $b_1^n$, satisfying either
$b_{n-3}b_{n-2}b_{n-1}b_nk_1=\kb$ or $b_{n-2}b_{n-1}b_nk_1k_2=\kb$, should be unshaded.
This confirms the superiority of constructing the UDOOC code tree in terms of the digraph
over analyzing the explicit listing of constraints from the adopted UW that are perhaps convenient only for some special UWs.

	\begin{figure}
		\begin{center}
		\includegraphics[width=0.8\columnwidth]{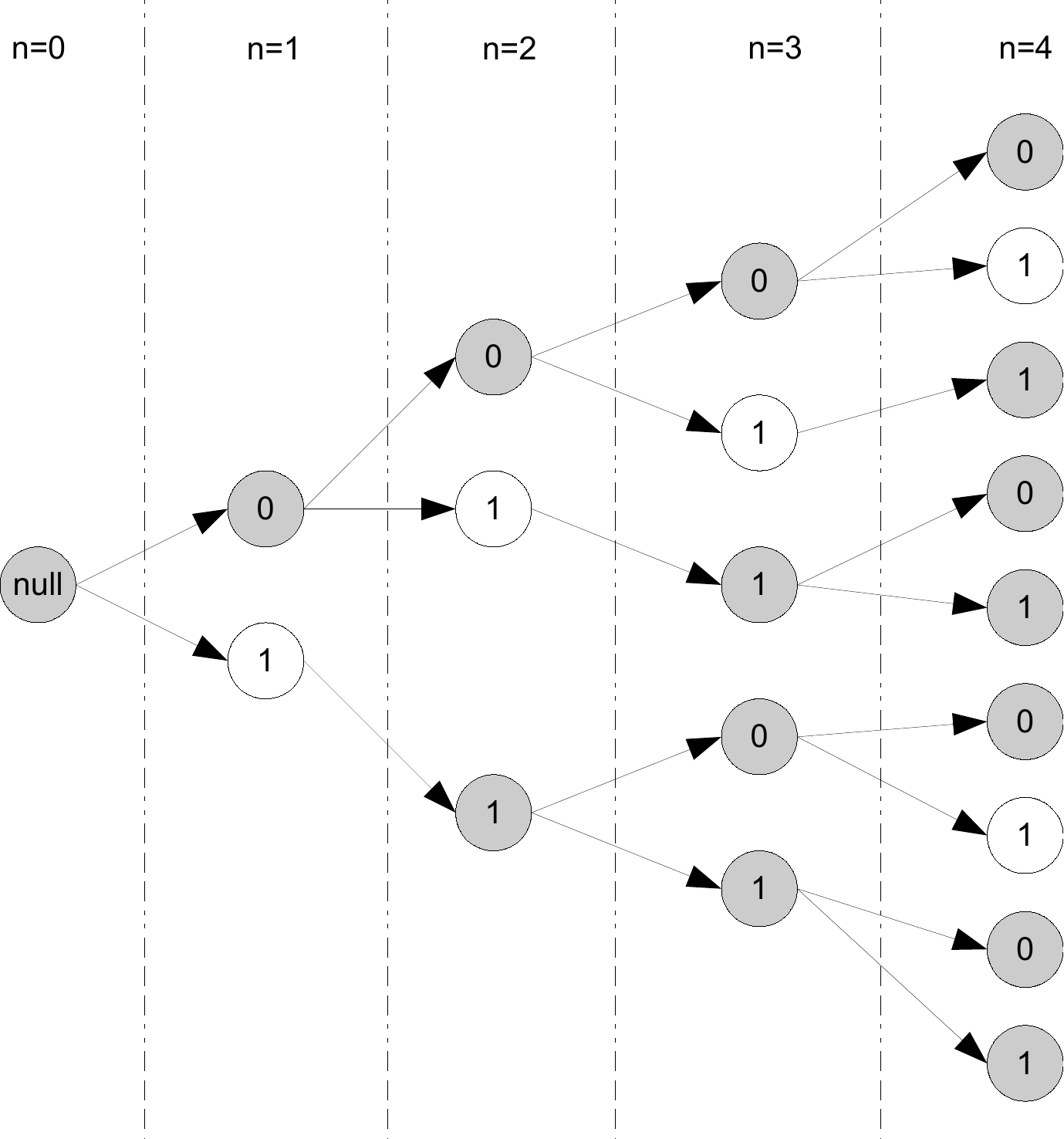}
		\caption{Code tree for the UDOOC with UW $\kb=010$.}\label{010pctree}
		\end{center}
	\end{figure}

\section{Combinatorial Properties of UDOOCs} \label{sec:3}

\subsection{The Determination of  $\abs{{\cal C}_\kb(n)}$} \label{sec:3a}

In this subsection, we will see that the conception of digraph $G_\kb$, in particular
its respective adjacency matrix $\Am_\kb$, can lead to a formula for the number of length-$n$ codewords, i.e., $c_{\kb,n}=|{\cal C}_\kb(n)|$.

In accordance with the fact that the traversal of the digraph for constructing a UDOOC code tree should start from vertex $k_2^{L}\in V$, we define a length-$2^{L-1}$ initial vector
as $(\xb_\kb)_{j+1}=1$ if integer $j$ has the binary representation $k_2^{L}$, and $(\xb_\kb)_{j+1}=0$, otherwise, for $0\leq j< 2^{L-1}$.
It then follows that the $(\ell+1)$th entry of row vector $\xb_\kb^\top \Am_\kb^n$ gives the number of length-$n$ walks that end at vertex ${\boldsymbol \ell}$ on digraph $G_\kb$, 
where ``$^\top$" denotes the vector/matrix transpose operation, and ${\boldsymbol \ell}=\ell_1\ldots\ell_{L-1}$ is the binary representation of integer index $\ell$.

However, not every length-$n$ walk produces a codeword. Notably, some nodes on the code tree will be gray-shaded and some will not.
Recall that $\kb$ cannot be an internal subword of $\kb{\boldsymbol b}\kb$ 
if ${\boldsymbol b}=b_1 b_2 \ldots b_n$ is a codeword.
This implies that ${\boldsymbol b}$ is a length-$n$ codeword if, and only if, the vertex sequence
$k_2^L$, $k_3^Lb_1$, $k_4^Lb_1^2$, $\ldots$, $b_nk_1^{L-2}$, $k_1^{L-1}$
is a valid walk of length $n+L-1$ on digraph $G_\kb$. 
As a result,  the number of length-$n$ codewords equals the number of length-$(n+L-1)$ walks  from vertex $k_2^{L}$ to vertex $k_1^{L-1}$  on digraph $G_\kb$. 
Following the above discussion, 
we define the length-$2^{L-1}$ \emph{ending vector} $\yb_\kb$
as $(\yb_\kb)_{j+1}=1$ if integer $j$ has the binary representation $k_1^{L-1}$, and $(\yb_\kb)_{j+1}=0$, otherwise, for $0\leq j<2^{L-1}$. Then, 
the number of length-$n$ codewords is given by
\begin{equation}
	c_{\kb,n} : = \abs{{\cal C}_\kb (n)} \ = \ \xb_\kb^\top\Am_\kb^{n+L-1}\yb_\kb. \label{eq:ckbn}
\end{equation} 

\subsection{Equivalence among UWs}

Two UWs that result in the same minimum average codeword length
for every source statistics
should be considered \emph{equivalent}. 
This leads to the following definition.

\medskip

\begin{defn} \label{defn:eqrln}
Two UWs $\kb$ and $\kb'$ are said to be \emph{equivalent}, denoted by $\kb \equiv \kb'$, if  the numbers of their length-$n$ codewords in the corresponding UDOOCs are the same for all $n$, i.e.,
\begin{equation}
c_{\kb,n} \ = \ c_{\kb',n}\ \textnormal{ for all $n \geq 0$}. 
\end{equation}
\end{defn}

\medskip

By this definition, UDOOCs associated with equivalent UWs have the same number of codewords in every code tree level; hence they achieve the same minimum average codeword length in the lossless compression of a sequence of source letters. This equivalence relation allows us to focus only on 
one UW in every equivalent class.
It is however hard to exhaust and identify all equivalent classes of UWs of arbitrary length.
Instead, we will introduce a less restrictive notion of \emph{asymptotic equivalence}
when the asymptotic compression rate of UDOOCs is concerned, and derive the number of all asymptotically equivalent classes of UWs in Section \ref{sec:asymp}.

Some properties about the (exact) equivalence of UWs are given below.

\medskip

\begin{prop}[Equivalence in order reversing] \label{prop:rev}
UW $\kb'=k_L \ldots k_1$ is equivalent to UW $\kb=k_1 k_2 \ldots k_L$.
\end{prop}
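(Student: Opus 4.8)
The plan is to exhibit an explicit length-preserving bijection between the codewords of $\CC_\kb$ and those of $\CC_{\kb'}$, namely string reversal $\bb = b_1 \ldots b_n \mapsto \bb^R := b_n \ldots b_1$. Since reversal is an involution on the set of length-$n$ binary tuples, it suffices to show that it carries codewords of $\kb$ onto codewords of $\kb'$; this immediately yields $c_{\kb,n} = c_{\kb',n}$ for every $n \geq 1$, and the case $n=0$ is trivial because $\CC_\kb(0)=\CC_{\kb'}(0)=\{\text{null}\}$.

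First I would translate the codeword condition into a statement about pattern occurrences. By Definition \ref{defn:udooc}, $\bb$ is a length-$n$ codeword of $\CC_\kb$ exactly when $\kb$ is not an internal subword of the length-$(2L+n)$ string $\kb\bb\kb$, i.e.\ when $\kb$ does not start at any position strictly between the first and the last, namely positions $2$ through $L+n$. The decisive observation is that reversing the whole concatenation gives $(\kb\bb\kb)^R = \kb'\bb'\kb'$ with $\bb'=\bb^R$ and $\kb'=\kb^R$, and that a length-$L$ window sits at position $i$ of $\kb\bb\kb$ and equals $\kb$ if and only if its mirror window sits at the reflected position $i'=L+n-i+2$ of $\kb'\bb'\kb'$ and equals $\kb'$.

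The key step is then to check that this reflection maps the set of \emph{internal} starting positions onto itself: it interchanges the endpoints $i=2$ and $i=L+n$ and sends every interior position to another interior position. Consequently $\kb$ occurs internally in $\kb\bb\kb$ if and only if $\kb'$ occurs internally in $\kb'\bb'\kb'$, so $\bb\in\CC_\kb(n)$ if and only if $\bb^R\in\CC_{\kb'}(n)$, and the bijection is complete.

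The main obstacle I anticipate is bookkeeping rather than conceptual: one must handle the boundary copies of the UW with care, verifying that the two sentinel occurrences at positions $1$ and $L+n+1$ (excluded from the ``internal'' test for both $\kb$ and $\kb'$) are precisely the pair the reflection swaps, so that the reversal neither creates nor destroys a genuine internal violation. As an alternative that leans on the machinery already in place, one could argue algebraically via formula \eqref{eq:ckbn}: the edge map $(\ib,\jb)\mapsto(\jb^R,\ib^R)$ is a bijection from $E_\kb$ to $E_{\kb'}$, giving the permutation similarity $\Am_{\kb'}=P\,\Am_\kb^\top P$ for the bit-reversal permutation matrix $P$; since $P\xb_{\kb'}=\yb_\kb$ and $P\yb_{\kb'}=\xb_\kb$, one obtains $c_{\kb',n}=\yb_\kb^\top (\Am_\kb^{n+L-1})^\top \xb_\kb=\xb_\kb^\top \Am_\kb^{n+L-1}\yb_\kb=c_{\kb,n}$, the last equality holding because the quantity is a scalar.
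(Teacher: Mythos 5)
Your proposal is correct and is essentially the paper's own argument: the paper's one-line proof invokes exactly the reversal bijection $\bb \mapsto b_n b_{n-1}\ldots b_1$ between ${\cal C}_\kb$ and ${\cal C}_{\kb'}$, which you simply spell out (and supplement with an optional adjacency-matrix variant). No gap; you have just made the bookkeeping explicit.
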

\begin{proof}
It follows simply from that $\bb=b_1b_2\ldots b_n \in {\cal C}_\kb$ if, and only if, 
$\bb'=b_nb_{n-1}\ldots b_1\in {\cal C}_{\kb'}$. 
\end{proof}

\medskip

\begin{prop}[Equivalence in binary complement] \label{prop:comp}
If $\bar\kb$ is the bit-wise binary complement of $\kb$,
then $\bar{\kb}$ and $\kb$ are equivalent. 
\end{prop}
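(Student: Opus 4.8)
The plan is to exhibit a length-preserving bijection between ${\cal C}_\kb$ and ${\cal C}_{\bar\kb}$, exactly paralleling the order-reversing argument of Proposition~\ref{prop:rev}, and then invoke Definition~\ref{defn:eqrln}. The natural candidate map is bitwise complementation $\bb = b_1\ldots b_n \mapsto \bar\bb = \bar b_1 \ldots \bar b_n$, which is an involution on $\F^n$ for each $n$ and hence automatically a bijection; it remains only to check that it carries ${\cal C}_\kb(n)$ onto ${\cal C}_{\bar\kb}(n)$.

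First I would record the two structural features of complementation that make the argument work. It is applied coordinatewise, so it commutes with concatenation, giving $\overline{\kb\bb\kb} = \bar\kb\,\bar\bb\,\bar\kb$; and for two binary strings $\ab,\cb$ of equal length one has $\ab = \cb$ if and only if $\bar\ab = \bar\cb$, since flipping every bit of both sides preserves equality. Both are immediate from the definition of the complement.

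Next I would translate the defining condition. By Definition~\ref{defn:udooc}, $\bb\in{\cal C}_\kb(n)$ means $\kb$ is not an internal subword of $\kb\bb\kb$; that is, for every admissible starting index $i$ the length-$L$ window $(\kb\bb\kb)_i^{i+L-1}$ differs from $\kb$. Applying the equality-preservation property window by window, such a window differs from $\kb$ precisely when its complement differs from $\bar\kb$. Since complementation commutes with concatenation, that complemented window is exactly the corresponding length-$L$ window of $\bar\kb\,\bar\bb\,\bar\kb=\overline{\kb\bb\kb}$, taken at the same index $i$. The set of admissible (internal) positions depends only on the lengths $L$ and $n$, which are unchanged, so the two forbidden-subword conditions match index for index. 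Hence $\kb$ is not an internal subword of $\kb\bb\kb$ if and only if $\bar\kb$ is not an internal subword of $\bar\kb\,\bar\bb\,\bar\kb$, i.e.\ $\bb\in{\cal C}_\kb(n)$ iff $\bar\bb\in{\cal C}_{\bar\kb}(n)$.

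This establishes that $\bb\mapsto\bar\bb$ restricts to a bijection ${\cal C}_\kb(n)\to{\cal C}_{\bar\kb}(n)$ for each $n$, whence $c_{\kb,n}=c_{\bar\kb,n}$ for all $n$ and $\kb\equiv\bar\kb$ by Definition~\ref{defn:eqrln}. I do not anticipate a genuine obstacle; the only point requiring care is bookkeeping the \emph{internal} qualifier, namely verifying that the range of starting indices $i$ singled out by the internal-subword definition is identical for $\kb\bb\kb$ and $\bar\kb\,\bar\bb\,\bar\kb$. This is immediate because that range is determined solely by the common lengths $L$ and $n$ and not by the bit values, so the two conditions are forced to agree position by position.
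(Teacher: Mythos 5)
Your proof is correct and follows essentially the same route as the paper: the paper's one-line argument is precisely that $\kb\bb\kb$ contains $\kb$ as an internal subword if and only if $\overline{\kb\bb\kb}$ contains $\bar\kb$ as an internal subword, which is the bijection-by-complementation you develop in detail. Your additional bookkeeping of the window indices merely makes that same observation explicit.
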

\begin{proof}
It is a consequence of the fact
that the concatenated bit-stream $\kb \bb \kb$ contains $\kb$ as an internal subword if, and only if, the binary complement $\overline{\kb \bb \kb}$ of $\kb \bb \kb$ contains $\bar{\kb}$ as an internal subword. 
\end{proof}

\medskip

From Propositions \ref{prop:rev} and \ref{prop:comp}, it can be verified that 
there are at most four equivalent classes for UWs of length $L=4$.
Representative UWs for these four equivalent classes
are $0000$, $0001$, $0100$ and $0101$, respectively.

\subsection{Growth Rates of UDOOCs}

In this subsection, we investigate the asymptotic growth rate of UDOOCs, of which the definition is given below.

\medskip

\begin{defn}
Given UW $\kb$, the asymptotic growth rate of the resulting UDOOC is defined as 
\begin{equation}
g_\kb \ := \ \lim_{n\to \infty} \frac{c_{\kb,n+1}}{c_{\kb,n}}. \label{eq:gkb}
\end{equation}
\end{defn}

\medskip

By its definition, the asymptotic growth rate of a UDOOC indicates 
how fast the number of codewords grows as $n$ increases. 

It is obvious that $g_\kb \leq 2$ for all UWs because the upper bound of $2$ is the growth rate for unconstrained binary sequences of length $n$. In addition, the limit in \eqref{eq:gkb} must exist since it can be inferred from enumerative combinatorics \cite{EC}, and also from algebraic graph theory \cite{GRAPH}, that $g_\kb$ is the largest eigenvalue of adjacency matrix $\Am_\kb$. In the next proposition, we show that the largest eigenvalue of adjacency matrix $\Am_\kb$ is unique for all UWs but $\kb=01$.

\medskip

\begin{prop}[Uniqueness of the largest eigenvalue of $\Am_\kb$] \label{prop:kUnique}
	For any UW $\kb$ of length $L\geq 2$ except $\kb=01$, the largest eigenvalue of adjacency matrix $\Am_\kb$ is unique and is real. 
\end{prop}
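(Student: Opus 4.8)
The plan is to apply the Perron--Frobenius theory of nonnegative matrices to $\Am_\kb$. Since every entry of $\Am_\kb$ is either $0$ or $1$, the matrix is nonnegative, and the largest eigenvalue (which equals $g_\kb$ by the remark preceding the proposition) is a real nonnegative Perron root. The Perron--Frobenius theorem guarantees that this root is \emph{simple}---and hence the unique eigenvalue of largest modulus---provided the matrix is \emph{irreducible} and \emph{primitive} (aperiodic). So the entire argument reduces to establishing these two graph-theoretic properties of the digraph $G_\kb$ for every UW of length $L \geq 2$ other than $\kb = 01$.

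First I would prove irreducibility, i.e.\ that $G_\kb$ is strongly connected: for any two vertices $\ib, \jb \in \F^{L-1}$ there is a directed walk from $\ib$ to $\jb$. The natural approach is to observe that an edge $(\ib,\jb)$ in $E_\kb$ corresponds to a \emph{shift}---we drop the first bit of $\ib$, append a new bit to form $\jb$, and the only edges forbidden are those whose emitted bit would complete a copy of $\kb$ straddling the boundary (the condition $i_1 \jb \neq \kb$ in \eqref{digraph}). Since there is exactly one forbidden out-edge per vertex at most, each vertex retains an out-degree of at least one, and the de Bruijn-like shift structure lets one steer from any state to any target state by feeding in the appropriate bits one at a time, \emph{re-routing} around the single forbidden transition whenever it is encountered. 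I would make this precise by showing that any desired length-$(L-1)$ target string $\jb$ can be ``loaded'' bit by bit, detouring through an alternate bit when a forbidden edge blocks the direct path; the key observation is that forbidding a single pattern cannot disconnect the shift graph as long as $L \geq 2$.

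Next I would prove primitivity (aperiodicity). For an irreducible nonnegative matrix, a sufficient condition is the existence of a self-loop, since a loop at a vertex gives a cycle of length $1$ and thus forces the period to be $1$. A vertex $\ib$ has a self-loop precisely when $(\ib,\ib) \in E_\kb$, which by \eqref{digraph} requires $i_2^{L-1} = i_1^{L-2}$ (forcing $\ib$ to be constant, i.e.\ $\ib = 00\cdots 0$ or $\ib = 11\cdots 1$) together with $i_1 \ib \neq \kb$. The all-zero loop $(00\cdots0, 00\cdots0)$ is forbidden exactly when $\kb = 00\cdots 0$, and the all-one loop exactly when $\kb = 11\cdots 1$; but whenever one of these two self-loops is killed, the other survives, since $\kb$ cannot simultaneously be all-zero and all-one. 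Hence for every UW there is at least one constant self-loop, giving a cycle of length $1$ and therefore period $1$.

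The main obstacle---and the reason the exceptional case $\kb = 01$ must be excluded---is that the above self-loop argument guarantees primitivity \emph{on the strongly connected component containing the constant vertices}, but one must confirm that this component is all of $G_\kb$, i.e.\ that irreducibility genuinely holds. For $L = 2$ the vertex set is just $\F = \{0,1\}$ and the digraph degenerates: the case-by-case check (matching the Example~\ref{ex1} computation for $\kb=00$ and the subsequent $\kb=01$ analysis) shows that $\kb = 01$ produces a \emph{reducible} adjacency matrix whose spectrum is a repeated eigenvalue, which is exactly why its growth rate was found to be linear rather than exponential. I would therefore handle $L = 2$ separately by direct inspection, verifying strong connectivity for $\kb = 00, 10, 11$ and exhibiting the failure for $\kb = 01$, and then argue that for $L \geq 3$ the richer shift structure always yields a strongly connected graph. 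Once irreducibility and primitivity are in hand, Perron--Frobenius delivers a simple, real, strictly-dominant eigenvalue, completing the proof.
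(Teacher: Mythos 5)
Your overall strategy is the same as the paper's: invoke Perron--Frobenius and reduce the claim to a structural property of the digraph $G_\kb$, checking $L=2$ by hand. Your treatment is in one respect \emph{more} careful than the paper's: the paper only argues strong connectivity (irreducibility), which by itself gives a simple real Perron root but does not by itself exclude other eigenvalues of the same modulus; your additional primitivity step via the surviving constant self-loop (at least one of $(0\cdots0,0\cdots0)$ and $(1\cdots1,1\cdots1)$ is never the forbidden edge) is correct and closes that loophole.

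However, the step you lean on hardest --- ``forbidding a single pattern cannot disconnect the shift graph as long as $L\geq 2$,'' so that for $L\geq 3$ one can always re-route --- is false, and this is a genuine gap. Take $\kb=001$ (or generally $\kb=0^{L-1}1$ and its equivalents). The unique forbidden edge is $(k_1^{L-1},k_2^L)=(0^{L-1},0^{L-2}1)$, which is precisely the \emph{only} non-self-loop out-edge of the vertex $0^{L-1}$ in the de Bruijn graph; after its removal the vertex $0^{L-1}$ retains only its self-loop and can never reach any other vertex, so $G_{001}$ is not strongly connected. (The binary de Bruijn shift graph has arc-connectivity $1$ exactly at the constant vertices, so a single deleted arc \emph{can} disconnect it; $\kb=01$ is not the only culprit.) Your ``each vertex retains out-degree at least one'' observation is true but does not imply strong connectivity, since the surviving out-edge may be the self-loop. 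The proposition's conclusion still holds for these UWs --- the isolated constant vertex contributes only the eigenvalue $1$, strictly below the Perron root of the remaining (strongly connected, primitive) component, which exceeds $1$ for $L\geq 3$ --- but that requires a separate argument restricted to the dominant communicating class, or a direct analysis of the zeros of $h_\kb(z)$. You should be aware that the paper's own proof asserts strong connectivity for all $L\geq3$ with essentially the same unjustified re-routing claim and therefore shares this defect; so while your proposal reproduces the published argument's structure, neither version is complete as written.
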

\begin{proof}  By Perron-Frobenius theorem \cite{GR}\cite{HJ}, the largest eigenvalue of adjacency matrix $\Am_\kb$ is unique and real with algebraic multiplicity equal to $1$ if $G_\kb$ is a strongly connected diagrph. Thus, we only need to show that $G_\kb$ is a strongly connected digraph except for $\kb=01$.
    
   We then argue that $G_\kb$ is a strongly connected diagrph when $L\geq 3$ as follows.
   According to the definition of $E_\kb$ in \eqref{digraph},    
   the only situation that a vertex may not be strongly connected to other vertex is when
   ${\boldsymbol j}=k_2k_3 \cdots k_L$. 
   This however cannot happen when $L\geq 3$ because vertex $\overline{k}_1k_2\cdots k_{L-1}$ will connect strongly to $k_2k_3 \cdots k_L$. 
The proof is completed after verifying the two cases for $L=2$, i.e., $G_{00}$ is strongly connected but $G_{01}$ is not.
\end{proof}

\medskip

The digraph for $\kb=01$ is plotted in Fig.~\ref{01digraph}. 
It clearly indicates that there is no directed path from vertex 0 to vertex 1.
In fact, the algebraic multiplicity of the largest eigenvalue $1$ of $\Am_{01}$ is two.

	\begin{figure}
		\begin{center}
		\includegraphics[width=0.8\columnwidth]{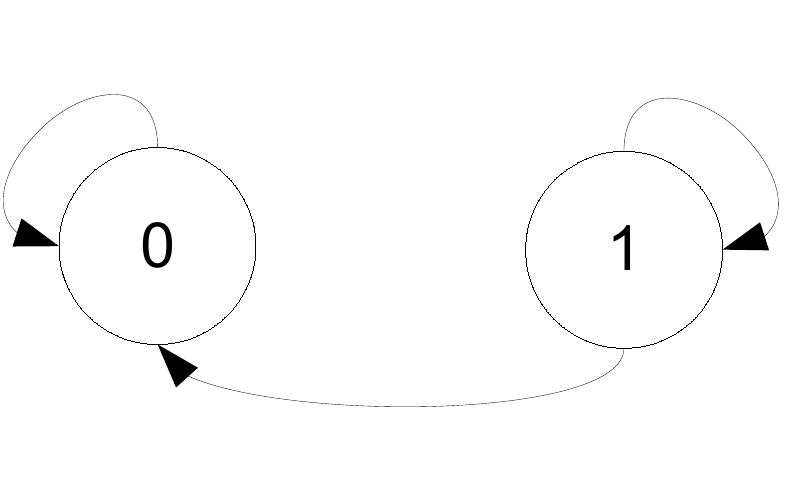}
		\caption{Digraph $G_\kb$ for UW $\kb=01$.}\label{01digraph}
		\end{center}
	\end{figure}

By the standard technique of using an indeterminate $z$ in enumerative combinatorics, we can enumerate the numbers $c_{\kb,n}$ as 
\begin{eqnarray}
\sum_{n=0}^\infty c_{\kb,n} z^n &=& \sum_{n=0}^\infty  \underline{x}_\kb^\top \Am_\kb^{n+L-1} \yb_\kb z^n \nonumber\\
&=& \underline{x}_\kb^\top  \left( \sum_{n=0}^\infty \Am_\kb^n z^n \right) \Am_\kb^{L-1} \yb_\kb \nonumber\\
&=&  \underline{x}_\kb^\top  \left( \mI - \Am_\kb z \right)^{-1} \Am_\kb^{L-1} \yb_\kb \nonumber\\
&=&  \frac{\underline{x}_\kb^\top \adj\left( \mI - \Am_\kb z \right) \Am_\kb^{L-1} \yb_\kb}{\det\left( \mI - \Am_\kb z \right) }, \label{eq:enumerate}
\end{eqnarray}
where the first equality follows from \eqref{eq:ckbn} and $\mI$ denotes the identity matrix of proper size. Equation~\eqref{eq:enumerate} then implies that
 $\det\left( \mI - \Am_\kb z \right)$ can give a linear recursion of $c_{\kb,n}$ in the form of a linear constant coefficient difference equation (LCCDE).

Now let $\lambda_1, \ldots, \lambda_m$ be distinct nonzero eigenvalues of adjacency matrix $\Am_\kb$ with algebraic multiplicities $e_1, \ldots, e_m$, respectively,
where we assume  with no loss of generality that $\abs{\lambda_1} \geq \cdots \geq \abs{\lambda_m}$.
In terms of the standard technique of partial fraction for rational functions, we can rewrite \eqref{eq:enumerate} as
\begin{equation}
\frac{\xb_\kb^\top  \adj\left( \mI - \Am_\kb z \right) \Am_\kb^{L-1} \underline{y}_\kb}{\det\left( \mI - \Am_\kb z \right) } \ = \ \sum_{i=1}^m \frac{p_i(z)}{(1-\lambda_i z)^{e_i}} \label{eq:z}
\end{equation}
for some polynomials $p_i(z)$. 
The next step is expectantly to rewrite the righ-hand-side (RHS) of \eqref{eq:z} as a power series of indeterminate $z$
in order to recover the actual values of $c_{\kb,n}$ for all $n$.
As an example, this can be done by
\[
\frac{1}{\left( 1 - \lambda_i z \right)^{e_i}} \ = \ \sum_{n=0}^\infty \binom{n+e_i-1}{n} \lambda_i^n z^n,
\]
which holds for all $\abs{z} < \min_{1\leq i\leq m}\frac{1}{\abs{\lambda_i}}$.

Although the asymptotic growth rate $g_\kb$ equals exactly the largest eigenvalue of 
adjacency matrix $\Am_\kb$, it is in general difficult to find a closed-form expression for this value without a proper reshaping of adjacency matrix $\Am_\kb$. 
Another approach is to consider the following set for $n\geq L$,
\begin{equation}\label{eq:skb}
{\cal S}_\kb(n)  :=  \left\{ \bb \in \F^n  : \textnormal{ $\kb$ is not a subword of $\bb$} \right\}, 
\end{equation} 
which, in a way, defines the set of distinct length-$n$ walks on digraph $G_\kb$. 
Denoting $s_{\kb,n} := \abs{{\cal S}_\kb(n)}$ and by an argument similar to \eqref{eq:enumerate}, one can easily show that
\beq
\sum_{n =0}^\infty s_{\kb,n} z^n \ = \ \sum_{n=0}^{L-1} 2^n z^n + z^L \frac{\underline{{\bf 1}}^\top \adj\left( \mI - \Am_\kb z \right) \underline{{\bf 1}} }{\det\left( \mI - \Am_\kb z \right) }, \label{eq:skbnz1}
\eeq
where $\underline{{\bf 1}}$ is the all-one column vector of appropriate length. 
Equation \eqref{eq:skbnz1} then implies that the enumeration of $s_{\kb,n}$ also depends upon the polynomial $\det\left( \mI - \Am_\kb z \right)$ as $c_{\kb,n}$ does. Based on this observation,
we can infer and prove that $s_{\kb,n}$ has the same asymptotic growth rate as $c_{\kb,n}$. We summarize this important inference in the proposition below,
while the proof will be relegated to the next subsection.

\medskip

\begin{prop} \label{prop:connect}
	For any UW $\kb$, sequences $\{c_{\kb,n}\}_{n=0}^\infty$ and $\{ s_{\kb,n} \}_{n=0}^\infty$ have the same asymptotic growth rate, i.e.,  $$g_\kb=\mathfrak{g}_\kb,$$
	where
	$$\mathfrak{g}_\kb\ :=\ \lim_{n\to \infty} \frac{s_{\kb,n+1}}{s_{\kb,n}}.$$

Notably, in order to distinguish the asymptotic growth rate of $s_{\kb,n}$ from that of $c_{\kb,n}$, a different font $\mathfrak{g}_\kb$ is used to denote the asymptotic growth rate of $s_{\kb,n}$.
\end{prop}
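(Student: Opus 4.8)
The plan is to route everything through the generating functions and to read off both growth rates from the \emph{same} dominant pole. The decisive structural fact, already exhibited in \eqref{eq:enumerate} and \eqref{eq:skbnz1}, is that $\sum_n c_{\kb,n}z^n$ and $\sum_n s_{\kb,n}z^n$ are rational functions with a common denominator $\det(\mI-\Am_\kb z)=\prod_i(1-\lambda_i z)^{e_i}$, the product running over the nonzero eigenvalues $\lambda_1,\dots,\lambda_m$ of $\Am_\kb$. Hence both sequences satisfy the same LCCDE, and the asymptotic growth rate of either is the modulus of the pole of smallest modulus that is \emph{not} cancelled by its own numerator. Since the excerpt has already identified $g_\kb$ with the largest eigenvalue $\lambda_1$ of $\Am_\kb$, the task reduces to showing that $z=1/\lambda_1$ is a genuine (uncancelled) pole of \emph{both} generating functions, so that $\mathfrak{g}_\kb=\lambda_1=g_\kb$.

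First I would upgrade Proposition~\ref{prop:kUnique} from strong connectivity to \emph{primitivity} of $G_\kb$ (for $\kb\neq 01$). By \eqref{digraph} the constant vertex $0^{L-1}$ carries a self-loop unless $\kb=0^{L}$, and $1^{L-1}$ carries one unless $\kb=1^{L}$; as $\kb$ cannot equal both, $G_\kb$ always has a self-loop and is therefore aperiodic. Combined with strong connectivity, this makes $\Am_\kb$ primitive, so $\lambda_1$ is real, simple, and strictly larger in modulus than every other eigenvalue. Consequently $z=1/\lambda_1$ is the unique simple pole of smallest modulus of $(\mI-\Am_\kb z)^{-1}$, with Laurent behaviour $(\mI-\Am_\kb z)^{-1}\sim\frac{1}{1-\lambda_1 z}\,\frac{\vb\,\wb^{\top}}{\wb^{\top}\vb}$ near that pole, where $\vb>0$ and $\wb>0$ are the right and left Perron eigenvectors.

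The heart of the argument is then a residue computation showing no cancellation. Using \eqref{eq:ckbn} in the form $\sum_n c_{\kb,n}z^n=\xb_\kb^{\top}(\mI-\Am_\kb z)^{-1}\Am_\kb^{L-1}\yb_\kb$ together with $\wb^{\top}\Am_\kb^{L-1}=\lambda_1^{L-1}\wb^{\top}$, the leading singular part at $z=1/\lambda_1$ is $\frac{1}{1-\lambda_1 z}\cdot\frac{\lambda_1^{L-1}(\xb_\kb^{\top}\vb)(\wb^{\top}\yb_\kb)}{\wb^{\top}\vb}$; the analogous computation on the tail of \eqref{eq:skbnz1} gives a singular part proportional to $\frac{1}{1-\lambda_1 z}\cdot\frac{(\underline{{\bf 1}}^{\top}\vb)(\wb^{\top}\underline{{\bf 1}})}{\wb^{\top}\vb}$. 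Because $\vb$ and $\wb$ are \emph{strictly} positive while the contracting vectors $\xb_\kb$, $\yb_\kb$ and $\underline{{\bf 1}}$ are nonnegative and nonzero, all four inner products are strictly positive. Hence neither numerator vanishes at $z=1/\lambda_1$, both coefficient sequences obey $c_{\kb,n}\sim C_1\lambda_1^{n}$ and $s_{\kb,n}\sim C_2\lambda_1^{n}$ with $C_1,C_2>0$, and therefore $g_\kb=\mathfrak{g}_\kb=\lambda_1$.

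I expect the genuine obstacle to be precisely this non-cancellation at the dominant pole: a priori the numerators could vanish at $z=1/\lambda_1$ and pull the growth rate down to a smaller eigenvalue, and it is the strict positivity of the Perron eigenvectors (hence the primitivity upgrade) that rules this out for both sequences simultaneously. Two loose ends remain routine. The excluded case $\kb=01$ is settled by direct count: the $\text{UW}=01$ example already gives $c_{01,n}=n+1$, and the strings avoiding $01$ as a subword are exactly $1^{a}0^{\,n-a}$, so $s_{01,n}=n+1$ as well, whence $g_{01}=\mathfrak{g}_{01}=1$. Finally, the finitely many boundary terms $\sum_{n=0}^{L-1}2^n z^n$ in \eqref{eq:skbnz1} are polynomial in $z$ and contribute nothing to the pole at $1/\lambda_1$, so they do not affect $\mathfrak{g}_\kb$.
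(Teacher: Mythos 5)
Your proof is correct, and for the nontrivial direction it takes a genuinely different route from the paper's. The paper argues in two halves: $g_\kb \leq \mathfrak{g}_\kb$ comes, much as in your opening, from the common denominator $\det(\mI-\Am_\kb z)$ together with the irreducibility of the rational function in \eqref{eq:skbz}; but the reverse inequality $g_\kb \geq \mathfrak{g}_\kb$ is obtained combinatorially, by proving $c_{\kb,n+2}\geq s_{\kb,n}$ for $n\geq L$ via a parity/contradiction argument showing that every $\bb\in{\cal S}_\kb(n)$ admits a prefix bit $p$ and a suffix bit $q$ with $p\bb q\in{\cal C}_\kb(n+2)$. You instead settle both directions at once by a Perron--Frobenius residue computation: you upgrade Proposition~\ref{prop:kUnique} to primitivity via the self-loop at $0^{L-1}$ or $1^{L-1}$ (a correct observation from \eqref{digraph}, since $\kb$ cannot be all-zero and all-one simultaneously), and then use strict positivity of the Perron eigenvectors against the nonnegative, nonzero vectors $\xb_\kb$, $\yb_\kb$, $\underline{\bf 1}$ to rule out cancellation of the dominant pole at $z=1/\lambda_1$ in both generating functions. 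Your approach buys more than the statement asks for: it supplies a proof of the paper's unproved assertion that $g_\kb$ equals the largest eigenvalue of $\Am_\kb$, it yields the sharper asymptotics $c_{\kb,n}\sim C_1\lambda_1^n$ and $s_{\kb,n}\sim C_2\lambda_1^n$ with $C_1,C_2>0$, and it establishes existence of both limits as a byproduct; your separate treatment of $\kb=01$ and of the polynomial boundary terms in \eqref{eq:skbnz1} is also handled correctly. What it does not produce is the explicit finite-$n$ inequality $c_{\kb,n+2}\geq s_{\kb,n}$, which the paper reuses verbatim in the proof of Proposition~\ref{prop:upper3}; so if one adopted your proof, that later argument would need the combinatorial lemma restored independently.
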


\subsection{Enumeration of $s_{\kb,n}$}

Enumerating $s_{\kb,n}$ turns out to be easier than enumerating $c_{\kb,n}$ due to that there is lesser number of constraints on the sequences in ${\cal S}_{\kb}(n)$.
It can be done by an approach similar to the {\em Goulden-Jackson clustering method} \cite{DIFF99}. 
Before delivering the main theorems, we define the {\em overlap function} and {\em overlap vector} of a binary stream $\kb$ as follows.

\medskip

\begin{defn}\label{definition-ao}
For a given $\kb$ of length $L$, its overlap function is defined as 
\begin{equation}
r_\kb(i) := 	   \begin{cases} 
	                       1 & \mbox{if } k_{i+1}^L = k_1^{L-i} \text{ and } 0 \leq i \leq L-1\\
						   0 & \textnormal{otherwise. } \\
				\end{cases} \label{eq:rkbi}
\end{equation}
Furthermore, we define its length-$L$ overlap vector as
$(\underline r_\kb)_j=r_\kb(j-1)$ for $j=1\ldots L$.
\end{defn}

\medskip

\begin{thm} \label{thm:1}
For a length-$L$ UW $\kb$ with overlap function $r_\kb(i)$, 
\beq
\sum_{n \geq 0} s_{\kb,n} z^n \ = \ \dfrac{1+\sum_{i=1}^{L-1} r_\kb(i)z^i}{(1-2z)\left(1+\sum_{i=1}^{L-1} r_\kb(i)z^i\right)+z^L}. \label{eq:skbz}
\eeq
Moreover, let $h_{\kb}(z)$ denote the denominator of \eqref{eq:skbz}, i.e.,
\beq
h_\kb(z) \ = \ (1-2z)\left(1+\sum_{i=1}^{L-1} r_\kb(i)z^i\right)+z^L.  \label{eq:hkbz}
\eeq
Then 
\begin{equation}\label{firstclaim}
h_\kb(z)
=\det(\mI-\Am_\kb z),
\end{equation}
where $\Am_\kb$ is the adjacency matrix associated with digraph $G_\kb$. 
\end{thm}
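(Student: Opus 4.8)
The statement bundles two claims: the closed rational form \eqref{eq:skbz} for the subword-avoidance counts $s_{\kb,n}$, and the identity \eqref{firstclaim} equating its denominator $h_\kb(z)$ with the reversed characteristic polynomial $\det(\mI-\Am_\kb z)$. The plan is to prove them by two independent routes. For \eqref{eq:skbz} the forbidden set is the single word $\kb$, so the Goulden--Jackson cluster method applies almost verbatim. For \eqref{firstclaim} I would exploit the fact that $\Am_\kb$ is the binary de Bruijn adjacency matrix on $\F^{L-1}$ with exactly one edge removed, which makes $\det(\mI-\Am_\kb z)$ collapse under the matrix determinant lemma. I stress that merely equating the two generating-function expressions \eqref{eq:skbz} and \eqref{eq:skbnz1} would establish equality of two rational functions, not of the denominators themselves (these could a priori differ by a cancelling factor); pinning $\det(\mI-\Am_\kb z)$ down exactly is precisely what the determinant-lemma computation achieves. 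Throughout, $z$ is a formal indeterminate and the end results are polynomial identities.

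For \eqref{eq:skbz} I would mark the occurrences of $\kb$ inside a binary word and group overlapping marked copies into clusters. Two copies of $\kb$ overlap with relative shift $i$, $1\le i\le L-1$, exactly when the length-$(L-i)$ prefix and suffix of $\kb$ coincide, i.e. exactly when $r_\kb(i)=1$; this is where the overlap function of Definition \ref{definition-ao} enters. Viewing a cluster as an initial copy carrying weight $-z^{L}$ followed by a chain of legal shifts, each contributing a factor $-z^{i}$, the cluster generating function $Q(z)$ is geometric and sums to
\[
Q(z)\;=\;\frac{-z^{L}}{\,1+\sum_{i=1}^{L-1} r_\kb(i)z^{i}\,}.
\]
Substituting $Q(z)$ into the Goulden--Jackson master formula $\sum_{n\ge0}s_{\kb,n}z^{n}=\bigl(1-2z-Q(z)\bigr)^{-1}$, where $2z$ reflects the binary alphabet, and clearing the inner fraction produces \eqref{eq:skbz} directly.

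For \eqref{firstclaim} I would read off from \eqref{digraph} that the condition $i_1\jb\ne\kb$ deletes precisely one edge of the full de Bruijn digraph, the edge from vertex $k_1^{L-1}$ to vertex $k_2^{L}$; hence $\Am_\kb=\Dm-\be_u\be_v^\top$, where $\Dm$ is the de Bruijn adjacency matrix and $u,v$ index $k_1^{L-1},k_2^{L}$. The de Bruijn graph has spectrum $\{2,0,\dots,0\}$, so $\det(\mI-\Dm z)=1-2z$, and the matrix determinant lemma gives
\[
\det(\mI-\Am_\kb z)=(1-2z)\Bigl(1+z\,\be_v^\top(\mI-\Dm z)^{-1}\be_u\Bigr).
\]
The resolvent entry equals $\sum_{n\ge0}(\Dm^{n})_{v,u}z^{n}$, the generating function of length-$n$ de Bruijn walks from $k_2^{L}$ to $k_1^{L-1}$. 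Analyzing how the two length-$(L-1)$ endpoint windows sit inside such a walk shows this count is $r_\kb(n+1)$ for $0\le n\le L-2$ and $2^{\,n-L+1}$ for $n\ge L-1$, so the resolvent equals $\tfrac1z\sum_{i=1}^{L-1}r_\kb(i)z^{i}+\tfrac{z^{L-1}}{1-2z}$. Then $1+z\cdot(\text{resolvent})=1+\sum_{i=1}^{L-1}r_\kb(i)z^{i}+\tfrac{z^{L}}{1-2z}$, and multiplying by the prefactor $1-2z$ yields $(1-2z)\bigl(1+\sum_{i=1}^{L-1}r_\kb(i)z^{i}\bigr)+z^{L}=h_\kb(z)$.

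The crux is the resolvent computation: one must carefully count de Bruijn walks of length $n$ between the two designated windows $k_2^{L}$ and $k_1^{L-1}$. For $n\le L-2$ the windows overlap, a walk exists and is unique, and its existence is equivalent to consistency of the overlap, i.e. to $r_\kb(n+1)=1$; for $n\ge L-1$ the windows are disjoint and the $n-L+1$ interior bits are free, giving $2^{\,n-L+1}$ walks. Getting this case split and the index shift $i=n+1$ aligned with Definition \ref{definition-ao} is the only real difficulty; the determinant lemma, the geometric summation, and the recombination of the $1-2z$ factor are mechanical. On the cluster side, the one point needing justification is that every cluster arises by iterating single legal shifts, so that $Q(z)$ is genuinely geometric -- routine for a one-element forbidden set. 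I would corroborate with the boundary checks $s_{\kb,n}=2^{n}$ for $n<L$ and $s_{\kb,L}=2^{L}-1$ (the latter matching the $2^{L}-1$ ones in $\Am_\kb$), and the degenerate evaluation $\det(\mI-\Dm z)=1-2z$.
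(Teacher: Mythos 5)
Your proposal is correct, and the two halves deserve separate verdicts. For \eqref{eq:skbz} you follow essentially the same route as the paper's Appendix~\ref{app:a}: mark occurrences, decompose marked words into a free part and clusters, derive the cluster generating function $T(z)=-z^L/\bigl(1+\sum_{i=1}^{L-1}r_\kb(i)z^i\bigr)$ from the recursion over the last legal shift, and substitute into $f(z)=(1-2z-T(z))^{-1}$; your sketch matches the paper step for step. Where you genuinely diverge is the identity $h_\kb(z)=\det(\mI-\Am_\kb z)$. The paper proves it indirectly: it equates the two rational expressions \eqref{eq:skbz} and \eqref{eq:skbnz1} for $\sum_n s_{\kb,n}z^n$, notes that the left-hand side is in lowest terms (any common factor of numerator and $h_\kb(z)$ would divide $z^L$, yet the numerator has constant term $1$), and then pins the degree of $\det(\mI-\Am_\kb z)$ down to exactly $L$ via the rank bound $\rank(\Am_\kb^{L-1})\leq L$ of Proposition~\ref{prop:deg} --- a bound that itself already uses the rank-one decomposition $\Am_\kb=\Hm-\eb_{\kb_1}\eb_{\kb_2}^\top$ with $\Hm$ the de Bruijn matrix. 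You take the same decomposition but push it further: the matrix determinant lemma together with the explicit resolvent entry $\be_v^\top(\mI-\Dm z)^{-1}\be_u=z^{-1}\sum_{i=1}^{L-1}r_\kb(i)z^i+z^{L-1}/(1-2z)$ computes $\det(\mI-\Am_\kb z)$ outright; your walk count (a unique length-$n$ walk for $0\leq n\leq L-2$ existing precisely when $r_\kb(n+1)=1$, and $2^{n-L+1}$ walks for $n\geq L-1$) is correct, and recombining with the prefactor $1-2z$ gives $h_\kb(z)$ exactly. Your route buys a determinant identity that is logically independent of the cluster computation and of \eqref{eq:skbnz1}, plus a second combinatorial reading of the overlap polynomial as the generating function of short de Bruijn walks between the two windows of $\kb$; the cost is the index bookkeeping in the walk count, which the paper's divisibility-plus-degree argument sidesteps. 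You are also right that equality of the two rational functions alone does not force equality of the denominators --- that is exactly the gap the paper closes with irreducibility and the degree bound, and which you close by direct computation.
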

\begin{proof} The result \eqref{eq:skbz} follows from the Goulden-Jackson clustering method \cite{DIFF99}. For completeness, a simplified proof to this claim is provided in 
Appendix \ref{app:a}. To establish the second claim, i.e., \eqref{firstclaim}, we combine \eqref{eq:skbnz1} and \eqref{eq:skbz} to give
\[
\dfrac{1+\sum_{i=1}^{L-1} r_\kb(i)z^i}{(1-2z)(1+\sum_{i=1}^{L-1} r_\kb(i)z^i)+z^L}\  \ = \frac{f(z)}{\det(\mI-\Am_\kb z)}
\]
for some polynomial $f(z)$. Notice that the left-hand-side (LHS) is an irreducible rational function in $z$. Furthermore, Proposition \ref{prop:deg} in Appendix \ref{app:b} shows $\deg\det(\mI-\Am_\kb z)=L$. These  then imply that
\[
\det(\mI-\Am_\kb z) \ = \ h_\kb(z)
\]
and
\[f(z) \ = \ 1+\sum_{i=1}^{L-1} r_\kb(i)z^i.
\]
 \eqref{firstclaim} is thus established. 
\end{proof}

\medskip

The next example illustrates the usage of the above theorem to the target result
of Proposition \ref{prop:connect}.

\medskip

\begin{ex}
Consider the case of $\kb=000$. Then from \eqref{eq:rkbi}, the corresponding overlap function is 
\[
r_{000} (i) \ =\ 
\begin{cases}
1, &  i=0,1,2,\\
0, & \textnormal{otherwise.} \\
\end{cases}
\]
Substituting the above into \eqref{eq:skbz}, we obtain
\[
\sum_{n \geq 0} s_{000,n} z^n \ = \ \frac{1+z+z^2}{1-z-z^2-z^3}.
\]  
By regarding the above as an LCCDE, we conclude that the sequence $s_{000,n}$ satisfies the following recursion for all $n \geq 0$:
\[
s_{000,n} = s_{000,n-1} + s_{000,n-2} + s_{000,n-3}+ \delta_n + \delta_{n-1} + \delta_{n-2},
\]
where 
$$\delta_n=\begin{cases}1,&n=0\\
0,&\text{otherwise}\end{cases}$$
 is the Kronecker delta function. 
\qed
\end{ex}

\medskip

Equipped with Theorem \ref{thm:1}, we are now ready to prove Proposition \ref{prop:connect}. 

\medskip

\begin{proof}[Proof of Proposition \ref{prop:connect}]
From the proof of Theorem \ref{thm:1}, we have seen that the enumeration of $s_{\kb,n}$ is given by the following irreducible rational function
\[
\sum_{n =0}^\infty s_{\kb,n} z^n \ = \ \frac{1+\sum_{i=1}^{L-1} r_\kb(i)z^i}{h_{\kb}(z)}
\]
where $h_{\kb}(z)$ is the denominator of \eqref{eq:skbz} and is given by \eqref{eq:hkbz}. Hence, it follows from the standard partial fraction technique and Proposition \ref{prop:kUnique} that 
\[
\mathfrak{g}_\kb = \max \{ \abs{u}^{-1}  :  h_\kb(u) =0, u \in \C\},
\]
where $\C$ is the set of complex numbers.
Next, noticing that the function $h_{\kb}(z)$, i.e., $\det(\mI-\Am_\kb z)$, also appears as the denominator of the enumeration function for $c_{\kb,n}$ (cf. \eqref{eq:enumerate}), we get
\[
g_\kb \leq \max \{ \abs{u}^{-1} : h_\kb(u) =0, u \in \C\},
\]
since the rational function in  \eqref{eq:enumerate} could be reducible. This shows $ g_\kb \leq \mathfrak{g}_\kb$.

To prove $g_\kb\geq \mathfrak{g}_\kb$ (which then implies $g_\kb=\mathfrak{g}_\kb$),
it suffices to show that $c_{\kb,n+2} \geq s_{\kb,n}$ for $n\geq L$.
This can be done by substantiating that
for any $\bb \in {\cal S}_\kb (n)$,
 there exist a prefix bit $p$ and a suffix bit $q$, where $p,q \in \mathbb{F}$, such that $p \bb q \in {\cal C}_\kb (n+2)$. 
 
Using the prove-by-contradiction argument, we first assume that
$\kb$ is an internal subword of both $\kb p\bb$ and $\kb \bar{p} \bb$, where $\bar{p} = 1 - p$. 
This assumption, together with $\bb \in {\cal S}_\kb (n)$, implies the existence 
of indices $1<i<L+2$ and $1<j<L+2$ such that 
\begin{equation}\label{p4-1}
\underbrace{k_i \cdots k_L p b_1 \cdots b_{i-2}} _{=\, \ab} = \underbrace{k_j \cdots k_L \bar{p} b_1 \cdots b_{j-2}}_{=\, \tilde\ab} = \kb,
\end{equation}
where we abuse the notations to let 
\begin{equation}\label{p4-2}
\ab=\begin{cases}
k_i\cdots k_L p,&\text{if }i=2\\
k_i \cdots k_L p b_1 \cdots b_{i-2},&\text{if }2<i<L+1\\
pb_1\cdots b_{i-2},&\text{if }i=L+1
\end{cases}
\end{equation}
and similar notational abuse is applied to $\tilde\ab$ and $j$.
Assume without loss of generality that $i<j$.
Then, the sums of the last $(j-1)$ bits of $\ab$ and $\tilde\ab$ must equal, i.e.,
\[
k_{L-(j-i-1)} + \cdots + k_L+p+b_1+ \cdots + b_{i-2} = \bar{p} + b_{1} + \cdots + b_{j-2}.
\]
Canceling out common terms at both sides gives 
\begin{equation}\label{a1}
k_{L-(j-i-1)} + \cdots + k_L+p = \bar{p} + b_{i-1} + \cdots + b_{j-2}.
\end{equation}
Note again that $\tilde\ab =\kb$; hence, substituting $b_{(j-2)-\ell}$ by $k_{L-\ell}$ for $\ell=0,1, \cdots , j-i-1$ in \eqref{a1} gives $p=\bar{p}$, which contradicts
the assumption that $\bar p=1-p$.

For the suffix bit $q$, we again assume to the contrary that there exist indices $i$ and $j$, satisfying $n+1-L<i<j<n+2$, such that 
\begin{equation}\label{p4-3}
\underbrace{b_i \cdots b_n q k_1 \cdots k_{L-n+i-2}} _{=\,\db} = \underbrace{b_j \cdots b_n \bar{q} k_1 \cdots k_{L-n+j-2}} _{=\,\tilde\db} = \kb
\end{equation}
After canceling out common terms in the respective sums of the first $(n+2-i)$ bits of $\db$ and $\tilde\db$, we obtain 
\[
b_i + \cdots + b_{j-1} + q = \bar{q} + k_1+ \cdots + k_{j-i}.
\]
Since $\db=\kb$, the above implies $q=\bar{q}$, which again leads to a contradiction.
\end{proof}

\medskip

One application of the result $h_\kb(z)=\det(\mI-\Am_\kb z)$ in Theorem \ref{thm:1} is to obtain a recursion formula for $c_{\kb,n}$, i.e., an  LCCDE for $c_{\kb,n}$. This is provided in the next corollary.

\medskip

\begin{cor}\label{cor1}
For a length-$L$ UW $\kb$ with overlap function $r_\kb(i)$, let $c_{\kb,n}$ be the number of length-$n$ codewords in the UDOOC ${\cal C}_\kb$ defined as before. Then, for $n\geq L$,
\beq
c_{\kb,n}  =  \left[ \sum_{i=1}^{L-1} r_\kb(i) \left( 2 c_{\kb,n-i-1} - c_{\kb,n-i} \right) \right]+ 2 c_{\kb,n-1} - c_{\kb,n-L}. \label{eq:recdkbn}
\eeq
\end{cor}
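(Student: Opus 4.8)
The plan is to extract the linear recursion for $c_{\kb,n}$ directly from the generating-function identity
\[
\sum_{n=0}^\infty c_{\kb,n} z^n \ = \ \frac{\xb_\kb^\top \adj(\mI - \Am_\kb z)\Am_\kb^{L-1}\yb_\kb}{\det(\mI - \Am_\kb z)}
\]
established in \eqref{eq:enumerate}. The key observation is that by Theorem \ref{thm:1}, the denominator $\det(\mI - \Am_\kb z)$ equals $h_\kb(z)$, whose coefficients are known explicitly in terms of the overlap function. First I would write $N(z)$ for the numerator polynomial and rewrite the identity as $N(z) = h_\kb(z)\sum_{n=0}^\infty c_{\kb,n} z^n$. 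Since $\deg h_\kb = L$ by Proposition \ref{prop:deg} and the numerator $N(z)$ has degree strictly less than $L$ (because the rational function is proper), comparing the coefficient of $z^n$ on both sides for every $n \geq L$ forces the right-hand coefficient to vanish, yielding a homogeneous recursion that is exactly the LCCDE governing $c_{\kb,n}$.

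The main computation is then simply to read off the coefficients of $h_\kb(z)$ from \eqref{eq:hkbz} and translate ``the coefficient of $z^n$ in $h_\kb(z)\sum c_{\kb,m}z^m$ vanishes'' into a relation among the $c_{\kb,m}$. Expanding
\[
h_\kb(z) = (1-2z)\Bigl(1+\sum_{i=1}^{L-1} r_\kb(i)z^i\Bigr)+z^L,
\]
the coefficient of $z^i$ for $1 \leq i \leq L-1$ is $r_\kb(i) - 2r_\kb(i-1)$ (with the convention $r_\kb(0)=1$), the constant term is $1$, and the coefficient of $z^L$ is $1 - 2r_\kb(L-1)$. Setting the $z^n$-coefficient of the product to zero for $n \geq L$ gives
\[
c_{\kb,n} = 2c_{\kb,n-1} - \sum_{i=1}^{L-1} r_\kb(i)\bigl(c_{\kb,n-i} - 2c_{\kb,n-i-1}\bigr) - c_{\kb,n-L},
\]
which, after grouping the summand as $r_\kb(i)(2c_{\kb,n-i-1}-c_{\kb,n-i})$, is precisely \eqref{eq:recdkbn}. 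I would verify the index bookkeeping carefully, in particular that the $z^L$ term contributes $-c_{\kb,n-L}$ (the $+z^L$ and the $-2r_\kb(L-1)z^L$ from the product combine correctly with the $i=L-1$ summand) so that no stray terms survive at the boundary $i = L-1$.

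The step I expect to require the most care is justifying that the recursion holds for all $n \geq L$ rather than only asymptotically, which hinges on two facts: that $\deg N(z) < L$ so that the coefficient comparison is genuinely homogeneous for every $n \geq L$, and that $h_\kb(z)$ is exactly the characteristic denominator (Theorem \ref{thm:1}), not merely a denominator after possible cancellation. The degree bound on the numerator follows because $\sum c_{\kb,n}z^n$ is a proper rational function — its power-series expansion begins at $z^0$ and $\deg N < \deg h_\kb = L$ is forced by properness together with $\deg h_\kb = L$. Granting these, the coefficient extraction is routine, and the remaining work is purely the algebraic bookkeeping of matching powers of $z$ against the displayed form of \eqref{eq:recdkbn}.
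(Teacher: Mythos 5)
Your overall strategy --- multiply the generating-function identity \eqref{eq:enumerate} through by $h_\kb(z)=\det(\mI-\Am_\kb z)$ and compare coefficients of $z^n$ --- is sound, and your bookkeeping of the coefficients of $h_\kb(z)$ (with the convention $r_\kb(0)=1$) does reproduce \eqref{eq:recdkbn} exactly. The gap is in the one step you yourself flagged as delicate: the claim that the numerator $N(z)=\xb_\kb^\top \adj(\mI-\Am_\kb z)\Am_\kb^{L-1}\yb_\kb$ has degree strictly less than $L$. Your justification --- that the power series expansion begins at $z^0$, hence the fraction is proper --- is circular: every rational function whose denominator does not vanish at $0$ has a power-series expansion beginning at $z^0$ regardless of the numerator's degree (consider $(1+z^5)/(1-2z)$), and ``properness'' is precisely the statement $\deg N<\deg h_\kb$ you are trying to establish. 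A priori $N(z)$ is built from entries of the adjugate of a $2^{L-1}\times 2^{L-1}$ matrix and could have degree as large as $2^{L-1}-1$; and since the recursion at a given $n$ is exactly the vanishing of the $z^n$-coefficient of $N(z)$, the entire content of the corollary --- that the recursion kicks in already at $n=L$ rather than at some later index --- is hiding inside this unproved degree claim.

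What actually closes the gap is the rank bound $\rank(\Am_\kb^{L-1})\leq L$ from Proposition \ref{prop:deg}, which you invoke only for $\deg\det(\mI-\Am_\kb z)=L$. Writing $p(w)=w^L h_\kb(1/w)$, the $z^n$-coefficient of $N(z)$ for $n\geq L$ equals $\xb_\kb^\top\Am_\kb^{\,n-L}\,p(\Am_\kb)\,\Am_\kb^{L-1}\yb_\kb$. The factor $p(\Am_\kb)$ annihilates the generalized eigenspaces of the nonzero eigenvalues but acts invertibly on the generalized null space of $\Am_\kb$ (its constant term is $1-2r_\kb(L-1)=\pm 1$), so the vanishing must come from $\Am_\kb^{L-1}$ killing that null space --- i.e.\ from the nilpotency index $m$ of $\Am_\kb$ on its generalized null space satisfying $m\leq L-1$, which is what $\rank(\Am_\kb^{L-1})\leq L$ together with $\deg\det(\mI-\Am_\kb z)=L$ delivers. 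This is exactly the step $m\leq L-1$ in the paper's Cayley--Hamilton proof; your generating-function route is the same argument in the transform domain, and once you import that rank bound your coefficient extraction goes through verbatim.
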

\begin{proof}
To prove \eqref{eq:recdkbn}, we first note that the characteristic polynomial for $\Am_\kb$ is given by 
\begin{eqnarray*}
	\chi_{\Am_\kb}(z) & = & \det(z \mI-\Am_\kb) \\
	                  & = & z^{2^{L-1}} h_\kb(1/z) \\
	                  & = & z^{2^{L-1}-L}\left(z^L h_\kb(1/z)\right) 
\end{eqnarray*}
where $z^L h_\kb(1/z)$ is a polynomial with degree $$\deg h_\kb(z) = \deg\det(\mI-\Am_\kb z)=L.$$
Denote
\begin{equation} 
m \ = \ \min \{ p > 0 \ : \ \mbox{Nullity}(\Am_\kb^p) = 2^{L-1}-L \}, \label{eq:ch1}
\end{equation}
where Nullity$()$ indicates the dimension of the null space of the square matrix inside parentheses.
By Cayley-Hamilton Theorem \cite{HK},  
the following polynomial 
\begin{eqnarray}
\lefteqn{\mu_\kb(z) :=  z^m (z^L h_\kb(1/z))} \no \\ 
& = & z^m \left(z^L-2z^{L-1}+ \sum_{i=1}^{L-1} r_{\kb}(i)z^{L-i-1}(z-2)+1\right)\no \\ \label{eq:ch2}
\end{eqnarray}
is an annihilating polynomial for $\Am_\kb$. We shall remark that $\mu_\kb(z)$ needs not to be the minimal polynomial for $\Am_\kb$. Plugging \eqref{eq:ch2} into \eqref{eq:ckbn} yields that for $n-1 \geq \max \{ m ,L-1\}$, we have 
\bea
\lefteqn{c_{\kb,n}}\no\\
\quad & =& \xb_\kb^\top\Am_\kb^{n+L-1}\yb_\kb \no \\
&=& \xb_\kb^\top\Am_\kb^{n-1-m}\Am_\kb^{m+L}\yb_\kb \label{se} \\
&=& \xb_\kb^\top\Am_\kb^{n-1} \left[ 2\Am_\kb^{L-1} + \sum_{i=1}^{L-1} r_\kb(i)(2\Am_\kb^{L-i-1}-\Am_\kb^{L-i}) - \mI \right] \yb_\kb \no \\ \no
&=& \left[ \sum_{i=1}^{L-1} r_\kb(i) \left( 2 c_{\kb,n-i-1} - c_{\kb,n-i} \right) \right]+ 2 c_{\kb,n-1} - c_{\kb,n-L}, \\ \label{eq:ch3} 
\eea
where the condition of $n-1 \geq \max \{ m ,L-1\}$ 
 follows from i) $n-1-m \geq 0$ such that \eqref{se} holds, and ii) $n-1 \geq L-1$ such that the last term of the RHS of \eqref{eq:ch3} represents $c_{\kb,n-L}$. 
Finally, since $\rank(\Am_\kb^{p}) \leq L$ for $p = L-1$ (see Proposition \ref{prop:deg}),
we have $m\leq L-1$, which immediately gives $\max \{ m ,L-1\}=L-1$. The proof is thus completed. 
\end{proof}

\medskip

So far, we learn that $c_{\kb,n}$ and $s_{\kb,n}$ have the same asymptotic growth rate, and both of their enumerations depend on $\det(\mI-\Am_\kb z)$.
Below we will use $s_{\kb,n}$ to determine the asymptotic growth rates corresponding to two specific UWs, $\ab=0\ldots00$ and $\bb=0\ldots01$. 
We then proceed to show that $\ab$ has the largest growth rate among all UWs of the same length, while the smallest growth rate is resulted when $\text{UW}=\bb$. 

\medskip

\begin{thm} \label{thm:2}
Among all UWs of the same length, the all-zero UW has the largest growth rate, while  UW $0 \ldots 01$ achieves the smallest. 
\end{thm}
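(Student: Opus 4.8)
The plan is to reduce the entire statement to the location of the smallest positive real root of the denominator polynomial $h_\kb(z)$ and then exploit that this root moves monotonically with the overlap (correlation) polynomial. Write $\omega_\kb(z):=1+\sum_{i=1}^{L-1}r_\kb(i)z^i$, so that by Theorem~\ref{thm:1} and \eqref{eq:hkbz} we have $h_\kb(z)=(1-2z)\omega_\kb(z)+z^L$, a polynomial whose coefficients are determined entirely by the overlap vector $\underline r_\kb$. From the proof of Proposition~\ref{prop:connect}, $g_\kb=\mathfrak{g}_\kb=\max\{|u|^{-1}:h_\kb(u)=0\}$; and since $\{s_{\kb,n}\}$ is a nonnegative sequence, its generating function $\omega_\kb(z)/h_\kb(z)$ (irreducible, by Theorem~\ref{thm:1}) has a positive real dominant singularity by Pringsheim's theorem, equivalently by the Perron--Frobenius conclusion of Proposition~\ref{prop:kUnique}. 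Hence the minimum-modulus root of $h_\kb$ is real and positive; call it $\rho_\kb$, so that $g_\kb=1/\rho_\kb$ and $\rho_\kb$ is exactly the smallest positive real root of $h_\kb$.

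First I would identify the two extreme overlap polynomials. For the all-zero UW $\ab=0\ldots0$ every shift matches, so from \eqref{eq:rkbi} $r_\ab(i)=1$ for all $0\le i\le L-1$ and $\omega_\ab(z)=1+z+\cdots+z^{L-1}$; since $r_\kb(i)\in\{0,1\}$ always, this is the coefficientwise-largest overlap polynomial. For $\bb=0\ldots01$ one checks directly from \eqref{eq:rkbi} that $r_\bb(i)=0$ for $1\le i\le L-1$, because a proper suffix $k_{i+1}^L$ ends in $k_L=1$ whereas the equal-length prefix $k_1^{L-i}$ is all zeros; hence $\omega_\bb(z)=1$, the coefficientwise-smallest overlap polynomial (every $\omega_\kb$ has constant term $1$ and nonnegative coefficients). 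Thus $\ab$ and $\bb$ are the top and bottom elements in the coefficientwise partial order on $\{\omega_\kb\}$.

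The core step is a monotonicity lemma: if $\omega_\kb(z)\ge\omega_{\kb'}(z)$ coefficientwise, then $\rho_\kb\le\rho_{\kb'}$, hence $g_\kb\ge g_{\kb'}$. To prove it, note that on $(0,\tfrac{1}{2}]$ we have $1-2z\ge 0$ and $\omega_\kb(z)\ge 1$, so $h_\kb(z)\ge z^L>0$; therefore no root lies in $(0,\tfrac{1}{2}]$ and $\rho_\kb>\tfrac{1}{2}$ for every UW. For $z>\tfrac{1}{2}$ write $h_\kb(z)=z^L-(2z-1)\omega_\kb(z)$; since $2z-1>0$ and $z>0$, coefficientwise domination gives $\omega_\kb(z)\ge\omega_{\kb'}(z)>0$ and hence $h_\kb(z)\le h_{\kb'}(z)$ throughout $(\tfrac{1}{2},\infty)$. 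Evaluating at $z=\rho_{\kb'}$ gives $h_\kb(\rho_{\kb'})\le h_{\kb'}(\rho_{\kb'})=0$, while $h_\kb$ is positive just to the right of $\tfrac{1}{2}$; by the intermediate value theorem $h_\kb$ has a root in $(\tfrac{1}{2},\rho_{\kb'}]$, so $\rho_\kb\le\rho_{\kb'}$. Applying the lemma with the two extremes from the previous paragraph yields $g_\ab\ge g_\kb\ge g_\bb$ for every length-$L$ UW $\kb$, which is the claim.

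The step I expect to require the most care is the reduction in the first paragraph, namely certifying that the growth rate is governed by the smallest positive real root $\rho_\kb$ rather than merely by some minimum-modulus complex root. This is where Perron--Frobenius (Proposition~\ref{prop:kUnique}) or, equivalently, Pringsheim's theorem for power series with nonnegative coefficients is essential, and the low-dimensional exceptional case $\kb=01$ (where $h_{01}(z)=(1-z)^2$ and $g_{01}=1$) must be checked by hand; fortunately it is consistent with, and in fact realizes the minimum predicted by, the monotonicity lemma since $\omega_{01}=1=\omega_\bb$. The sign-comparison and intermediate-value argument itself is routine once $\rho_\kb>\tfrac{1}{2}$ has been established.
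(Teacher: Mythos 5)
Your proposal is correct, and it takes a genuinely different route from the paper. The paper argues at the level of the recursion: it sandwiches the LCCDE for $s_{\kb,n}$ term-by-term between the recursions for $\ab$ and $\bb$ (its \eqref{eq:rbounds}), divides by $s_{\kb,n-1}$ and passes to the limit to obtain the two-sided inequality \eqref{eq:LCCDE4} on $g_\kb$, and then establishes extremality by contradiction, using that $g_\ab$ and $g_\bb$ satisfy the two bounding relations with equality; this requires a separate small argument to exclude the degenerate possibility $g_{\hat\kb}=1$. You instead argue at the level of the characteristic polynomial $h_\kb(z)=(1-2z)\omega_\kb(z)+z^L$, proving a monotonicity lemma (coefficientwise larger overlap polynomial $\Rightarrow$ smaller least positive root $\Rightarrow$ larger growth rate) via the sign of $h_\kb$ on $(0,\tfrac12]$ and the intermediate value theorem on $(\tfrac12,\rho_{\kb'}]$. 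Your route buys a strictly stronger conclusion --- $g_\kb$ is monotone in the entire coefficientwise partial order on overlap vectors, with $\ab$ and $\bb$ merely the top and bottom elements --- and it avoids the limit-passing and the case analysis around $g=1$. The price is that you must justify the identification $g_\kb=1/\rho_\kb$ with $\rho_\kb$ the \emph{smallest positive real} root rather than just a minimum-modulus root; your appeal to Pringsheim's theorem (the generating function $\omega_\kb/h_\kb$ is irreducible with nonnegative coefficients) or to Proposition \ref{prop:kUnique} handles this, and in fact the paper's own proof of Proposition \ref{prop:connect} already relies on the same identification, so you are not assuming anything the paper does not. Your verification of the two extreme overlap polynomials ($\omega_\ab(z)=1+z+\cdots+z^{L-1}$, $\omega_\bb(z)=1$) and your treatment of the exceptional case $\kb=01$ are both accurate.
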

\begin{proof}
For notational convenience, we set $\ab=0 \ldots 0$ and $\bb=0 \ldots 01$. 
For $\text{UW}=\ab$, it can be verified from \eqref{eq:rkbi} and \eqref{eq:skbz} that 
\beq\label{ha}
h_\ab(z) = 1-\sum_{i=1}^L z^i,
\eeq
and hence the sequence of $\{s_{\ab,n}\}_{n=1}^\infty$ satisfies the following recursion:
\[
s_{\ab,n} \ = \ \sum_{i=1}^{L} s_{\ab,n-i}\quad\text{ for }n\geq L.
\]
Similarly, we have $h_\bb(z) =  1-2z+z^L$, and therefore,
\[
s_{\bb,n} = 2 s_{\bb,n-1} - s_{\bb,n-L}\quad\text{ for }n\geq L.
\]
For general UW $\kb$ of length $L$, \eqref{eq:skbz} gives the following recursion for $n \geq L$
\beq
s_{\kb,n} \ = \ \sum_{i=1}^{L-1} \left( 2 s_{\kb,n-i-1} - s_{\kb,n-i} \right) r_\kb(i) + 2 s_{\kb,n-1} - s_{\kb,n-L}. \label{eq:LCCDE3}
\eeq
Note that $r_\kb(i) \in \{0,1\}$ by definition, and $2 s_{\kb,m-1} \geq s_{\kb,m}$ for all $m$.
From \eqref{eq:LCCDE3}, the following bounds hold for any UW $\kb$ with $n \geq L$:
\beq
2 s_{\kb,n-1} - s_{\kb,n-L} \ \leq \ s_{\kb,n} \ \leq \ \sum_{i=1}^L s_{\kb,n-i}, \label{eq:rbounds}
\eeq
where the lower and upper bounds are respectively obtained by replacing all $r_\kb(i)$ in \eqref{eq:LCCDE3} by $0$ and $1$.
In particular, $s_{\kb,n}$ equals the upper bound in \eqref{eq:rbounds}
 when $\kb=\ab=00\cdots 0$, and the lower bound is achieved
 when $\kb$ is $\bb=00\cdots 01$. By dividing all terms in \eqref{eq:rbounds} by $s_{\kb,n-1}$ and taking $n \to \infty$, we obtain 
\beq
2 - g_{\kb}^{-(L-1)} \ \leq \ g_\kb \ \leq 1 + g_\kb^{-1} + \cdots + g_\kb^{-L+1}. \label{eq:LCCDE4}
\eeq
To prove our claim that $g_\ab$ is the largest and $g_\bb$ is the smallest among all $g_\kb$, we first assume to the contrary that there exists $\hat\kb$ with $g_{\hat\kb} > g_\ab$. Substituting this into \eqref{eq:LCCDE4} leads to the following contradiction
\[
g_{\hat\kb} \  \stackrel{\text{(i)}}{<} \  \sum_{i=1}^L g_\ab^{-i+1} \  \stackrel{\text{(ii)}}{=} g_\ab,
\]
where (i) holds because $g_{\hat\kb}^{-1} < g_\ab^{-1}$ by assumption and (ii) is valid because $g_\ab^{-1}$ is a zero of $h_\ab(z)$ given in \eqref{ha}. 

To show $g_\bb$ achieves the minimum, again assume to the contrary that there exists $\hat\kb$ such that $g_{\hat\kb} < g_\bb$. Note from \eqref{eq:LCCDE4} that
\beq
0 \leq g_{\hat\kb} - 2 + g_{\hat\kb}^{-(L-1)} = (1-g_{\hat\kb}) \left( g_{\hat\kb}^{-L+1} + \cdots + g_{\hat\kb}^{-1} - 1 \right). \label{eq:gkbineq}
\eeq
Although $g_\kb \geq 1$ in general, 
we claim in this case $g_{\hat\kb} > 1$. For otherwise, that $h_{\hat\kb}(z=g_{\hat\kb}^{-1}=1)=0$ according to \eqref{eq:hkbz} implies that $r_{\hat\kb}(i)=0$ for all $i$; hence, $h_{\hat\kb}(z) = h_\bb(z)$ and $g_{\hat\kb} = g_\bb$, a contradiction. Now with $1 < g_{\hat\kb} < g_\bb$, the following series of inequalities lead to the desired contradiction:
\[
g_\bb \stackrel{\text{(i)}}{=} g_\bb^{-L+2} + \cdots + 1  \stackrel{\text{(ii)}}{<} g_{\hat\kb}^{-L+2} + \cdots + 1  \stackrel{\text{(iii)}}{\leq} g_{\hat\kb},
\]
where (i) follows from $h_\bb(z=g_\bb^{-1})=0$ and $g_\bb>1$,  (ii) holds because $g_\bb^{-1} < g_{\hat\kb}^{-1}$, and (iii) is due to \eqref{eq:gkbineq} and $g_{\hat\kb}>1$.
\end{proof}

\medskip

Using a similar technique in the proof of Theorem \ref{thm:2}, we can further devise a general upper bound and a general lower bound
for $g_\kb$ that hold for any $\kb$.

\medskip

\begin{thm} \label{thm:bounds}
For any UW $\kb$ of length $L\geq 2$, the asymptotic growth rate $g_\kb$ satisfies
\beq
2-2^{-(L-2)} \leq  g_\kb \leq 2-2^{-L}. \label{eq:thm3bounds}
\eeq
\end{thm}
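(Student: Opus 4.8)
The plan is to leverage the two bounds on $g_\kb$ already established in the proof of Theorem \ref{thm:2}, namely the inequalities in \eqref{eq:LCCDE4}:
\[
2 - g_\kb^{-(L-1)} \ \leq \ g_\kb \ \leq \ 1 + g_\kb^{-1} + \cdots + g_\kb^{-(L-1)}.
\]
These give implicit constraints on $g_\kb$, and the game is to convert them into the explicit numerical bounds $2 - 2^{-(L-2)} \le g_\kb \le 2 - 2^{-L}$. The key auxiliary fact I would exploit is Theorem \ref{thm:bounds} in disguise: since $g_\ab$ is the largest growth rate and $g_\bb$ the smallest (by Theorem \ref{thm:2}), it suffices to bound these two extremal cases. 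Thus I would reduce the general claim to computing (or bounding) $g_\bb$ from below by $2 - 2^{-(L-2)}$ and $g_\ab$ from above by $2 - 2^{-L}$.

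For the upper bound on $g_\ab$, recall $h_\ab(z) = 1 - \sum_{i=1}^L z^i$, so $g_\ab^{-1}$ is its smallest positive root, equivalently $g_\ab$ satisfies $g_\ab = \sum_{i=0}^{L-1} g_\ab^{-i} = \frac{1 - g_\ab^{-L}}{1 - g_\ab^{-1}}$. First I would establish $g_\ab < 2$ (immediate since the unconstrained rate is $2$ and the all-zero UW is genuinely constraining), then substitute a trial value. The cleanest route is to show $h_\ab(2 - 2^{-L}) \ge 0$ while $h_\ab$ is decreasing past its root, forcing $g_\ab \le 2 - 2^{-L}$; this amounts to verifying an elementary polynomial inequality at the single point $z^{-1} = 2 - 2^{-L}$. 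For the lower bound on $g_\bb$, I would use $h_\bb(z) = 1 - 2z + z^L$, so $g_\bb$ is the largest root of $z^L - 2z^{L-1} + 1 = 0$, i.e.\ $g_\bb = 2 - g_\bb^{-(L-1)}$; plugging the trial value $2 - 2^{-(L-2)}$ and checking the sign of $h_\bb$ there pins the inequality $g_\bb \ge 2 - 2^{-(L-2)}$.

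The main obstacle will be the sign-checking monotonicity argument: one must confirm that the relevant root $g_\kb^{-1}$ is the \emph{smallest} positive zero of $h_\kb$ and that $h_\kb$ behaves monotonically in a neighborhood, so that evaluating $h_\kb$ at a trial point correctly brackets the root on the right side. For the all-zero case this is straightforward because $h_\ab(z) = 1 - \sum_{i=1}^L z^i$ is strictly decreasing on $(0,\infty)$, giving a unique positive root, but for pinning $g_\bb$ via $h_\bb(z) = 1 - 2z + z^L$ one must distinguish the large root from the spurious root near $z = 1/2$. I would handle this by noting $h_\bb(1/2) = 2^{-L} > 0$, $h_\bb(1) = 0$ only in a degenerate sense, and tracking sign changes carefully on the interval containing $g_\bb^{-1}$.

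Finally I would assemble the chain: for arbitrary $\kb$, Theorem \ref{thm:2} gives $g_\bb \le g_\kb \le g_\ab$, and combined with the two endpoint estimates $2 - 2^{-(L-2)} \le g_\bb$ and $g_\ab \le 2 - 2^{-L}$ this yields \eqref{eq:thm3bounds} directly. An alternative, perhaps cleaner, approach avoids invoking Theorem \ref{thm:2} and instead applies the universal inequalities \eqref{eq:LCCDE4} to a \emph{generic} $g_\kb$: the right inequality $g_\kb \le 1 + g_\kb^{-1} + \cdots + g_\kb^{-(L-1)}$ together with $g_\kb \le 2$ bounds $g_\kb$ above, and the left inequality $g_\kb \ge 2 - g_\kb^{-(L-1)}$ together with a crude lower bound like $g_\kb \ge \sqrt[L]{?}$ bounds it below. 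I would lean toward the extremal-UW reduction since it makes the constants $2^{-(L-2)}$ and $2^{-L}$ emerge transparently from the explicit polynomials $h_\ab$ and $h_\bb$, rather than from a self-referential bootstrap on a generic $g_\kb$.
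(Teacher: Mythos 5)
Your proposal is correct, but it takes a genuinely different route from the paper's: the paper uses exactly the ``alternative'' you mention at the end and set aside, namely applying the universal inequalities \eqref{eq:LCCDE4} to a generic $g_\kb$ with no reduction to extremal UWs. Multiplying the right inequality of \eqref{eq:LCCDE4} by $(1-g_\kb^{-1})$ telescopes the geometric sum to give $g_\kb-1\leq 1-g_\kb^{-L}\leq 1-2^{-L}$ (using $g_\kb\leq 2$), and the left inequality gives $g_\kb-1\geq(1-g_\kb^{-1})\bigl(1+g_\kb^{-1}+\cdots+g_\kb^{-(L-2)}\bigr)\geq(1-g_\kb^{-1})\bigl(2-2^{-(L-2)}\bigr)$; dividing by $1-g_\kb^{-1}$ then requires a separate argument that $g_\kb>1$ when $L>2$, which the paper supplies by showing that $h_\kb(1)=0$ would force $h_\kb=h_\bb$ and then contradicting the constant term of $h_\kb$. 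Your extremal reduction via Theorem \ref{thm:2} plus sign checks on $h_\ab$ and $h_\bb$ is sound and arguably more transparent about where the constants come from: $h_\ab$ is strictly decreasing on $(0,\infty)$, so a single evaluation at $z=(2-2^{-L})^{-1}$ pins $g_\ab$; for $h_\bb$, note that $z=1$ is an exact (not merely ``degenerate'') root of $h_\bb(z)=1-2z+z^L$, and the root you need is the one near $1/2$ --- since $h_\bb(0)=1>0$ and $g_\bb^{-1}$ is the smallest positive zero, $h_\bb>0$ on $[0,g_\bb^{-1})$, so it suffices to show $h_\bb\bigl((2-2^{-(L-2)})^{-1}\bigr)<0$, which reduces to the elementary inequality $(1-2^{-(L-1)})^{L-1}>1/2$ for $L\geq 3$. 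The cost of your route is this residual inequality plus the $L=2$ edge case (where $g_{01}=1=2-2^{0}$ gives equality in the lower bound); what it buys is that the constants emerge as explicit approximations of the extremal roots and no $g_\kb>1$ bootstrap is needed. Either way the result follows, so this is a difference of method rather than a gap.
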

\begin{proof} It is straightforward to see $s_{\kb,n-1} \leq s_{\kb,n} \leq 2 s_{\kb,n-1}$ and hence $1\leq g_\kb \leq 2$. 

To prove the upper bound, we assume without loss of generality that $g_\kb>1$ since
the upper bound trivially holds when $g_\kb=1$. We then derive
\[
g_\kb-1 \ = \  g_\kb(1-g_\kb^{-1})\  \stackrel{\text{(i)}}{\leq} \  1-g_\kb^{-L} \ \stackrel{\text{(ii)}}{\leq} \ 1 - 2^{-L},
\]
where (i) follows from multiplying both sides of the second inequality in \eqref{eq:LCCDE4} by $(1-g_\kb^{-1})$ with the fact $g_\kb>1$, and (ii) holds since $g_\kb \leq 2$. 

To establish the  lower bound, we use the following series of inequalities:
\bea
 g_\kb(1-g_\kb^{-1}) &=& g_\kb-1 \no\\
& \stackrel{\text{(i)}}{\geq} &  1-g_\kb^{-(L-1)}\no\\
&=& (1-g_\kb^{-1}) \left( 1+g_\kb^{-1}+g_\kb^{-2}+\cdots +g_\kb^{-(L-2)} \right)\no\\
& \stackrel{\text{(ii)}}{\geq} & (1-g_\kb^{-1}) \left( 1+2^{-1}+2^{-2}+\cdots +2^{-(L-2)} \right)\no\\
&=& (1-g_\kb^{-1}) \left( 2-2^{-(L-2)} \right), \label{eq:gkblb}
\eea
where (i) is from the first inequality in \eqref{eq:LCCDE4},  and (ii) holds because $g_\kb \leq 2$. Equipped with \eqref{eq:gkblb}, we next distinguish two cases to complete the proof.
\begin{enumerate}
\item When $L=2$, the lower bound is trivially valid and is actually achieved
by taking $\text{UW}=01$ as $g_{01}=1$ is
the multiplicative inverse of the smallest zero of polynomial $h_{01}(z)=1-2z+z^2=(1-z)^2$. \medskip

\item For $L > 2$, it suffices to show $g_\kb>1$. Assume to the contrary that there exists $\kb$ of length $L > 2$ such that $g_\kb = 1$. By $h_\kb(z=g_\kb^{-1}=1) = 0$ and \eqref{eq:hkbz}, we have $r_\kb(i)=0$ for all $i$ and hence $h_\kb(z) = 1 - 2 z + z^L$. Since $g_\kb$ is the multiplicative inverse of the smallest zero of $h_\kb(z)$, the absolute values of all the remaining zeros of $h_\kb(z)$, say $\lambda_1, \ldots, \lambda_{L-1}$, must be strictly larger than $1$. It then follows from the splitting of $h_\kb(z)$, i.e.,  
\[
h_\kb(z) \ = \ (z-1) \prod_{i=1}^{L-1} (z- \lambda_i), 
\]
the constant term of $h_{\kb}(z)$ must have absolute value $\prod_{i=1}^{L-1} \abs{\lambda_i} > 1$, contradicting to the fact that the constant term in polynomial $h_\kb(z) = 1 - 2z + z^L$ is $1$. 
\end{enumerate}
\end{proof}

\medskip

Theorem \ref{thm:bounds} provides concrete explicit expressions for 
both upper and lower bounds on $g_\kb$. Although the bounds are asymptotically tight
and well approximate the true $g_\kb$ for moderately large $L$, they are not sharp in general. We can actually refine them using Theorem \ref{thm:2} and obtain that
$g_\bb\leq g_\kb\leq g_\ab,$
where from the proof of Theorem \ref{thm:2}, we have 
$$g_\ab = \max\left\{ \abs{t}  : h_\ab(z=t^{-1})=0, t \in \C\right\}$$ 
and 
$$g_\bb = \max\left\{ \abs{t}  : h_\bb(z=t^{-1})=0, t \in \C\right\}.$$ 
The determination of $g_\ab$ and $g_\bb$ can be done via
finding the largest $\abs{s}$ and $\abs{t}$, $0 \neq s, t \in \C$, such that 
$h_\ab(s^{-1})=0$ and $h_\bb(t^{-1})=0$, respectively.
By noting that 
\begin{eqnarray*}
(z-1) \left[ z^L h_\ab \left( z^{-1} \right) \right] &=& (z-1)(z^L-z^{L-1}- \cdots - 1) \\
&=& z^{L+1} - 2 z^L + 1
\end{eqnarray*}
and
$$z^L h_\bb \left( z^{-1} \right)=z^L - 2 z^{L-1} + 1,$$
we conclude the following corollary.

\medskip

\begin{cor}
Let $\ab=0 \ldots 0$ and $\bb=0 \ldots 01$ be binary streams of length $L$. Then
for any $\kb$ of the same length to $\ab$ and $\bb$,
$$g_\bb\leq g_\kb\leq g_\ab.$$
In addition, $g_\ab=\alpha_{L+1}$ and $g_\bb=\alpha_L$, where
\[
\alpha_L \ := \ \max\{ \abs{t} \ : \ t^L-2t^{L-1}+1=0, \, t \in \C \}.
\]
In particular, we have $\alpha_L\approx 2-2^{-L+1}$ for large $L$.
\end{cor}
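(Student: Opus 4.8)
The plan is to assemble the corollary from machinery already in place, since the ordering $g_\bb \leq g_\kb \leq g_\ab$ is precisely the conclusion of Theorem \ref{thm:2}. The only genuinely new work is to identify $g_\ab$ and $g_\bb$ with $\alpha_{L+1}$ and $\alpha_L$, and then to extract the asymptotics of $\alpha_L$. The workhorse throughout will be the following reformulation of the growth rate: combining Proposition \ref{prop:connect} ($g_\kb=\mathfrak{g}_\kb$) with the formula $\mathfrak{g}_\kb = \max\{|u|^{-1} : h_\kb(u)=0,\,u\in\C\}$ derived in the proof of Proposition \ref{prop:connect}, and substituting $t=u^{-1}$, gives $g_\kb = \max\{|t| : h_\kb(t^{-1})=0,\,t\in\C\}$. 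Equivalently, because $h_\kb(0)=1\neq 0$ and $\deg h_\kb=L$, the reversed polynomial $z^L h_\kb(z^{-1})$ has exactly the reciprocals of the zeros of $h_\kb$ as its zeros, so $g_\kb$ is the largest modulus among the zeros of $z^L h_\kb(z^{-1})$. (The multiplicity of the dominant zero plays no role here, so the identity persists even in the exceptional case $\kb=01$, where one checks directly $g_{01}=1=\alpha_2$.)

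For the lower endpoint I would apply this directly to $\bb=0\ldots 01$. The excerpt already records $z^L h_\bb(z^{-1}) = z^L - 2z^{L-1}+1$, whose zeros are the reciprocals of the zeros of $h_\bb$; hence $g_\bb = \max\{|t| : t^L-2t^{L-1}+1=0,\,t\in\C\} = \alpha_L$ by the definition of $\alpha_L$.

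The all-zero case is where the one real subtlety lies, and I expect it to be the main obstacle. Here the reversed polynomial carries a spurious factor: the excerpt shows $(z-1)\left[z^L h_\ab(z^{-1})\right] = z^{L+1}-2z^L+1$, so the zero set of $z^{L+1}-2z^L+1$ is that of $z^L h_\ab(z^{-1})$ together with the extra root $z=1$. Consequently $\alpha_{L+1} = \max\{|t| : t^{L+1}-2t^L+1=0\} = \max\{1,\,g_\ab\}$, and to conclude $\alpha_{L+1}=g_\ab$ I must rule out the spurious root dominating, i.e.\ show $g_\ab>1$. I would argue this from the Perron--Frobenius description of $g_\ab$ as the reciprocal of the smallest-modulus (real, positive) zero of $h_\ab(z)=1-\sum_{i=1}^{L}z^i$: since $h_\ab(0)=1>0$ while $h_\ab(1)=1-L<0$ for $L\geq 2$, the intermediate value theorem forces a real zero in $(0,1)$, so the smallest-modulus zero is strictly below $1$ and $g_\ab>1$. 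With $g_\ab>1$ the maximum $\max\{1,g_\ab\}$ is attained at $g_\ab$, giving $g_\ab=\alpha_{L+1}$.

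Finally, for the asymptotic claim $\alpha_L\approx 2-2^{-L+1}$, I would write the defining equation in the form $t^{L-1}(2-t)=1$ and perturb about $t=2$. Setting $t=2-s$ turns this into $s(2-s)^{L-1}=1$; for the dominant root $s\to 0$ as $L\to\infty$, so to leading order $s\cdot 2^{L-1}\approx 1$, whence $s\approx 2^{-(L-1)}=2^{-L+1}$ and $\alpha_L = 2-s\approx 2-2^{-L+1}$. This estimate is consistent with the explicit window $2-2^{-(L-2)} \leq \alpha_L \leq 2-2^{-L}$ supplied by Theorem \ref{thm:bounds}, which I would invoke as a sanity check and, if a fully rigorous quantitative statement is desired, to pin $\alpha_L$ into a band of width $O(2^{-L})$ about $2-2^{-L+1}$.
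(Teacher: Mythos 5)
Your proposal is correct and follows essentially the same route as the paper: the ordering comes from Theorem \ref{thm:2}, and the identifications $g_\bb=\alpha_L$ and $g_\ab=\alpha_{L+1}$ come from the reversed polynomials $z^L h_\bb(z^{-1})=z^L-2z^{L-1}+1$ and $(z-1)\left[z^L h_\ab(z^{-1})\right]=z^{L+1}-2z^L+1$. You are in fact more careful than the paper in two spots it leaves implicit — ruling out the spurious root $z=1$ by showing $g_\ab>1$ via the intermediate value theorem, and deriving the asymptotic $\alpha_L\approx 2-2^{-L+1}$ by perturbing $t^{L-1}(2-t)=1$ about $t=2$ — both of which are sound.
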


\medskip

Based on Theorem \ref{thm:bounds}, the following corollary is immediate by taking $L$ to infinity.

\medskip

\begin{cor} \label{cor:largeL}
For any UW $\kb$ of length $L$, the asymptotic growth rate of the corresponding UDOOC approaches $2$ as $L \to \infty$, i.e., 
\[
\lim_{L \to \infty} g_{\kb} \ = \ 2. 
\]
\end{cor}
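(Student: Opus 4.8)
The plan is to apply the sandwich bounds of Theorem~\ref{thm:bounds} and invoke the squeeze theorem. Since Corollary~\ref{cor:largeL} asks only for the limit of $g_\kb$ as $L\to\infty$, and Theorem~\ref{thm:bounds} already established that for every UW $\kb$ of length $L\geq 2$ we have
\[
2-2^{-(L-2)} \leq g_\kb \leq 2-2^{-L},
\]
the entire statement follows by taking $L\to\infty$ in this double inequality. First I would observe that the lower bound $2-2^{-(L-2)}$ and the upper bound $2-2^{-L}$ both converge to $2$ as $L\to\infty$, because $2^{-(L-2)}=4\cdot 2^{-L}\to 0$ and $2^{-L}\to 0$. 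Hence both flanking sequences tend to the common limit $2$.

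Next I would note that the bound holds uniformly for \emph{every} UW $\kb$ of the given length $L$, so the squeeze applies regardless of which sequence of UWs one picks as $L$ grows. Formally, for any family of UWs indexed by their length, the value $g_\kb$ is trapped between two $L$-dependent quantities that are independent of the particular $\kb$ of that length; letting $L\to\infty$ forces $g_\kb\to 2$. This immediately yields
\[
\lim_{L \to \infty} g_{\kb} \ = \ 2,
\]
as claimed.

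This proof requires no genuinely new argument, so there is no substantive obstacle; the only point deserving care is the logical quantification. The statement ``$\lim_{L\to\infty} g_\kb = 2$'' is really shorthand for the fact that $g_\kb\to 2$ uniformly over the choice of length-$L$ UW, and the squeeze bounds of Theorem~\ref{thm:bounds} are exactly what legitimizes reading the limit this way. I would therefore phrase the one-line proof as: since $2-2^{-(L-2)}\to 2$ and $2-2^{-L}\to 2$ as $L\to\infty$, the conclusion follows directly from Theorem~\ref{thm:bounds} and the squeeze theorem.
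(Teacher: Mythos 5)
Your proposal is correct and is exactly the paper's argument: the corollary is stated as an immediate consequence of Theorem~\ref{thm:bounds} by letting $L\to\infty$ in the two-sided bound $2-2^{-(L-2)}\leq g_\kb\leq 2-2^{-L}$. Your added remark about the bounds being uniform over all UWs of a given length is a sensible clarification but does not change the substance.
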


\medskip

In Table \ref{eigtab},  we illustrate the asymptotic growth rates of UDOOCs for UWs $\ab$ and $\bb$ with lengths up to $8$. Also shown are the bounds in Theorem \ref{thm:bounds}.
It is seen that for moderately large $L$, all UDOOCs have roughly the same asymptotic growth rate, and hence are about the same good in terms of compressing sources of large size.
Furthermore, having $g_\kb \to 2$ as $L \to \infty$ means that for very large $L$,
UDOOCs can have asymptotic growth rates comparable 
to the unconstrained OOC, whose asymptotic growth rate equals $2$.

\begin{table}[!h]
\begin{center}
\caption{The asymptotic growth rates for UWs $\ab$ and $\bb$ and the bounds in 
Theorem \ref{thm:bounds} with various $L$}\label{eigtab}
\scriptsize
\begin{tabular}{  c | c c c c c c c }
\hline $L$ & 2 & 3 & 4 & 5 & 6 & 7 & 8 \\
\hline $2-2^{-L}$&1.75&1.875&1.938&1.969&1.984&1.992&1.996\\
\hline $g_\ab$ & 1.618 & 1.839 & 1.928 & 1.966 & 1.984 & 1.992 & 1.996 \\
\hline $g_\bb$ & 1 & 1.618 & 1.839 & 1.928 & 1.966 & 1.984 & 1.992 \\
\hline $2-2^{-(L-2)}$&1&1.5&1.75&1.875&1.938&1.969&1.984\\
\hline
\end{tabular} 
\end{center}
\end{table}

\subsection{Asymptotic Equivalence} \label{sec:asymp}

After presenting the results on asymptotic growth rates, we proceed to define
the asymptotic equivalence for UWs and show that the number of asymptotic equivalent UW classes is upper bounded by the number of different overlap vectors in Definition \ref{definition-ao}.

\medskip

\begin{defn} \label{defn:eqrln2}
Two UWs $\kb$ and $\kb'$ are said to be \emph{asymptotically equivalent}, denoted by $\kb \stackrel{\text{a.e.}}{\equiv} \kb'$, if  they have the same growth rate, i.e.,
$g_{\kb}=g_{\kb'}$.
\end{defn}

\medskip

Following the definition, we have the next proposition.

\medskip

\begin{prop} \label{prop4}
Fix the length $L$ of UWs,
and denote by $N_L$ the number of all possible overlap vectors of length $L$, i.e., $N_L = \abs{\left\{ \underline{r}_\kb: \kb \in \F^L\right\}}$.
Then, the number of asymptotically equivalent UW classes is upper-bounded by 
$N_L$. 
\end{prop}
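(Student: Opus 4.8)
The plan is to show that the asymptotic growth rate $g_\kb$ is completely determined by the overlap vector $\underline{r}_\kb$, so that any two UWs sharing the same overlap vector must be asymptotically equivalent; the bound $N_L$ then follows because distinct asymptotic-equivalence classes are in injective correspondence with a subset of the distinct overlap vectors. The crucial observation is already available from the earlier results: by Theorem~\ref{thm:1} and Proposition~\ref{prop:connect}, we have $g_\kb=\mathfrak{g}_\kb$, and $\mathfrak{g}_\kb$ is the reciprocal of the smallest-modulus root of the denominator polynomial
\[
h_\kb(z) = (1-2z)\left(1+\sum_{i=1}^{L-1} r_\kb(i)z^i\right)+z^L.
\]

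First I would note that the coefficients of $h_\kb(z)$ depend on $\kb$ \emph{only} through the values $r_\kb(1),\ldots,r_\kb(L-1)$, which are precisely the nontrivial entries of the overlap vector $\underline{r}_\kb$ (the entry $r_\kb(0)=1$ is constant across all UWs). Hence if $\underline{r}_\kb=\underline{r}_{\kb'}$, then $h_\kb(z)=h_{\kb'}(z)$ as polynomials, so they share identical root sets, and therefore
\[
g_\kb = \max\{\abs{u}^{-1} : h_\kb(u)=0\} = \max\{\abs{u}^{-1} : h_{\kb'}(u)=0\} = g_{\kb'},
\]
giving $\kb \stackrel{\text{a.e.}}{\equiv} \kb'$. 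This establishes that the map $\kb \mapsto \underline{r}_\kb$ factors through asymptotic equivalence: the overlap vector is constant on each asymptotic-equivalence class.

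Second, I would translate this into the counting bound. The previous step shows that the equivalence relation induced by ``having the same overlap vector'' is a \emph{refinement} of asymptotic equivalence—every overlap-vector class is contained in a single asymptotic-equivalence class. Consequently the number of asymptotic-equivalence classes is at most the number of overlap-vector classes, which is exactly $N_L=\abs{\{\underline{r}_\kb : \kb \in \F^L\}}$. This yields the claimed inequality.

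The main subtlety to be careful about is the direction of the bound and the possibility of strictness. The argument gives only an upper bound precisely because distinct overlap vectors can still yield the same dominant root of $h_\kb$: two different polynomials may share the same largest reciprocal root while differing elsewhere, collapsing several overlap-vector classes into one asymptotic class. I would emphasize that I am \emph{not} claiming a bijection, only that distinct asymptotic classes force distinct overlap vectors, so no finer identification is needed. The one place warranting a remark is the edge case $\kb=01$ excluded in Proposition~\ref{prop:kUnique}: there the largest eigenvalue is not simple, but since $g_\kb$ is still well-defined as a limit and equals the reciprocal smallest-modulus root of $h_\kb$, the determination of $g_\kb$ from $h_\kb$—and hence from $\underline{r}_\kb$—remains valid, so the counting argument is unaffected.
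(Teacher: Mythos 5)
Your proposal is correct and follows essentially the same route as the paper: both arguments observe that $h_\kb(z)$ is completely determined by the overlap vector $\underline{r}_\kb$, that the growth rate $\mathfrak{g}_\kb$ is determined by the smallest-modulus root of $h_\kb(z)$, and then invoke Proposition \ref{prop:connect} to transfer the conclusion from $s_{\kb,n}$ to $c_{\kb,n}$, so the map from overlap vectors onto growth rates bounds the number of asymptotic classes by $N_L$. Your added remarks on the possible strictness of the bound and on the $\kb=01$ edge case are consistent with the paper's own caveat that distinct polynomials could yield the same growth rate.
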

\begin{proof}
Since the growth rate of  $s_{\kb,n}$ is given by $\max\left\{ \abs{t} : h_{\kb} \left( z = t^{-1} \right)=0, \, t \in \C\right\}$, in which the polynomial $h_{\kb}(z)$, defined in \eqref{eq:hkbz}, is completely determined by the respective overlap vector ${\underline r}_\kb$. As two different polynomials $h_\kb(z)$ and $h_{{\kb}'}(z)$,
resulting respectively from two different overlap vectors ${\underline r}_\kb$ and ${\underline r}_{{\kb}'}$,
could yield the same growth rate, 
the number of distinct asymptotic growth rates of $s_{\kb,n}$ for various $\kb$
must be upper-bounded by $N_L$.
The proof is then completed after invoking the result from Proposition \ref{prop:connect} 
that $s_{\kb,n}$ and $c_{\kb,n}$ have the same growth rate.
\end{proof}

\medskip

One may find the number of asymptotically equivalent UW classes by a brutal force algorithm when $L$ is small.
 With the help of Proposition \ref{prop4}, an efficient algorithm for its upper bound $N_L$ 
 is available in \cite{MATH03}, in which $\underline\rb_\kb$ is regarded as \textit{(auto)correlations} of a string. Values of $N_L$ for various $L$ are accordingly listed in Table \ref{kappa}. This table shows the trend, as being pointed out in \cite{MATH81}, that $\ln N_L$ grows at the speed of $(\ln L)^2$, or specifically, 
 \begin{equation}\label{30}
\dfrac{1}{2 \ln 2}\leq \liminf_{L\rightarrow\infty}\dfrac{\ln N_L}{ { \left( \ln L \right) }^2} \leq\limsup_{L\rightarrow\infty}
\dfrac{\ln N_L}{ { \left( \ln L \right) }^2}\leq \dfrac{1}{2 \ln \frac{3}{2}}.
\end{equation}
 
\begin{table}[!h]
\begin{center}
\caption{$N_L$ values for various $L$.
It is stated in \cite{MATH03} that the lower asymptotic bound $1/(2\ln(2))\approx 0.72$
 in \eqref{30} only holds for very large $L$; hence, this lower bound is not valid for $L\leq 13$ in this table . }
\label{kappa}
\scriptsize
\begin{tabular}{  c | p{0.1cm} p{0.1cm} p{0.1cm} p{0.1cm} p{0.1cm} p{0.1cm} p{0.1cm} p{0.1cm} p{0.1cm} p{0.1cm} p{0.1cm} p{0.1cm} p{0.1cm} }
\hline $L$ & 1 & 2 & 3 & 4 & 5 & 6 & 7 & 8 & 9 & 10 & 11 & 12 & 13\\
\hline $N_L$ & 1 & 2 & 3 & 4 & 6 & 8 & 10 & 13 & 17 & 21 & 27 & 30 & 37 \\
\hline 
$\frac{\ln N_L}{(\ln L)^2}$&$-$ & 1.44 & .91 & .72 & .69 & .65 & .61 & .59 & 59 & .57 & .57 & .55 & .55 \\
\hline
\end{tabular} 
\end{center}
\end{table}

\section{Encoding and Decoding Algorithms of UDOOCs}\label{sec:4}

In this section, the encoding and decoding algorithms of UDOOCs are presented.
Also provided are upper bounds for the averaged codeword length 
of the resulting UDOOC.

Denote by ${\cal U} = \{ u_1, u_2, \cdots, u_M\}$ the 
source alphabet of size $M$ to be encoded.
Assume without loss of generality that $p_1\geq p_2\geq \cdots\geq p_M$, 
where $p_i$ is the probability of occurrence for source symbol $u_i$.

Then, an optimal lossless source coding scheme for UDOOCs associated with UW $\kb$ should
assign codewords of shorter lengths to messages with higher probabilities
and reserve longer codewords for less likely messages.
By following this principle, 
the encoding mapping $\phi_\kb$ from ${\cal U}$ to ${\cal C}_\kb$
should satisfy $\ell(\phi_\kb(u_i))\leq\ell(\phi_\kb(u_j))$ whenever $i\leq j$,
where $\ell(\phi_\kb(u_i))$ denotes the length of bit stream $\phi_\kb(u_i)$.
The coding system thus requires an ordering of the words
in ${\cal C}_\kb$ according to their lengths.
This can be achieved in terms of the recurrence equation for  
$c_{\kb,n}$ (for example, \eqref{eq:ch3}).
As such, $\phi_\kb(u_1)$ must be the null word, and the mapping $\phi_\kb$ must always form a bijection mapping between
$\{ u_i : F_{\kb,n-1}< i \leq  F_{\kb,n}\}$ and ${\cal C}_\kb(n)$ for every integer $n\geq 1$, where
\beq
F_{\kb,n} := 
	 			\begin{cases}
	                       \sum_{i=0}^n c_{\kb,i} , & \mbox{if } n \geq 0, \\
						   0, & \textnormal{otherwise.} \\
				\end{cases}
				\label{eq:Fkbn}
\eeq
This optimal assignment results in average codeword length:
\beq
L_\kb \ = \ \ell(\kb) + \sum_{i=1}^M p_i \cdot\ell(\phi_\kb(u_i)), \label{eq:L_kb}
\eeq
where the first term $\ell(\kb)$ accounts for the insertion of UW $\kb$ to separate adjacent codewords. 

\subsection{Upper Bounds on Average Codeword Length of UDOOCs}

The average codeword length $L_\kb$ is clearly a function of the source distributions and does not in general exhibit a closed-form formula. In order to understand the general compression performance of UDOOCs, 
three upper bounds on $L_\kb$ are established in this subsection. 
The first upper bound is applicable to the situation when
the largest probability $p_1$ of source symbols is given. 
Other than $p_1$, the second upper bound additionally requires
the knowledge of the source entropy.
When both the largest and second largest probabilities (i.e., $p_1$ and $p_2$) of source symbols are present apart from the source entropy, the third upper bound can be used. 
Note that the third upper bound holds for all UWs and requires no knowledge about $\kb$; therefore,  
one might predict  that the third upper bound 
could be looser than the other two.
Experiments using English text from \emph{Alice's Adventures in Wonderland} however
indicate that such an intuitive prediction 
is not always valid.
Nevertheless, the second upper bound is better than the first one in most cases we have examined.
Details are given below.

\medskip

\begin{prop}[The first upper bound on $L_\kb$] \label{prop:upper1}For UW $\kb$ of length $L$, the average codeword length $L_\kb$ is upper-bounded as follows:
\beq\label{firstbound}
L_{\kb} \leq L+(1-p_1)N_\kb
\eeq
where $N_\kb$ is the smallest integer such that $F_{\kb, N_{\kb}} \geq M$.
\end{prop}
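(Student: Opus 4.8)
The plan is to bound $L_\kb$ by replacing the exact length $\ell(\phi_\kb(u_i))$ of each codeword with the worst-case length that the optimal assignment $\phi_\kb$ could ever produce. Recall from \eqref{eq:L_kb} that $L_\kb=L+\sum_{i=1}^M p_i\,\ell(\phi_\kb(u_i))$, where the UW contributes a fixed $\ell(\kb)=L$. The key structural fact is that $\phi_\kb$ assigns the null codeword (length $0$) to $u_1$, and assigns codewords of length exactly $n$ to precisely those source symbols indexed by $F_{\kb,n-1}<i\le F_{\kb,n}$, where $F_{\kb,n}$ is the cumulative count defined in \eqref{eq:Fkbn}. Since $N_\kb$ is the smallest integer with $F_{\kb,N_\kb}\ge M$, every source symbol $u_i$ with $i\le M$ is assigned a codeword of length at most $N_\kb$; that is, $\ell(\phi_\kb(u_i))\le N_\kb$ for all $i$.

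First I would separate the $i=1$ term. Because $\phi_\kb(u_1)$ is the null word, its length is $0$, so it contributes nothing to the sum. The remaining symbols $u_2,\ldots,u_M$ collectively carry probability $\sum_{i=2}^M p_i=1-p_1$, and each of them has a codeword of length at most $N_\kb$. Hence
\[
\sum_{i=1}^M p_i\,\ell(\phi_\kb(u_i))
=\sum_{i=2}^M p_i\,\ell(\phi_\kb(u_i))
\le N_\kb\sum_{i=2}^M p_i
=(1-p_1)N_\kb.
\]
Adding back the fixed UW contribution $L$ yields $L_\kb\le L+(1-p_1)N_\kb$, which is exactly \eqref{firstbound}.

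The only point that requires care — and the step I expect to be the main (though modest) obstacle — is justifying the uniform bound $\ell(\phi_\kb(u_i))\le N_\kb$ cleanly. This rests on two observations that should be stated explicitly: that the optimal ordered assignment never uses a codeword longer than necessary, so the longest codeword in use has length $N_\kb$ by the minimality of $N_\kb$ in $F_{\kb,N_\kb}\ge M$; and that using $\ell(\phi_\kb(u_1))=0$ lets us drop the highest-probability term entirely, which is what sharpens the crude bound $N_\kb$ to $(1-p_1)N_\kb$. No analytic estimates or properties of the source entropy are needed here — the argument is purely a counting/monotonicity statement about the bijective length-ordered mapping $\phi_\kb$, so the proof is short and self-contained.
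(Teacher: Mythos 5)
Your proposal is correct and follows essentially the same route as the paper's proof: drop the $i=1$ term since $\phi_\kb(u_1)$ is the null word, bound every remaining codeword length by the maximal length $N_\kb$ (the paper writes this as $\ell(\phi_\kb(u_i))\le\ell(\phi_\kb(u_M))=N_\kb$), and sum the residual probability $1-p_1$. No gaps.
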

\begin{proof} 
It can be derived from \eqref{eq:L_kb} and $\ell(\phi_\kb(u_1))=0$ that
\begin{eqnarray*}
L_{\kb} &=& L +  \sum_{i=2}^M p_i \ell(\phi_\kb(u_i))\\
&\leq &L + \sum_{i=2}^M p_i \ell(\phi_\kb(u_M))\\
&=&L+(1-p_1)N_\kb.
\end{eqnarray*}
\end{proof}

\medskip

\begin{prop}[The second upper bound on $L_\kb$] \label{prop:upk}\label{prop:upper2} Suppose $g_\kb>1$. Then
\begin{multline}
L_{\kb} \leq L+\dfrac{\text{H}({\cal U})+p_1\log_{2}(p_1)}{\log_2 (g_\kb)}\\
+(1-p_1)(1-\log_{g_\kb}(K_\kb))\label{secondbound}
\end{multline}
where $\text{H}({\cal U})=\sum_{i=1}^Mp_i\log_2(1/p_i)$ is the source entropy with units in bits,
$K_\kb$ is a constant given by
\beq
K_\kb = \min\left\{g_\kb^{1-n_i}F_{\kb, n_i-1} \, : \, i=2, \cdots, M\right\}, \label{eq:K}
\eeq
and $n_i$ is the smallest integer satisfying
$F_{\kb,n_i}\geq i$.
\end{prop}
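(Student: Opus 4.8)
The plan is to bound the average codeword length by controlling individual codeword lengths $\ell(\phi_\kb(u_i))$ in terms of the source probabilities $p_i$, then invoke the source entropy to tie everything together. The starting point is equation \eqref{eq:L_kb}, which gives $L_\kb = L + \sum_{i=2}^M p_i \ell(\phi_\kb(u_i))$ after using $\ell(\phi_\kb(u_1))=0$. The key observation is that the optimal assignment places $u_i$ in level $n_i$, where $n_i$ is the smallest integer with $F_{\kb,n_i}\geq i$, so $\ell(\phi_\kb(u_i)) = n_i$. Hence I need a good upper bound on $n_i$ as a function of $i$.

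**First I would** extract a bound on $n_i$ from the asymptotic growth rate. Since the number of codewords grows like $g_\kb^n$, the cumulative count $F_{\kb,n}$ should be comparable to $g_\kb^n$. More precisely, from the definition of $K_\kb$ in \eqref{eq:K}, we have $g_\kb^{1-n_i}F_{\kb,n_i-1}\geq K_\kb$ for each $i\geq 2$, which rearranges to give a lower bound on $F_{\kb,n_i-1}$. Because $n_i$ is the \emph{smallest} integer with $F_{\kb,n_i}\geq i$, minimality forces $F_{\kb,n_i-1} < i$. Combining these two facts yields $K_\kb\, g_\kb^{n_i-1} \leq F_{\kb,n_i-1} < i$, which gives the crucial estimate
\[
n_i \leq 1 + \log_{g_\kb}\!\left(\frac{i}{K_\kb}\right) = 1 - \log_{g_\kb}(K_\kb) + \log_{g_\kb}(i).
\]
This is the workhorse inequality, and it is valid precisely because $g_\kb>1$ lets me take logarithms in base $g_\kb$ monotonically.

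**Next I would** substitute this bound into the sum. Writing $\log_{g_\kb}(i) = \log_2(i)/\log_2(g_\kb)$, the contribution splits into a constant part and an entropy-like part:
\[
L_\kb \leq L + (1-p_1)\bigl(1-\log_{g_\kb}(K_\kb)\bigr) + \frac{1}{\log_2(g_\kb)}\sum_{i=2}^M p_i \log_2(i).
\]
The remaining task is to relate $\sum_{i=2}^M p_i\log_2(i)$ to the entropy. Since the $p_i$ are sorted in decreasing order, $p_1\geq p_2\geq\cdots\geq p_M$ implies the standard rank bound $i\,p_i \leq \sum_{j=1}^i p_j \leq 1$, hence $p_i \leq 1/i$, i.e. $\log_2(i)\leq \log_2(1/p_i)$. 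Therefore $\sum_{i=2}^M p_i\log_2(i)\leq \sum_{i=2}^M p_i\log_2(1/p_i) = \mathrm{H}({\cal U}) - p_1\log_2(1/p_1) = \mathrm{H}({\cal U}) + p_1\log_2(p_1)$. Plugging this in produces exactly the claimed bound \eqref{secondbound}.

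**The main obstacle** I anticipate is justifying the lower bound $K_\kb\, g_\kb^{n_i-1}\leq F_{\kb,n_i-1}$ cleanly: it holds essentially by construction since $K_\kb$ is defined as a minimum over exactly these ratios, but one must be careful that $K_\kb$ is taken over $i=2,\dots,M$ and that the quantity $g_\kb^{1-n_i}F_{\kb,n_i-1}$ is well-defined and positive (which requires $F_{\kb,n_i-1}>0$, true for $n_i\geq 1$ since $F_{\kb,0}=c_{\kb,0}=1$ counts the null codeword). A secondary subtlety is the monotonicity step $p_i\leq 1/i$, which relies essentially on the sorted ordering assumption $p_1\geq\cdots\geq p_M$ stated at the outset; this is the only place the ordering is genuinely used, and it is what converts the rank-based length bound into an entropy bound. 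Everything else is routine algebraic rearrangement, so I expect the definition of $K_\kb$ to be the linchpin that makes the first-order $g_\kb^{n}$ growth estimate precise enough to yield the entropy term.
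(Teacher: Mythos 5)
Your proposal is correct and follows essentially the same route as the paper's proof: the chain $K_\kb\, g_\kb^{n_i-1} \leq F_{\kb,n_i-1} < i \leq 1/p_i$ is the linchpin in both, and the only (immaterial) difference is that you apply $p_i\leq 1/i$ after summing $\sum_i p_i\log_{g_\kb}(i)$ whereas the paper applies it already in the bound $n_i \leq 1-\log_{g_\kb}(K_\kb p_i)$ before summing. Your identification of $\ell(\phi_\kb(u_i))=n_i$ and of the positivity of $F_{\kb,n_i-1}$ as the points needing care matches the structure of the paper's argument.
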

\begin{proof}
From the definitions of $n_i$ and $K_\kb$ we have 
\[
K_\kb\ g_\kb^{n_i-1} \leq F_{\kb, n_i-1}  < i \leq \frac{1}{p_i}
\]
where the last inequality follows from that $p_i \leq \frac{1}{i}$ for $1\leq i\leq M$ as 
$p_1\geq p_2\geq\cdots\geq p_M$. By $g_\kb > 1$ the above implies
\[
n_i \leq 1-\log_{g_\kb}\left(K_\kb p_i\right)
=1-\log_{g_\kb}\left(p_i\right)-\log_{g_\kb}\left(K_\kb\right).
\]
Note that $\ell(\phi_\kb(u_i)) \leq n_i$ by the property of optimal lossless compression function $\phi_\kb$. Consequently, we have 
\bean
L_\kb &=& L +  \sum_{i=2}^M p_i \ell(\phi_\kb(u_i)) \\
&\leq & L + \sum_{i=2}^M p_in_i\\
&\leq & L + \sum_{i=2}^M p_i\left(1-\log_{g_\kb}(p_i)-\log_{g_\kb}(K_\kb)\right)\\
&=& L -\sum_{i=2}^M p_i \log_{g_\kb}\left(p_i\right)+\sum_{i=2}^M p_i \left(1-\log_{g_\kb}(K_\kb)\right)\\
&=& L+\dfrac{\text{H}({\cal U})+p_1\log_{g_\kb}(p_1)}{\log_2 (g_\kb)}
+(1-p_1)(1-\log_{g_\kb}K_\kb).
\eean
\end{proof}

\medskip

The previous two upper bounds require the computations of either $N_\kb$, or $g_\kb$ and $K_\kb$; hence, they are functions of UW $\kb$. Next we provide a simple third upper bound that holds universally for all UWs.

\medskip

\begin{prop}[The third upper bound on $L_\kb$]\label{prop:upper3} For UW $\kb$ of length $L>2$, 
\begin{multline}
L_\kb \leq L+\dfrac{\text{H}({\cal U})+p_1\log_2(p_1)+p_2\log_2(p_2)}{\log_2(2-2^{2-L})}\\
+(2-2p_1-p_2).\label{thirdbound}
\end{multline}
\end{prop}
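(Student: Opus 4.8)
The plan is to imitate the derivation of the second upper bound (Proposition \ref{prop:upper2}), but to purge every quantity that depends on the particular UW $\kb$ beyond its length $L$. Two substitutions accomplish this. First, wherever the growth rate $g_\kb$ appears I would replace it by the universal lower bound $\beta:=2-2^{2-L}$ furnished by Theorem \ref{thm:bounds} (the left inequality in \eqref{eq:thm3bounds}); since $\beta\le g_\kb$ this only weakens the estimate, and it is what produces the denominator $\log_2(2-2^{2-L})$. Note also that $\beta>1$ exactly because $L>2$, so $\log_2\beta>0$. Second, I would strip the UW-dependent constant $K_\kb$ out of the argument by treating the two most likely symbols $u_1$ and $u_2$ by hand, which is precisely why the final bound carries the extra terms $p_1\log_2 p_1+p_2\log_2 p_2$ (they cancel the $i=1,2$ summands of the entropy) and the constant $2-2p_1-p_2$.

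Concretely, starting from \eqref{eq:L_kb} and $\ell(\phi_\kb(u_1))=0$, I would write $L_\kb=L+\sum_{i=2}^M p_i\,\ell(\phi_\kb(u_i))$ and bound the sum in two pieces. For $u_2$ I would show $\ell(\phi_\kb(u_2))\le 1$; this needs the small auxiliary fact that every UW of length $L>2$ admits at least one length-$1$ codeword, i.e. $c_{\kb,1}\ge 1$, so that $F_{\kb,1}\ge 2$ and hence $n_2=1$. For $i\ge 3$ I would use $\ell(\phi_\kb(u_i))\le n_i$, with $n_i$ the smallest integer satisfying $F_{\kb,n_i}\ge i$ (cf. \eqref{eq:Fkbn}), together with $p_i\le 1/i$ coming from $p_1\ge\cdots\ge p_M$. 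The estimate driving everything is the uniform cumulative lower bound $F_{\kb,n}\ge\beta^{\,n-1}$; granting it, $F_{\kb,n_i-1}<i\le 1/p_i$ gives $\beta^{\,n_i-2}<1/p_i$ and hence $n_i\le \log_\beta(1/p_i)+2=2-\log_2 p_i/\log_2\beta$ (the cases $n_i\le 2$ being trivial because the right-hand side is always at least $2$). Substituting the two bounds, using $\sum_{i\ge 3}p_i=1-p_1-p_2$ and $-\sum_{i\ge 3}p_i\log_2 p_i=\text{H}(\mathcal{U})+p_1\log_2 p_1+p_2\log_2 p_2$, and collecting the constants ($p_2$ from the $u_2$ term together with $2(1-p_1-p_2)$ from the ``$+2$'' in the $i\ge 3$ bound) lands precisely on the right-hand side of \eqref{thirdbound}.

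The crux — and the step I expect to be genuinely delicate — is the uniform cumulative lower bound $F_{\kb,n}\ge\beta^{\,n-1}$, required simultaneously for all $n$ and all UWs of length $L$. The natural starting point is Corollary \ref{cor1}: discarding its nonnegative overlap terms gives $c_{\kb,n}\ge 2c_{\kb,n-1}-c_{\kb,n-L}$ for $n\ge L$, whose characteristic equation is $t^L-2t^{L-1}+1=0$ with dominant root $g_\bb\ge\beta$. The difficulty is that this recursion carries a \emph{negative} coefficient, so a straightforward termwise induction or comparison against the extremal UW $0\cdots 01$ is blocked: as noted in the introduction, $c_{0\cdots 01,n}$ is in fact the largest among all length-$L$ UWs for small $n$ yet the smallest asymptotically, so the sequences cross and no termwise domination is available. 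I would therefore induct not on $c_{\kb,n}$ but on the partial sums $F_{\kb,n}$, summing the recursion to an inequality of the form $F_{\kb,n}\ge 2F_{\kb,n-1}-F_{\kb,n-L}+(\text{boundary terms})$ and then running, now at finite $n$, an argument paralleling the multiply-by-$(1-g_\kb^{-1})$ step (together with $g_\kb\le 2$) that produced $g_\kb\ge\beta$ in the proof of Theorem \ref{thm:bounds}; alternatively one may expand $F_{\kb,n}$ through the Perron eigen-decomposition of $\Am_\kb$ and bound the leading coefficient away from zero. Making either route deliver the clean explicit base $\beta^{\,n-1}$ uniformly in $\kb$ is where the real work lies.
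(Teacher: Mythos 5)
Your high-level architecture coincides with the paper's: set aside $u_1$ and $u_2$ (using $c_{\kb,1}\geq 1$ so that $\ell(\phi_\kb(u_2))\leq 1$), bound $\ell(\phi_\kb(u_i))\leq \log_{\beta}(i)+2$ for $i\geq 3$ with $\beta=2-2^{2-L}$ via a uniform exponential lower bound on the cumulative codeword counts, and then sum against $p_i\leq 1/i$; the final bookkeeping is correct. The problem is that the one load-bearing ingredient --- the uniform bound $F_{\kb,n}\geq\beta^{\,n-1}$ (equivalently a pointwise bound $c_{\kb,n}\geq\beta^{\,n-2}$) --- is exactly what you do not prove, and both routes you sketch for it run into concrete obstructions. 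The inequality $c_{\kb,n}\geq 2c_{\kb,n-1}-c_{\kb,n-L}$ obtained by ``discarding the nonnegative overlap terms'' of Corollary \ref{cor1} presupposes $c_{\kb,m}\leq 2c_{\kb,m-1}$, which, unlike the corresponding statement for $s_{\kb,m}$, is not immediate (truncating the last bit of a codeword can violate the type-II ending constraint) and is established nowhere; the recursion is moreover only available for $n\geq L$. Worse, the natural ratio induction $F_{\kb,n}/F_{\kb,n-1}\geq\beta$ already fails at the base cases: for the all-zero UW of length $4$ one has $F_{\kb,1}=2$, $F_{\kb,2}=3$, so $F_{\kb,2}/F_{\kb,1}=3/2<7/4=\beta$. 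The Perron route likewise yields no explicit uniform constant without substantial extra work.

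The paper fills this gap by a different, two-regime argument that never touches the recursion for $c_{\kb,n}$. For $2\leq n\leq L+1$ it shows $c_{\kb,n}\geq 2^{n-2}$ directly: for any $\bb\in\F^{n-2}$ there exist a prefix bit $p$ and a suffix bit $q$ with $p\bb q\in{\cal C}_\kb(n)$, proved by the same parity/cancellation argument on overlapping occurrences of $\kb$ used in the proof of Proposition \ref{prop:connect}. For $n\geq L+2$ it reuses the inequality $c_{\kb,n+2}\geq s_{\kb,n}$ from that same proof, anchored at $s_{\kb,L}=2^L-1$, together with the ratio bound $s_{\kb,n}/s_{\kb,n-1}\geq\beta$ for the \emph{unconstrained} counts --- which does admit a clean induction precisely because $s_{\kb,m}\leq 2s_{\kb,m-1}$ is trivial and the initial ratios are $2$ and $2-2^{1-L}$. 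Combining the two regimes (and checking $2^L-1\geq\beta^L$) gives $c_{\kb,n}\geq\beta^{\,n-2}$ for all $n\geq 2$, hence $F_{\kb,n}\geq 2+(\beta^{\,n-1}-1)/(\beta-1)\geq\beta^{\,n-1}$, which is the estimate your argument needs. Until you supply a proof of that estimate --- by this route or another --- the proposal is incomplete at its crux.
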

\begin{proof}
First, we claim that 
\begin{equation}\label{claim1}
c_{\kb,n}\geq 2^{n-2}\quad\text{for }2\leq n\leq L+1.
\end{equation}
This claim can be established
by showing that for any binary sequence $\bb=b_1 \ldots b_m \in {\mathbb F}^m$, where $0\leq m=n-2 \leq L-1$, there exist a prefix bit $p$ and a suffix bit $q$, where $p$, $q \in {\mathbb F}$, such that $\kb$ is not an internal subword of $\kb p \bb q \kb$.
This can be  done in two steps: i) there exists $q \in \F$ such that $\kb$ is not a subword of $\ub:=\bb q k_1^{L-1}$, and
ii) there exists $p \in \F$ such that $\kb$ is not an internal subword of $\kb p\ub$.

Because the first step trivially holds when $m=0$, we only need to focus on the case of  $m>0$.
Utilizing the prove-by-contradiction argument, we suppose that $\kb$ is a subword of both $\bb q k_1^{L-1}$ and $\bb \bar{q} k_1^{L-1}$, where $\bar{q}=1-q$. This implies the existence of indices $1\leq i <j \leq m+1$ such that 
\[
\underbrace{b_i \cdots b_m q k_1 \cdots k_{L-m+i-2}} _{=\,\db} = \underbrace{b_j \cdots b_m \bar{q} k_1 \cdots k_{L-m+j-2}} _{=\,\tilde\db} = \kb
\]
where we abuse the notations to let
$$
\db=\begin{cases}
\bb q k_1\cdots k_{L-m-1},&\text{if }i=1\\
b_i \cdots b_m q k_1 \cdots k_{L-m+i-2},&\text{if }1<i<m+1\\
q k_1\cdots k_{L-1},&\text{if }i=m+1
\end{cases}
$$
and similar notational abuse is applied to $\tilde\db$ and $j$.
After canceling out common terms in the respective sums of the first $(m+2-i)$ bits of $\db$ and $\tilde\db$, we obtain 
\[
b_i + \cdots + b_{j-1} + q = \bar{q} + k_1+ \cdots + k_{j-i}.
\]
Since $\db=\kb$, the above then implies $q=\bar{q}$, which leads to a contradiction.
The validity of the first step is verified.

After verifying $\ub=\bb q k_1^{L-1}\in{\cal S}_\kb(m+L)$, we can follow the proof of 
Proposition \ref{prop:connect} to confirm the second step (See the paragraph regarding \eqref{p4-1} and \eqref{p4-2}). The claim in \eqref{claim1} is thus validated.
Note that the equality in \eqref{claim1} holds when 
$\kb$ is all-zero or all-one. 

Next, we note also from the proof of Proposition \ref{prop:connect} that
$c_{\kb,n+2} \geq s_{\kb,n}$ for $n\geq L$.
Since $s_{\kb,L}=2^L-1$, we immediately have $c_{\kb,L+2}\geq 2^L-1$.
On the other hand, we can obtain from \eqref{eq:rbounds} 
that\,\footnote{
We can prove \eqref{TBD1} by induction.
Extending the definition of ${\cal S}_\kb(n)$ in \eqref{eq:skb}, we obtain that $s_{\kb,n}=2^n$ for $0 \leq n < L$.
This implies
$$ \frac{s_{\kb,L}}{s_{\kb,L-1}}=\frac{2^L-1}{2^{L-1}}=2-2^{1-L} \geq 2-2^{2-L}$$
and 
$$\frac{s_{\kb,m}}{s_{\kb,m-1}}=\frac{2^m}{2^{m-1}}=2\geq 2-2^{2-L} 
\text{ for all }1\leq m<L.$$
Now we suppose that for some $n\geq L$ fixed, \eqref{TBD1} is true for all $1\leq m\leq n$, i.e.,
$$
\dfrac{s_{\kb,m}}{s_{\kb,m-1}} \geq 2-2^{2-L}\text{ for all }1\leq m\leq n.
$$
Then, we derive by \eqref{eq:rbounds} that
\begin{eqnarray*}
\frac{s_{\kb,n+1}}{s_{\kb,n}} &\geq& 
2-\frac{s_{\kb,n-L+1}}{s_{\kb,n}}\geq 2- \dfrac{s_{\kb,n-L+1}}{s_{\kb,n-L+1}(2-2^{2-L})^{L-1}}\\
&=&2-(2-2^{2-L})^{1-L}
\geq 2-2^{2-L}.
\end{eqnarray*}
This completes the proof of \eqref{TBD1}.
 }
\beq\label{TBD1}
\dfrac{s_{\kb,n}}{s_{\kb,n-1}} \geq 2-2^{2-L}\text{ for }n\geq L.
\eeq
This concludes:
\beq\label{cbound}
    c_{\kb,n} \geq \begin{cases}
	                       1, & \mbox{if }  0 \leq n \leq 1, \\
	                       2^{n-2}, & \mbox{if }  2 \leq n \leq L+1, \\
						   {(2-2^{2-L})}^{n-L-2} (2^L-1), & \mbox{if } n \geq L+2,  
		         \end{cases}
\eeq
where $c_{\kb,0}=1$ because ${\cal C}_{\kb}(0)$ contains only the null codeword,
and $c_{\kb,1}\geq 1$ can be verified again by that $\kb$ cannot be the internal subword
of both $\kb p\kb$ and $\kb\bar p\kb$.\,\footnote{If it were not true, then there exist indices $i$ and $j$, $2\leq i<j\leq L+1$, such that $\kb=k_i\cdots k_L p k_1\cdots k_{i-2}=k_j\cdots k_L \bar p k_1\cdots k_{j-2}$; hence, $p-k_i=\bar p-k_j$ with $k_i=k_j=k_1$. The desired contradiction is obtained.
} 
The lower bound  \eqref{cbound} then indicates that if $2^L-1 \geq (2-2^{2-L})^L$ for $L > 2$, we can immediately have the following
exponential lower bound for $c_{\kb,n}$, i.e., 
\beq\label{cbound2}
    c_{\kb,n} \geq \begin{cases}
	                       1, & \mbox{if }  0 \leq n \leq 1, \\
	                       (2-2^{2-L})^{n-2}, & \mbox{if }  n\geq 2.
		         \end{cases}
\eeq
A stronger claim of $2^L-1 \geq (2-2^{2-L})^L$ for $L > 0$ simply follows from 
$$2^L-1-(2-2^{2-L})^L >2^L-1-2^{L-1}(2-2^{2-L})=1.$$ 
Hence, codeword lengths of the optimal UDOOC code
must satisfy:\,\footnote{
By \eqref{eq:Fkbn} and \eqref{cbound2}, we have that for $i \geq 3$ and $n_i=\ell(\phi_\kb(u_i))$,
\begin{eqnarray*}
i> F_{\kb,n_i-1}=\sum_{t=0}^{n_i-1}c_{\kb,t}\geq 2+\frac{(2-2^{2-L})^{n_i-2}-1}{1-2^{2-L}}
\end{eqnarray*}
which implies
$
\log_{2-2^{2-L}}[(i-2)(1-2^{2-L})+1]+2> n_i=\ell(\phi_\kb(u_i)).
$
Since $(i-2)(1-2^{2-L})+1\leq i$ for $i\geq 2-2^{L-2}$, we obtain
$$\ell(\phi_\kb(u_i))<\log_{2-2^{2-L}}[(i-2)(1-2^{2-L})+1]+2
\leq \log_{2-2^{2-L}}(i)+2.$$
}
\[
\ell(\phi_\kb(u_i)) \leq \log_{2-2^{2-L}}(i)+2 \quad\text{for }i \geq 3. 
\]
Consequently,
\begin{eqnarray}
L_\kb &= & L +  \sum_{i=2}^M p_i \ell(\phi_\kb(u_i)) \nonumber\\
& = & L + p_2 +\sum_{i=3}^M p_i \ell(\phi_\kb(u_i)) \nonumber\\
&\leq & L+p_2 + \sum_{i=3}^M p_i \left(\log_{2-2^{2-L}}(i)+2\right)\nonumber\\
&=& L+2-2p_1-p_2+\sum_{i=3}^M p_i \log_{2-2^{2-L}}(i)\nonumber\\
&\leq & L+2-2p_1-p_2+\sum_{i=3}^M p_i \log_{2-2^{2-L}}\left(\frac{1}{p_i}\right)\nonumber\\
&&\label{e1}\\
&=& L+2-2p_1-p_2\nonumber\\
&&\qquad \qquad +\dfrac{\text{H}({\cal U})+p_1\log_2(p_1)+p_2\log_2(p_2)}{\log_2(2-2^{2-L})}\nonumber
\end{eqnarray}
where \eqref{e1} follows from that $p_1\geq p_2\geq\cdots\geq p_i$ implies
$p_i \leq \frac{1}{i}$ for $1\leq i \leq M$.
\end{proof}

\medskip

We next study the asymptotic compression performance of UDOOCs, i.e., the situation when the source has infinitely many alphabets. Note first that with complete knowledge of the source statistics $\{p_i : i=1, \ldots, M\}$, the upper bound \eqref{secondbound} in Proposition \ref{prop:upper2} can be reformulated using similar arguments as 
\begin{multline}
L_{\kb} \leq L+\dfrac{\text{H}({\cal U})+p_1\log_{2}(p_1)}{\log_2 (g_\kb)}\\
+(1-p_1)(1-\log_{g_\kb}(T_\kb))\label{fourthbound}
\end{multline}
where $T_\kb$ is given by
\beq
T_\kb = \min\left\{g_\kb^{1-n_i}F_{\kb, n_i-1} \, : \, i=2, \cdots, M\right\}, \label{eq:T}
\eeq
and $n_i$ is the smallest integer satisfying $F_{\kb,n_i}\geq \frac{1}{p_i}$. Secondly, we can further extend the above upper bound \eqref{fourthbound} to the case of 
grouping $t$ source symbols (with repetition) to form a new ``grouped" source for UDOOC compression. 
The alphabet set of the new source is therefore ${\cal U}^t$ of size $M^t$.
Let $L_{\kb,t}$ be the per-letter average codeword length of 
UDOOCs for the $t$-grouped source. 
Then, applying \eqref{fourthbound} to the $t$-grouped source yields the following upper bound on $L_{\kb,t}$
\begin{multline}
L_{\kb,t} \leq \frac 1t\bigg(L+\dfrac{\text{H}({\cal U}^t)+q_1\log_{2}(q_1)}{\log_2 (g_\kb)}\\
+(1-q_1)(1-\log_{g_\kb}(T_{\kb,t}))\bigg), \label{eq:Lkbt}
\end{multline}
where 
\beq
T_{\kb,t} = \min\left\{g_\kb^{1-n_{i,t}} F_{\kb, n_{i,t}-1} \, : \, i=2, \cdots, M^t\right\}, \label{eq:Tt}
\eeq
$n_{i,t}$ is the smallest integer satisfying $F_{\kb, n_{i,t}} \geq \frac{1}{q_i}$, 
and $q_i$ is the $i$th largest probability of the grouped source.
For independent and identically distributed (i.i.d.) source, we have
$\text{H}({\cal U}^t)=t\,\text{H}({\cal U})$. Moreover, assuming $M > 1$ and $p_1 < 1$ for the nontrivial i.i.d. sources, we have $q_1 = p_1^t \to 0$ as $t \to \infty$, and 
$T_{\kb,t}$ can be shown to  converge to some finite positive constant
\begin{eqnarray*}
T_{\kb,\infty} &:=& \lim_{t\rightarrow \infty} T_{\kb,t} \\
&=& \left. \dfrac{\underline{x}_\kb^\top \adj \left( \mI - \Am_\kb z \right) \Am_\kb^{L-1} \underline{y}_\kb}{\det\left( \mI - \Am_\kb z \right)}(1-g_\kb z)\right|_{z=g_\kb^{-1}},
\end{eqnarray*}
where the last step follows from 
the conventional expansion theory for power series and also from the fact of
$g_\kb$ being the unique maximal eigenvalue of the adjacency matrix $\Am_\kb$ under $L>2$ (cf.~Proposition~\ref{prop:kUnique}). To elaborate, from the power series expansion, we have that $c_{\kb,n}= \sum_i a_{i,n}\lambda_i^n+c$, where $c$ is some constant, $\{\lambda_i\}$ is the set of nonzero distinct eigenvalues of $\Am_\kb$, and $a_{i,n}$ is the coefficient associated with $\lambda_i$ (which could a polynomial function of $n$ if $\lambda_i$ has algebraic multiplicity larger than one). In particular, assuming $\lambda_1$ is the largest eigenvalue, we can establish
that $T_{\kb,\infty}=a_{1,n}=a_1$, where the second equality emphasizes that $a_{1,n}$ is a constant independent of $n$ since $\lambda_1^{-1}=g_{\kb}^{-1}$ is a simple zero for $h_{\kb}(z)$ when the digraph $G_{\kb}$ is strongly connected.

By taking limits (letting $t \to \infty$) on both sides of \eqref{eq:Lkbt} and by noting that $\lim_{t \to \infty} q_1 = \lim_{t \to \infty} p_1^t = 0$ and $T_{\kb,\infty}$ is some finite positive constant, we summarize the asymptotic compression performance of UDOOCs in the next proposition.

\medskip

\begin{prop}\label{prop:upper5}
Given $g_\kb>1$ and a nontrivial i.i.d.~source, we have
\beq\label{asymbound}
\lim_{t\rightarrow \infty} L_{\kb,t} \leq  \dfrac{\text{H}({\cal U})}{\log_2 (g_\kb)}
\leq \dfrac{\text{H}({\cal U})}{\log_2(2-2^{2-L})}.
\eeq
\end{prop}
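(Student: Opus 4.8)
The plan is to prove the two inequalities in \eqref{asymbound} separately, observing that nearly all of the analytical groundwork has already been laid in the paragraphs establishing \eqref{eq:Lkbt} and the convergence $T_{\kb,t}\to T_{\kb,\infty}$. The rightmost inequality is immediate: Theorem \ref{thm:bounds} supplies the lower bound $g_\kb\geq 2-2^{-(L-2)}=2-2^{2-L}$, so that under $L>2$ we have $2-2^{2-L}>1$ and hence $\log_2(g_\kb)\geq\log_2(2-2^{2-L})>0$. Dividing the nonnegative quantity $\text{H}({\cal U})$ by the smaller positive denominator then yields $\text{H}({\cal U})/\log_2(g_\kb)\leq\text{H}({\cal U})/\log_2(2-2^{2-L})$, which is the claimed inequality.

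For the leftmost inequality I would begin from the finite-$t$ bound \eqref{eq:Lkbt} and pass to the limit summand by summand. First I would insert the i.i.d.\ identity $\text{H}({\cal U}^t)=t\,\text{H}({\cal U})$ together with $q_1=p_1^t$, rewriting the right-hand side of \eqref{eq:Lkbt} as
\[
\frac{L}{t}+\frac{\text{H}({\cal U})}{\log_2(g_\kb)}+\frac{p_1^t\log_2(p_1^t)}{t\log_2(g_\kb)}+\frac{(1-p_1^t)\bigl(1-\log_{g_\kb}(T_{\kb,t})\bigr)}{t}.
\]
Then I would evaluate the limit of each of the four terms as $t\to\infty$: the first tends to $0$; the second is already the target constant $\text{H}({\cal U})/\log_2(g_\kb)$ and is independent of $t$; the third, after writing $\log_2(p_1^t)=t\log_2(p_1)$, collapses to $p_1^t\log_2(p_1)/\log_2(g_\kb)\to 0$ because $0<p_1<1$ forces $p_1^t\to0$; and the fourth tends to $0$ since its numerator stays bounded while the denominator grows. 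Summing the four limits gives $\lim_{t\to\infty}L_{\kb,t}\leq\text{H}({\cal U})/\log_2(g_\kb)$.

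The only delicate point is controlling the third summand, which is of the indeterminate form $0\cdot(-\infty)$; I would resolve it precisely through the factorization $p_1^t\log_2(p_1^t)=t\,p_1^t\log_2(p_1)$, after which the explicit factor $t$ cancels the $1/t$ prefactor and the residual $p_1^t$ drives the vanishing. I would also invoke the facts already established in the discussion preceding the proposition, namely that $T_{\kb,\infty}$ is a finite positive constant and that, under $L>2$ with Proposition \ref{prop:kUnique}, $g_\kb$ is the unique maximal eigenvalue of $\Am_\kb$, so that $\log_{g_\kb}(T_{\kb,t})$ is genuinely well defined and remains bounded as $t\to\infty$, ensuring the fourth term vanishes. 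I expect no substantive obstacle here: the heavy lifting---deriving \eqref{eq:Lkbt} and proving the convergence $T_{\kb,t}\to T_{\kb,\infty}$ via the power-series expansion---is done upstream, and this proposition amounts to a clean limit extraction followed by the one-line monotonicity argument for the rightmost bound.
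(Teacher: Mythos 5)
Your proposal is correct and follows essentially the same route as the paper: the paper obtains the first inequality precisely by taking $t\to\infty$ in \eqref{eq:Lkbt} using $q_1=p_1^t\to 0$ and the convergence of $T_{\kb,t}$ to the finite positive constant $T_{\kb,\infty}$, and the second inequality from the lower bound $g_\kb\geq 2-2^{2-L}$ of Theorem \ref{thm:bounds}. Your term-by-term treatment, in particular the factorization $q_1\log_2(q_1)=t\,p_1^t\log_2(p_1)$ resolving the indeterminate form, simply makes explicit what the paper leaves as a one-line remark.
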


\medskip

Two remarks are made based on Proposition \ref{prop:upper5}.
First, the larger asymptotic bound $\text{H}({\cal U})/\log_2(2-2^{2-L})$ in \eqref{asymbound} immediately gives 
$$\lim_{L\rightarrow\infty}\lim_{t\rightarrow \infty} L_{\kb,t}=\text{H}({\cal U}).$$
Hence, if both $t$ and $L$ are sufficiently large, the per-letter average codeword length of UDOOCs can achieve the entropy rate $\text{H}({\cal U})$ of the i.i.d.~source. 
Secondly, the bound of $\text{H}({\cal U})/\log_2(g_\kb)$ in \eqref{asymbound}
is actually achievable by taking the all-zero UW with the source being uniformly distributed.
In other words,
\beq
\lim_{t\rightarrow \infty} L_{\ab,t} = \dfrac{\text{H}({\cal U})}{\log_2 (g_\ab)}= \dfrac{\log_2(M)}{\log_2 (g_\ab)}, \label{eq:tobeproved}
\eeq
where $\ab=0 \ldots 0$. For better readability, we relegate the proof of \eqref{eq:tobeproved} to Appendix \ref{app:d}.

Tables \ref{ubindEN} and \ref{ubAlice} evaluate the bounds for 
the English text source with letter probabilities from \cite{oxfdic} 
and a true text source from \emph{Alice's Adventure in Wonderland} with empirical frequencies directly obtained from the book, respectively. The source alphabet of the English text and that from 
 \emph{Alice's Adventure in Wonderland} is of size $27$, 
where letters of upper and lower cases are regarded the same and
all symbols other than the 26 English letters are treated as one.
It can be observed from Table \ref{ubindEN} that
bound \eqref{firstbound} is always the best among all three bounds but still has a visible gap
to the resultant average codeword length $L_\kb$.
Table \ref{ubAlice} however 
shows that the three bounds may take turn to be on top of the other two.
For example, under $\kb=\ab$, \eqref{firstbound}, \eqref{secondbound} and \eqref{thirdbound}
are the lowest when $(L,t)=(3,1)$, $(L,t)=(5,2)$ and $(L,t)=(6,3)$, respectively.
Table \ref{ubAlice} also indicates that enlarging the value of $t$ may help
improving the per-letter average codeword length as well as the bounds of UDOOCs.
Comparison of the per-letter average codeword length of UDOOCs with
the source entropy will be provided later in the simulation section.

\begin{table}[!h]
\begin{center}
\caption{Upper bounds \eqref{firstbound}, \eqref{secondbound} and \eqref{thirdbound}
on the average codeword length $L_\kb$ of UDOOCs
for English text source with letter probabilities from \cite{oxfdic}. Here, $\ab=0\cdots 0$ and $\bb=0\cdots 01$.}\label{ubindEN}
\begin{tabular}{ c|c|r r r r}\hline
$\kb$&&  $L=3$ & $L=4$ &  $L=5$ &  $L=6$ \\\hline
&$L_\ab$& 6.432& 7.411& 8.411& 9.411 \\ \cline{2-6}
$\ab$& \eqref{firstbound} &   8.330& 9.330& 10.330& 11.330\\ \cline{2-6}
&\eqref{secondbound}  & 9.606& 10.496& 11.488&12.484\\ \hline
&$L_\bb$  & 5.215& 6.185& 7.185& 8.185 \\ \cline{2-6}
$\bb$ &\eqref{firstbound}   & 6.553& 7.553& 8.553& 9.553 \\ \cline{2-6}
 &\eqref{secondbound}& 10.385&10.206& 10.889& 11.769 \\ \hline
--&\eqref{thirdbound}& 10.831& 10.140& 10.652& 11.456 \\ \hline
\end{tabular} 
\end{center}
\end{table}

\begin{table}[!h]
\begin{center}
\caption{Upper bounds \eqref{firstbound}, \eqref{secondbound} and \eqref{thirdbound}
on the per-letter average codeword length $L_{\kb,t}$ of UDOOCs
for English text source from \emph{Alice's Adventure in Wonderland}. Here, $\ab=0\cdots 0$ and $\bb=0\cdots 01$.}\label{ubAlice}
\begin{tabular}{ c|c|c| r r r r } \hline
$\kb$&    &  & $L=3$ & $L=4$ & $L=5$ & $L=6$ \\ \hline 
& & $t=1$ & 5.773 & 6.757 & 7.757& 7.757  \\ \cline{3-7} 
&$L_{\ab,t}$ & $t=2$ & 4.498& 4.920& 5.397& 5.891 \\ \cline{3-7}
 & & $t=3$ & 3.862& 4.089& 4.388& 4.709\\ \cline{2-7}

& & $t=1$ & 7.459 & 8.459 & 9.459 & 10.459 \\ \cline{3-7}
 $\ab$&\eqref{firstbound}& $t=2$ & 6.569& 7.069& 7.569& 7.608\\ \cline{3-7}
 &  & $t=3$ & 5.770& 5.786& 6.119& 6.134\\ \cline{2-7}
 
  && $t=1$ & 8.700 & 9.596 & 10.585 & 11.580  \\ \cline{3-7} 
&\eqref{secondbound}& $t=2$ & 6.548& 6.886& 7.333& 7.813\\ \cline{3-7}
 & & $t=3$ & 5.771& 5.586& 6.120& 6.135\\ \hline 
 
 && $t=1$ & 4.792 & 5.774 & 6.774 & 7.774  \\ \cline{3-7}
 &$L_{\bb,t}$ & $t=2$ & 3.791& 4.134& 4.598& 5.090 \\ \cline{3-7}
 & & $t=3$ & 3.455& 3.532& 3.802& 4.115\\ \cline{2-7}

 & & $t=1$ & 6.716 & 6.973 & 7.973 & 8.973  \\ \cline{3-7} 
$\bb$&\eqref{firstbound} & $t=2$ & 6.108& 6.147&6.647&7.147\\ \cline{3-7}
 &  &$ t=3 $& 5.452& 5.150& 5.483& 5.816\\ \cline{2-7}

 & & $t=1$ & 9.399 & 9.366 & 10.089 & 10.984  \\ \cline{3-7}
 &\eqref{secondbound} & $t=2$ & 7.356& 6.819&7.040& 7.435 \\ \cline{3-7} 
& &$ t=3$ & 5.453& 5.150& 5.483& 5.817\\ \hline 

 & & $t=1$ & 9.676 & 9.221 & 9.801 & 10.632 \\ \cline{3-7}
 --&\eqref{thirdbound} & $t=2$ & 8.106 & 7.035 & 7.399 & 7.816 \\ \cline{3-7}  
   &        & $t=3$ & 6.947 & 5.815 & 5.726 & 5.889 \\ \hline 
\end{tabular} 
\end{center}
\end{table}

\subsection{General Encoding and Decoding Mappings for UDOOCs}

In this subsection, the encoding and decoding mappings for a UDOOC with 
general UW are introduced.

The practice of UDOOC requires the encoding function $\phi_\kb$ to be a bijective mapping 
between the subset of source letters ${\cal U}_\kb(n):=\{u_m : F_{\kb,n-1} < m \leq F_{\kb,n}\}$
and the set of length-$n$ codewords ${\cal C}_\kb(n)$ for all $n$.
Since the resulting average codeword length will be the same for
any such bijective mapping from ${\cal U}_\kb(n)$ to ${\cal C}_\kb(n)$, we are free to devise one that facilities efficient encoding and decoding of message $u_m$.
The bijective encoding mapping $\phi_\kb$ that we propose is described in the following.

We define for any binary stream $\db$ of length $\leq n$,
\beq\label{eq:cdn} 
{\cal C}_\kb(\db,n) := \left\{ \cb \in {\cal C}_\kb(n) \, : \, 
\db\text{ is a prefix of }\cb,\text{ or }\cb=\db
\right\}.
\eeq
Obviously, ${\cal C}_\kb(\db,n)\cap{\cal C}_\kb(\tilde\db,n)=\emptyset$
for every pair of distinct $\db$ and $\tilde\db$ of the same length,
and for any fixed $i$ with $1\leq i\leq n$,
\beq
{\cal C}_\kb(n) \ = \ \bigcup_{\db \in \F^i} {\cal C}_\kb(\db,n).\label{eq:prefixdecom} 
\eeq
Then, given message $u_m \in {\cal U}_\kb(n)$, i.e., the number $n$ is chosen such that $F_{\kb,n-1} < m \leq F_{\kb,n}$, the proposed encoding mapping $\phi_\kb$ produces the codeword $\phi_\kb(u_m)=c_1c_2\cdots c_n$ for source letter $u_m$ 
recursively according to the rule that for $i=1,2,\ldots,n$,
\beq\label{examine}
c_{i} = \left\{\begin{array}{cl}
0, & \text{ if $\rho_{i-1} \leq \abs{{\cal C}_\kb(c_1\cdots c_{i-1}0,n)}$}\\
1, & \text{ if $\rho_{i-1} > \abs{{\cal C}_\kb(c_1\cdots c_{i-1}0,n)}$}
\end{array}
\right.
\eeq
where the progressive metric $\rho_i$ is also maintained recursively as:
\begin{eqnarray}
\rho_{i} & := & \rho_{i-1} - c_i \abs{{\cal C}_\kb(c_1\cdots c_{i-1} 0,n)} \nonumber\\
&=& \left\{\begin{array}{ll}
\rho_{i-1}, & \text{ if $c_i=0$}\\
\rho_{i-1} - \abs{{\cal C}_\kb(c_1\cdots c_{i-1}0,n)}, & \text{ if $c_i=1$}
\end{array}
\right.\nonumber\\
\label{rhoupdate}
\end{eqnarray}
with an initial value $\rho_0 \  = \ m - F_{\kb,n-1}$.
This encoding mapping actually assigns codewords according to their lexicographical ordering.

\begin{ex}
Taking $\kb=010$ as an example, we can see
from Fig. \ref{010pctree} that the seven codewords of length $4$, i.e., 
$0000$, $0011$, $0110$, $0111$, $1100$, $1110$ and $1111$, will be respectively assigned 
to source letters $u_9$, $u_{10}$, $u_{11}$, $u_{12}$, $u_{13}$, $u_{14}$ and $u_{15}$.
The progressive metrics $\rho_0,\rho_1,\rho_2,\rho_3$ for source letter $u_{11}$ are $3,3,1,1$, respectively, 
with $\abs{{\cal C}_\kb(0,4)}=4$, $\abs{{\cal C}_\kb(00,4)}=2$, 
$\abs{{\cal C}_\kb(010,4)}=0$ and $\abs{{\cal C}_\kb(0110,4)}=1$.
\qed
\end{ex}
\medskip

Note again that given $m$ (equivalently, $u_m$), $n$ can be determined
via $F_{\kb,n-1} < m \leq F_{\kb,n}$.
At the end of the $n$th recursion, we must have
\beq
m \ = \ F_{\kb,n-1} + \sum_{i=1}^{n} c_i\abs{{\cal C}_\kb(c_1^{i-1}0,n)} + 1. \label{eq:meq}
\eeq
We emphasize that \eqref{eq:meq} actually gives the corresponding computation-based decoding function $\psi_\kb : {\cal C}_\kb(n) \to {\cal U}_\kb(n)$ for codewords $\cb$ of length $n$. 

One straightforward way to implement $\phi_\kb$ and $\psi_\kb$ is to pre-store the value
of $\abs{{\cal C}_\kb(\db 0,n)}$ for every $\db$ and $n$.
By considering the huge number of all possible prefixes $\db$ for each $n$,
this straightforward approach does not seem to be an attractive one.

Alternatively, we find that $\abs{{\cal C}_\kb(\db 0,n)}$ can be 
obtained through adjacency matrix $\Am_\kb$ introduced in Section \ref{sec:2}.
The advantage of this alternative approach is that there is no need to pre-store or pre-construct 
any part of the codebook ${\cal C}_\kb$, and the value of $\abs{{\cal C}_\kb(\db 0,n)}$ is computed only when it is required during the encoding or decoding processes.
Moreoever, for specific UWs such as $00\ldots0$, $00\ldots01$, and their binary complements,
we can further reduce  the required computations.

In the following subsections, we will first introduce the encoding and decoding algorithms
for specific UWs as they can be straightforwardly understood. Algorithms for general UWs require an additional computation of $\abs{{\cal C}_\kb(\db 0,n)}$ and will be presented in subsequent subsections.  

\subsection{Encoding and Decoding Algorithms for $\text{UW}=11\ldots 1$ }

It has been inferred from Proposition \ref{prop:comp} that
the encoding and decoding of the UDOOC with UW $\kb=00\ldots0$
can be equivalently done through the encoding and decoding 
of the UDOOC with UW $\kb=11\ldots1$
as one can be obtained from the other by binary complementing.
Thus, we only focus on the case of $\kb=11\ldots1$ in this subsection. 

For this specific UW, we observe that
a codeword $\cb = \db0\bb \in {\cal C}_\kb(\db0,n)$ if, and only if,
$0 \bb \in {\cal C}_\kb(n-\ell(\db))$ is a codeword 
of length $n-\ell(\db)$, where $\ell(\db)$ is the length of prefix bitstream $\db$.
We thus obtain
\beq\label{eq:case1p}
\abs{{\cal C}_\kb(\db0,n)} = c_{\kb,n-\ell(\db)}. 
\eeq
It can be shown that the LCCDE for $c_{\kb,n}$ with $\kb=11\ldots 1$ is 
\beq
c_{\kb,n} = \sum _{i=1}^L c_{\kb,n-i}, \ \textnormal{ for all $n > L$,} \label{eq:case1ckbn1}
\eeq
where the initial values are 
\beq
c_{\kb,n} \ = \ \left\{
\begin{array}{ll}
1, & n=0,1,2,\\
2^{n-2}, & n=3,\ldots,L.
\end{array} \right. \label{eq:case1ckbn2}
\eeq
Based on \eqref{eq:case1p}, \eqref{eq:case1ckbn1} and \eqref{eq:case1ckbn2},
the algorithmic encoding and decoding procedures can be described below.

\medskip

\begin{algorithm}[h!] 
\caption{Encoding of UDOOC with $\kb=11\ldots1$} \label{alg:1}
\begin{algorithmic}[1]
\REQUIRE Index $m$ for message $u_m$
\ENSURE Codeword $\phi_\kb(u_m)=c_1 \ldots c_n$
\STATE Compute $c_{\kb,0},c_{\kb,1}, c_{\kb,2},\ldots$ using \eqref{eq:case1ckbn1} and \eqref{eq:case1ckbn2} to determine the smallest $n$ such that $F_{\kb,n}\geq m$. If $n=0$, then $\phi_\kb(u_m)=\text{null}$ and stop the algorithm.
\STATE Initialize $\rho_0 \leftarrow m-F_{\kb,n-1}$
\FOR{$i=1$ to $n$}
\IF{$\rho_{i-1} \leq c_{\kb,n-i+1}$}
\STATE $c_{i} \leftarrow 0$ and $\rho_{i} \leftarrow \rho_{i-1}$
\ELSE
\STATE $c_{i} \leftarrow 1$ and $\rho_{i} \leftarrow \rho_{i-1}-c_{\kb,n-i+1}$ 
\ENDIF
\ENDFOR
\end{algorithmic}
\end{algorithm}

\medskip

\begin{algorithm}[h!] 
\caption{Decoding of UDOOC with $\kb=11\ldots1$} \label{alg:2}
\begin{algorithmic}[1]
\REQUIRE Codeword $\cb=c_1 \ldots c_n$
\ENSURE Index $m$ for message $u_m=\psi_\kb(\cb)$
\STATE Compute $c_{\kb,0},c_{\kb,1}, \ldots, c_{\kb,n}$ using \eqref{eq:case1ckbn1} and \eqref{eq:case1ckbn2} 
\STATE Initialize $m \leftarrow F_{\kb,n-1}+1$
\FOR{$i=1$ to $n$}
\IF{$c_{i}=1$}
\STATE $m\leftarrow m + c_{\kb,n-i+1}$
\ENDIF
\ENDFOR
\end{algorithmic}
\end{algorithm}

\subsection{Encoding and Decoding Algorithms for $11\cdots 10$ }

Again, Proposition \ref{prop:comp} infers
that the encoding and decoding of the UDOOC with UW $\kb=00\ldots 01$
can be equivalently done through the encoding and decoding 
of the UDOOC with UW $\kb=11\ldots 10$.
We simply take $\kb=11\cdots 10$ for illustration.

It can be derived from \eqref{eq:hkbz} that for $\kb=11\cdots 10$,
\beq
c_{\kb,n} = 2c_{\kb,n-1}-c_{\kb,n-L} \label{eq:case2ckbn1}
\eeq
with initial condition
\beq
c_{\kb,n}= \left\{
\begin{array}{ll}
1, & n=0\\
2^n, & n=1, \ldots, L-1\\
2^L-1, & n=L.
\end{array} \right. \label{eq:case2ckbn2}
\eeq
It remains to determine $\abs{{\cal C}_\kb(\db 0,n)}$.
Observe that 
 $\db 0 \bb \in {\cal C}_\kb(\db 0,n)$ if, and only if, $ \bb \in {\cal C}_\kb(n-\ell(\db)-1)$;
 hence, $ \vert {\cal C}_\kb(\db 0,n) \vert = c_{\kb,n-\ell(d)-1}$.  
 We summarize the encoding and decoding algorithms of UDOOCs with $\kb=11\ldots 10$
 in Algorithms \ref{alg:3} and \ref{alg:4}, respectively.
 
\medskip

\begin{algorithm}[h!]
\caption{Encoding of UDOOC with $\kb=11\cdots 10$} \label{alg:3}
\begin{algorithmic}[1]
\REQUIRE Index $m$ for message $u_m$
\ENSURE Codeword $\phi_\kb(u_m)=c_1 \ldots c_n$
\STATE Compute $c_{\kb,0},c_{\kb,1}, c_{\kb,2},\ldots$ using \eqref{eq:case2ckbn1} and \eqref{eq:case2ckbn2} to determine the smallest $n$ such that $F_{\kb,n} \geq m$.
If $n=0$, then $\phi_\kb(u_m)=\text{null}$ and stop the algorithm.
\STATE Initialize $\rho_0 \leftarrow m-F_{\kb,n-1}$
\FOR{$i=1$ to $n$}
\IF{$\rho_{i-1} \leq c_{\kb,n-i}$}
\STATE $c_{i} \leftarrow 0$ and $\rho_{i} \leftarrow \rho_{i-1}$
\ELSE
\STATE $c_{i} \leftarrow 1$ and $\rho_{i} \leftarrow \rho_{i-1}-c_{\kb,n-i}$
\ENDIF
\ENDFOR
\end{algorithmic}
\end{algorithm}

\medskip

\begin{algorithm}[h!]
\caption{Decoding of UDOOC with $\kb=11\cdots 10$} \label{alg:4}
\begin{algorithmic}[1]
\REQUIRE Codeword $\cb=c_1 \ldots c_n$
\ENSURE Index $m$ for message $u_m=\psi_\kb(\cb)$
\STATE Compute $c_{\kb,0},c_{\kb,1}, \ldots, c_{\kb,n}$ using \eqref{eq:case2ckbn1} and \eqref{eq:case2ckbn2} 
\STATE Initialize $m \leftarrow F_{\kb,n-1}+1$
\FOR{$i=1$ to $n$}
\IF{$c_{i}=1$}
\STATE $m\leftarrow m + c_{\kb,n-i}$
\ENDIF
\ENDFOR
\end{algorithmic}
\end{algorithm}

\subsection{Encoding and Decoding Algorithms for General UW $\kb$}

It is clear from the discussions in the previous two subsections as well as from 
\eqref{examine} that to determine $c_i$ in the encoding algorithm, we only need to keep track of the most recent $\rho_{i-1}$, instead of retaining sequentially all of $\rho_0,\ldots,\rho_{i-2}$.
We address the recursion for the update of $\rho_i$ in \eqref{rhoupdate}
only to facilitate our interpretation on the operation of the progressive metric.
The same approach will be followed in the presentation of the general encoding algorithm below,
where a progressive matrix $\Dm_i$ is used in addition to the progressive metric $\rho_i$.

The encoding algorithm for general UWs consists of two phases.
Given the index $m$, we first identify the smallest $n$ 
such that $F_{\kb,n} \geq m$. Note that the computation of $F_{\kb,n}$ requires the knowledge of $c_{\kb,n}$,
which can be recursively obtained using the LCCDE in \eqref{eq:recdkbn}. 
In the second phase, as seen from the two previous subsections,
we need to determine the cardinality of ${\cal C}_\kb(\db, n)$ for any prefix $\db$ with $\ell(\db)\leq n$. Thus, our target in this subsection is to provide an expression for
$\abs{{\cal C}_\kb(\db, n)}$ that holds for general $\kb$ and $\db$.

Define $E_{\kb,0}$ and $E_{\kb,1}$ for digraph $G_\kb=(V,E_\kb)$ as 
\bea
E_{\kb,0} &:=& \left\{ ({\boldsymbol i}, {\boldsymbol j}) \in E_\kb \ : \  j_{L-1} = 0
\right\} \label{eq:e0} \\
E_{\kb,1} &:=& \left\{ ({\boldsymbol i}, {\boldsymbol j}) \in E_\kb \ : \  j_{L-1} = 1
\right\}.\label{eq:e1}
\eea
Literally speaking, $E_{\kb,0}$ (resp. $E_{\kb,1}$) is the set of edges in $E_\kb$,
whose ending vertex has its last bit $j_{L-1}$ equal to $0$ (resp. $1$). 
Let $\Am_{\kb,0}$ and $\Am_{\kb,1}$ 
be the adjacency matrices respectively for digraphs $G_{\kb,0}=(V,E_{\kb,0})$ and $G_{\kb,1}=(V,E_{\kb,1})$. Obviously, $\Am_{\kb}=\Am_{\kb,0}+\Am_{\kb,1}$. 
Based on the two adjacency matrices, we derive
\beq\label{recurc}
\abs{{\cal C}_\kb(\db,n)} \ = \ \xb_\kb^\top \Dm_{i} \Am_\kb^{(n+L-1)-i}  \underline{y}_\kb
\eeq
where for a prefix stream $\db=d_1 \ldots d_i$, 
\beq\label{recurd}
\Dm_i := \prod_{t=1}^i \Am_{\kb,d_t},
\eeq
and
$\xb_\kb$ and $\yb_\kb$ are the initial and ending vectors for digraph $G_\kb$ defined in Section \ref{sec:3a}. 
With \eqref{recurc} and \eqref{recurd},
the general encoding and decoding algorithms
are given in Algorithms \ref{alg:5} and \ref{alg:6}, respectively. Verification of the two algorithms is relegated to Appendix \ref{app:c} for better readability. 

\medskip

\begin{algorithm}[h!]
\caption{Encoding of UDOOC with General $\kb$} \label{alg:5}
\begin{algorithmic}[1]
\REQUIRE Index $m$ for message $u_m$
\ENSURE Codeword $\phi_\kb(u_m)=c_1 \ldots c_n$
\STATE Compute $c_{\kb,0},c_{\kb,1}, c_{\kb,2},\ldots$ using \eqref{eq:recdkbn} and the method in Section \ref{sec:ex} to determine the smallest $n$ such that $F_{\kb,n} \geq m$.
If $n=0$, then $\phi_\kb(u_m)=\text{null}$ and stop the algorithm.
\STATE Initialize $\rho_0 \leftarrow m-F_{\kb,n-1}$ and $\Dm_{0} \leftarrow \mI$
\FOR{$i=1$ to $n$}
\STATE Compute $ \emph{dummy}\leftarrow \xb_\kb^\top \Dm_{i-1} \Am_{\kb,0} \Am_\kb^{(n+L-1)-i} \underline{y}_\kb$ 
\IF{$\rho_{i-1} \leq \emph{dummy}$ }
\STATE $c_{i} \leftarrow 0$, $\rho_{i} \leftarrow \rho_{i-1}$ and $ \Dm_{i} \leftarrow \Dm_{i-1} \Am_{\kb,0}$
\ELSE
\STATE $c_{i} \leftarrow 1$, $\rho_i \leftarrow \rho_{i-1} -\emph{dummy}$ and $\Dm_{i} \leftarrow \Dm_{i-1} \Am_{\kb,1}$
\ENDIF
\ENDFOR
\end{algorithmic}
\end{algorithm}

\begin{algorithm}[h!]
\caption{Decoding of UDOOC with General $\kb$} \label{alg:6}
\begin{algorithmic}[1]
\REQUIRE Codeword $\cb=c_1 \ldots c_n$
\ENSURE Index $m$ for message $u_m=\psi_\kb(\cb)$
\STATE Compute $c_{\kb,0},c_{\kb,1}, \ldots, c_{\kb,n}$ using \eqref{eq:recdkbn} and the method in Section \ref{sec:ex}. 
\STATE Initialize $m \leftarrow F_{\kb,n-1}+1$ and $\Dm_0 \leftarrow \mI$
\FOR{$i=1$ to $n$}
\IF{$c_{i}=1$}
\STATE $m\leftarrow m + \xb_\kb^\top \Dm_{i-1} \Am_{\kb,0}  \Am_\kb^{(n+L-1)-i} \underline{y}_\kb$
\ENDIF
\STATE $\Dm_i \leftarrow \Dm_{i-1} \Am_{\kb,c_i}$
\ENDFOR
\end{algorithmic}
\end{algorithm}

\subsection{Exemplified Realization of the Encoding and Decoding Algorithms 
for General UW $\kb$} \label{sec:ex}

The matrix expressions in \eqref{recurc} and \eqref{recurd}
facilitate the presentation of Algorithms \ref{alg:5} and \ref{alg:6} for general UW;
however, their
implementation involves extensive computation of matrix multiplications.
Since the entries in each row or column of $\Am_\kb$ are all $0$'s except for at most two $1$'s,
the complexity 
of computing 
\beq
c_{\kb,n}=\xb_\kb^\top\Am_\kb^{n+L-1}\yb_\kb
\eeq
 and 
 \beq
 \abs{{\cal C}_\kb(\db0,n)}=\xb_\kb^\top \Dm_{\ell(\db)} \Am_{\kb,0} \Am_\kb^{n+L-\ell(\db)-2}  \underline{y}_\kb \label{eq:computedkbdb0n}
 \eeq
is in fact relatively small. Furthermore, it is much easier to compute $c_{\kb,n}$ than $\abs{{\cal C}_\kb(\db0,n)}$. To see this, note from \eqref{eq:enumerate} that we have the following enumeration for $c_{\kb,n}$
\[
\sum_{n=0}^\infty c_{\kb,n} z^n  \ = \ \frac{\underline{x}_\kb^\top \adj\left( \mI - \Am_\kb z \right) \Am_\kb^{L-1} \yb_\kb}{\det\left( \mI - \Am_\kb z \right) }. 
\]
Thus, simply evaluating the RHS of the above equation gives the values of  $c_{\kb,n}$ for $n=1,2,\ldots,L-1$. The remaining values of $c_{\kb,n}$ for $n \geq L$ can be easily determined through the recursion formula \eqref{eq:recdkbn}. 

Another way to compute the values of $c_{\kb,n}$ can be easily obtained by modifying the algorithm for computing the values of $\abs{{\cal C}_\kb(\db0,n)}$, which we now discuss. The first step to compute $\abs{{\cal C}_\kb(\db0,n)}$ is to break up formula \eqref{eq:computedkbdb0n} into:
\beq
\abs{{\cal C}_\kb(\db0,n)}= \underbrace{(\xb_\kb^\top \Dm_{\ell(\db)})}_{\underline{u}^\top} \Am_{\kb,0} \Am_\kb^{n-\ell(\db)-1}  \underbrace{(\Am_\kb^{L-1}\yb_\kb)}_{\underline{w}_\kb}. \label{eq:break}
\eeq

We note that from the choice of $\db$ in the encoding algorithm \ref{alg:5}, we must have $\abs{{\cal C}_\kb(\db,n)} \geq 1$.\footnote{Given any choice of prefix $\db$, it is possible that $\abs{{\cal C}_\kb(\db,n)}=0$ if $\db \not\in {\cal C}_\kb$, and  in this case we have $\underline{u}= \underline{0}$ in \eqref{eq:break}. However, the prefix $\db$ considered in our encoding algorithm, Algorithm \ref{alg:5}, is always a prefix of some codeword; hence we have 
$\abs{{\cal C}_\kb(\db,n)} > 0$.} Hence, $\underline{u}=\Dm_{\ell(\db)}^\top\xb_\kb$ is actually a zero-one indication vector of length $2^{L-1}$ for the rightmost $(L-1)$ bits of $k_2^L\db$, i.e., 
all components of vector $\underline{u}=[u_1\ u_2\ \cdots u_{2^{L-1}}]^\top$
are $0$'s except the $(j+1)$th component (being $1$'s), where 
$j$ is the integer corresponding to the binary 
representation of the rightmost $(L-1)$ bits of $k_2^L\db$. Hence, $\underline{u}$ can be directly determined without any computation.
In addition, we can pre-compute $\underline{w}_\kb$
 since it is the same for all $n$ and $\db$.
Our task is therefore reduced to computing the value of 
$$\abs{{\cal C}_\kb(\db0,n)}= \underline{u}^\top \Am_{\kb,0} \Am_\kb^{n-\ell(\db)-1}\underline{w}_\kb.$$
Below, we demonstrate how to utilize a finite state machine based on the digraph $G_\kb$ to evaluate $\abs{{\cal C}_\kb(\db0,n)}$ without resorting to matrix operations. 

Notations that are used to describe the finite state machine are addressed first.
Let $\Sf= \lbrace s_{00 \cdots 0},s_{00 \cdots 01},\cdots,s_{11 \cdots 1} \rbrace$ be the set of states indexed by all binary bit-streams of length $L-1$.
We say $s_{\boldsymbol{i}}=s_{i_1\cdots i_{L-1}}$ is a \emph{counting state} if the $(i+1)$th component of $\underline{w}_\kb$ is $1$, where
$i$ is the integer corresponding to binary representation of $\ib=i_1\ldots i_{L-1}$.\footnote{
Here we implicitly use a fact that $\underline{w}_\kb$ is a binary zero-one vector. 
Note that the $(i+1)$th component of $\underline{w}_\kb=[w_1\ \ w_2\ \ \cdots\ \  w_{2^{L-1}}]^\top=\Am_\kb^{L-1} \underline{y}_\kb$ is equal to the number of distinct walks from vertex $i_1\cdots i_{L-1}$ to vertex $k_1\cdots k_{L-1}$ on digraph $G_\kb$. 
This fact follows since there is at most one walk of length $(L-1)$ between the above two vertexes.}
Denote by $\Cf_\kb$
the set of all counting states corresponding to $\kb$. 
Also, for each state $s_\kb \in \Sf$ we define 
\bean
{\cal I}(s_{\ib}) &=& \left\{ s_\jb \ : \ (\jb, \ib) \in E_\kb\right\},\\
{\cal O}(s_{\ib}) &=& \left\{ s_\jb \ : \ (\ib, \jb) \in E_\kb\right\},\\
{\cal I}_b(s_{\ib}) &=& \left\{ s_\jb \ : \ (\jb, \ib) \in E_{\kb,b}\right\},\\
{\cal O}_b(s_{\ib}) &=& \left\{ s_\jb \ : \ (\ib, \jb) \in E_{\kb,b}\right\},
\eean
for $b=0,1$, where the edge-sets $E_{\kb,0}$ and $E_{\kb,1}$ are defined in \eqref{eq:e0} and \eqref{eq:e1}, respectively. Literally speaking, from digraph $G_\kb$, ${\cal I}(s_{\ib})$ is the set of states that link directionally to $s_\ib$, ${\cal O}(s_\ib)$ is the set of states that are linked directionally by $s_\ib$, and ${\cal I}_0 (s_\ib)$ is the set of states that link to $s_\ib$ via a so-called $0$-edge in $E_{\kb,0}$.
The sets ${\cal I}_1(s_\ib)$, ${\cal O}_0(s_\ib)$ and ${\cal O}_1(s_\ib)$ all have in a similar meaning. 

In our state machine, we associate each state $s_\ib$ with an integer.
Without ambiguity, we use $s_\ib$ to also denote the integer associated with it.
Define an operator $\Xi_\kb: \Sf \to \Sf$, which updates the value associated with each  state according to:
$$\Xi_\kb \ : \ s_\ib \leftarrow \sum_{s_\jb \in {\cal I}(s_\ib)} s_\jb\quad\text{for all }s_\ib \in \Sf.$$
It should be noted that the operator $\Xi_\kb$ updates all states in $\Sf$ in a  parallel fashion. Also, if ${\cal I}(s_\ib)$ is an empty set,
operator $\Xi_\kb$ would set $s_\ib\leftarrow 0$.
We similarly define operators $\Xi_{\kb,0}$ and $\Xi_{\kb,1}$ respectively as
$$\Xi_{\kb,0} \ : \ s_\ib \leftarrow \sum_{s_\jb \in {\cal I}_0(s_\ib)} s_\jb
\text{ and }\ \Xi_{\kb,1} \ : \ s_\ib \leftarrow \sum_{s_\jb \in {\cal I}_1(s_\ib)} s_\jb.$$

An example is provided below to help clarify these notations.

\medskip

\begin{ex} For UW $\kb=000$ of length $L=3$, there are four possible states in $\Sf = \lbrace s_{00},s_{01},s_{10},s_{11}\rbrace$.
Because $\Am_\kb^{L-1} \underline{y}_\kb =\begin{bmatrix} 0&1&0&1 \end{bmatrix}^\top$, we have $\Cf_\kb= \lbrace s_{01},s_{11} \rbrace$. 
Create the $0$-edges and $1$-edges of the digraph in Fig. \ref{AlgEx}. Table \ref{ex3} 
then  shows  ${\cal I}(s_\ib)$, ${\cal O}(s_\ib)$, ${\cal I}_0(s_\ib)$, ${\cal I}_1(s_\ib)$, ${\cal O}_0(s_\ib)$ and ${\cal O}_1(s_\ib)$ for each $s_\ib$. 

	\begin{figure}
		\begin{center}
		\includegraphics[width=0.5\textwidth]{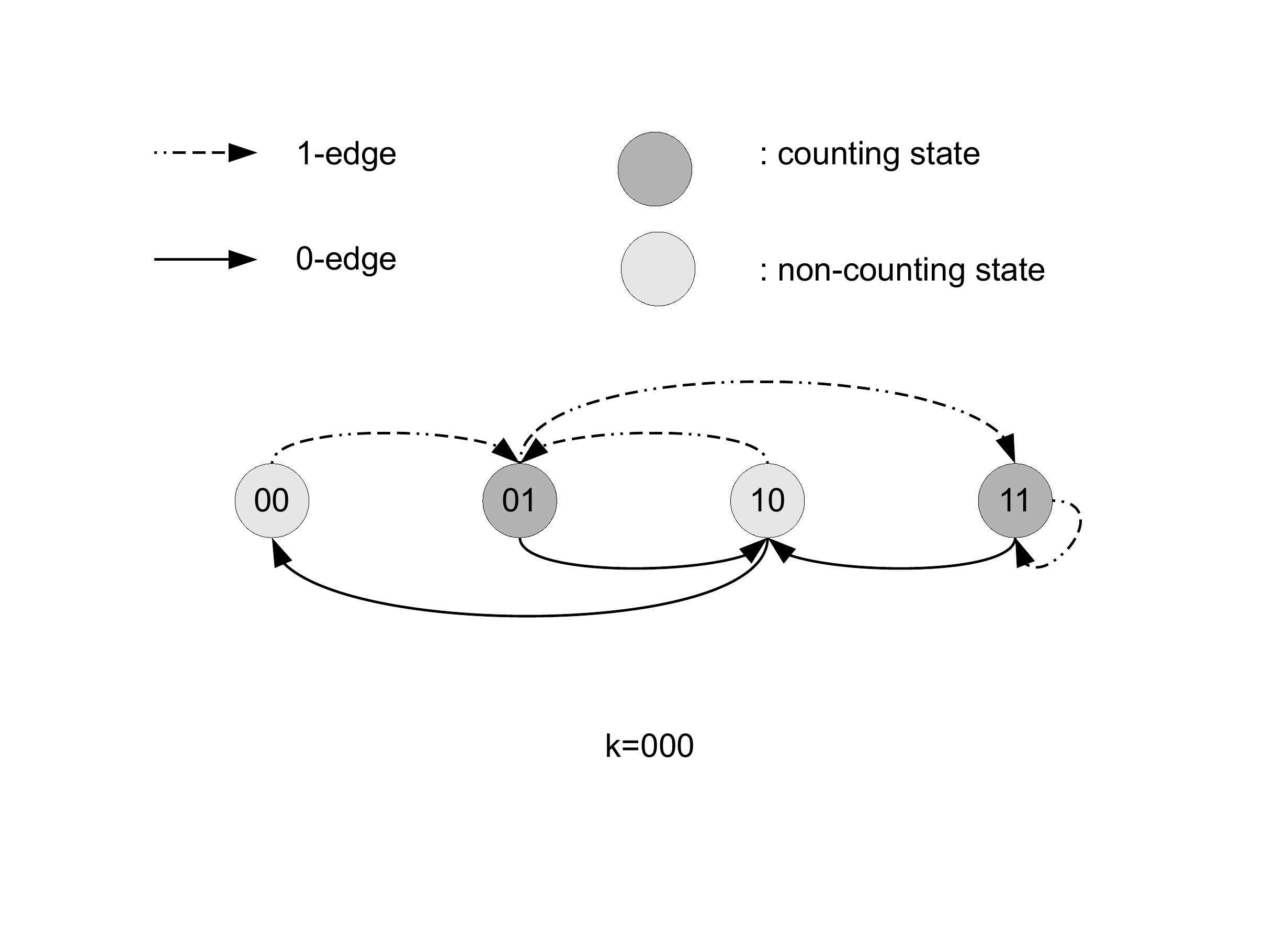}
		\caption{Digraph $G_{000}$ for UW $\kb=000$}\label{AlgEx}
		\end{center}
	\end{figure}

\begin{table}[!h]
\begin{center}
\caption{Various state sets for UW $\kb=000$}\label{ex3}
\scriptsize
\begin{tabular}{|c|c|c|c|c|}
\hline 
$s_\ib $ & $s_{00}$ & $s_{01}$ & $s_{10}$ & $s_{11}$ \\ 
\hline 
${\cal I}(s_\ib)$& $\lbrace s_{10}\rbrace$ & $\lbrace s_{00},s_{10} \rbrace$ & $\lbrace s_{01},s_{11} \rbrace$ & $\lbrace s_{01},s_{11} \rbrace$ \\ 
\hline 
${\cal O}(s_\ib)$& $\lbrace s_{01} \rbrace$ & $\lbrace s_{10},s_{11} \rbrace$ & $\lbrace s_{00},s_{01} \rbrace$ & $\lbrace s_{10},s_{11}\rbrace$ \\ 
\hline 
${\cal I}_0(s_\ib)$& $\lbrace s_{10} \rbrace$ & $\lbrace \rbrace$ & $\lbrace s_{01},s_{11}\rbrace$ & $\lbrace \rbrace$ \\ 
\hline 
${\cal I}_1(s_\ib)$& $\lbrace \rbrace$ & $\lbrace s_{00},s_{10}\rbrace$ & $\lbrace \rbrace$ & $\lbrace s_{01},s_{11}\rbrace$ \\ 
\hline 
${\cal O}_0(s_\ib)$& $\lbrace \rbrace$ & $\lbrace s_{10} \rbrace$ & $\lbrace s_{00}\rbrace$ & $\lbrace s_{10}\rbrace$ \\ 
\hline 
${\cal O}_1(s_\ib)$& $\lbrace s_{01} \rbrace$ & $\lbrace s_{11} \rbrace$ & $\lbrace s_{01}\rbrace$ & $\lbrace s_{11}\rbrace$ \\ 
\hline 
\end{tabular} 
\end{center}
\end{table}

According to the first row in Table \ref{ex3}, the operator $\Xi_\kb$ simultaneously updates all states in $\Sf$ according to
$$\Xi_\kb \ : \ \begin{array}{l}
s_{00}  \leftarrow  s_{10}\\
s_{01}  \leftarrow  s_{00} + s_{10}\\
s_{10}  \leftarrow  s_{01}+s_{11} \\
s_{11}  \leftarrow  s_{01}+s_{11}.
\end{array}$$
Likewise, the operators $\Xi_{\kb,0}$ and $\Xi_{\kb,1}$ simultaneously update all states in $\Sf$ according to
$$
\Xi_{\kb,0} \ : \ \begin{array}{l}
s_{00}  \leftarrow  s_{10}\\
s_{01}  \leftarrow  0 \\
s_{10}  \leftarrow  s_{01}+s_{11}\\
s_{11}  \leftarrow  0
\end{array}
\text{ and } \quad
\Xi_{\kb,1} \ : \ \begin{array}{l}
s_{00}  \leftarrow  0\\
s_{01}  \leftarrow  s_{00} + s_{10}\\
s_{10}  \leftarrow  0\\
s_{11}  \leftarrow  s_{01}+s_{11}.
\end{array}$$ 
\qed
\end{ex}

\medskip

With the above, we now demonstrate how to compute $c_{\kb,i}$ and $\abs{{\cal C}_\kb(\db0,n)}$ using the finite state machine. Note $c_{\kb,n}=\xb_\kb^\top \Am_\kb^{n} \underline{w}_\kb$; hence to compute $c_{\kb,n}$, the states are initialized such that $s_{k_2\ldots k_L}=1$ and $s_\ib=0$ for all remaining $\ib\neq k_2^L$.
Note that these initial values correspond exactly to the component values of  vector $\xb_\kb$. Next we apply $n$ times the operator $\Xi_\kb$ to update the states in $\Sf$. It can be seen that the resulting values of the states correspond exactly to the contents of the row vector $\xb_\kb^\top \Am_\kb^n$. Thus, the value of $c_{\kb,n}$ can be obtained by summing the values of the counting states. Again, we remark that 
we only need the finite state machine for computing the values of $c_{\kb,n}$ for $n=1,2,\ldots,L-1$, as the values of $c_{\kb,n}$ for $n \geq L$ can be easily determined by the recursion formula \eqref{eq:recdkbn}. 

On the other hand, to compute 
\[
 \abs{{\cal C}_\kb(\db 0,n)}=\underline{u}^\top \Am_{\kb,0} \Am_\kb^{n-\ell(\db)-1} \underline{w}_\kb
\] 
for a given prefix $\db$, 
we initialize 
the values associated with all states to be zero except $s_{u_{m-L+2} \cdots u_{m}}=1$,
where $u_{m-L+2} \cdots u_{m}$ is the rightmost $(L-1)$ elements in $\ub=u_1\ldots u_m=k_2^L\db$. 
Apply the operator $\Xi_{\kb,0}$ to all states in $\Sf$ once, followed by
updating all the states $(n-\ell(\db)-1)$ times via operator $\Xi_\kb$.
Then, the sum of the values of all counting states equals $\abs{{\cal C}_\kb(\db 0,n)}$.

\section{Practice and performance of UDOOCs}\label{sec:5}

In Fig.~\ref{result2345}, we compare the numbers of length-$n$ codewords
for all UWs of lengths $L=2$, $3$, $4$ and $5$.
These  numbers 
are plotted in logarithmic scale and are normalized against the number of length-$n$ codewords for the all-zero UW $\kb=0\ldots 00$ to facilitate 
their comparison.
By the equivalence relation defined in Definition \ref{defn:eqrln},
only one UW in each equivalence class needs to be illustrated.
We have the following observations.

\ben
\item The logarithmic ratio $\log_2(c_{\kb,n}/c_{\ab,n})$, where $\ab=0\ldots 00$, exhibits some transient fluctuation for $n\leq L$ but becomes a steady straight line of negative slope after $n>L$.
This hints that $c_{\kb,n}$ has a steady exponential growth when $n$ is beyond $L$.

\item The number $c_{\bb,n}$, where $\bb=00\ldots 01$, is 
always the largest among all $c_{\kb,n}$ when $n$ is small.
However, this number has an apparent trend to be overtaken by those of other UWs as $n$ grows and will be eventually smaller than the number of length-$n$ codewords for the all-zero UW. This result matches the statement of 
Theorem \ref{thm:2}. 

\item As a contrary, the number $c_{\ab,n}$ for the all-zero UW $\ab=0\ldots 00$
is the smallest among all $c_{\kb,n}$ for UWs of the same length when $n$ is small.
Although Theorem \ref{thm:2} indicates that this number will eventually be  the largest,
Fig.~\ref{result2345} shows that such would happen only when 
$n$ is very large. 

\item As a result of the two previous observations, UW $\bb=00\ldots 01$ 
perhaps remains a better choice in the compression of sources with practical number of source letters even though it is asymptotically the worst. We will confirm this inference by the later practice
of UDOOCs on a real text source from the book \emph{Alice's Adventure in Wonderland}.
\een
	
\begin{figure*}[!ht]
\centering
\begin{tabular}{cc}
\includegraphics[width=0.48\textwidth] {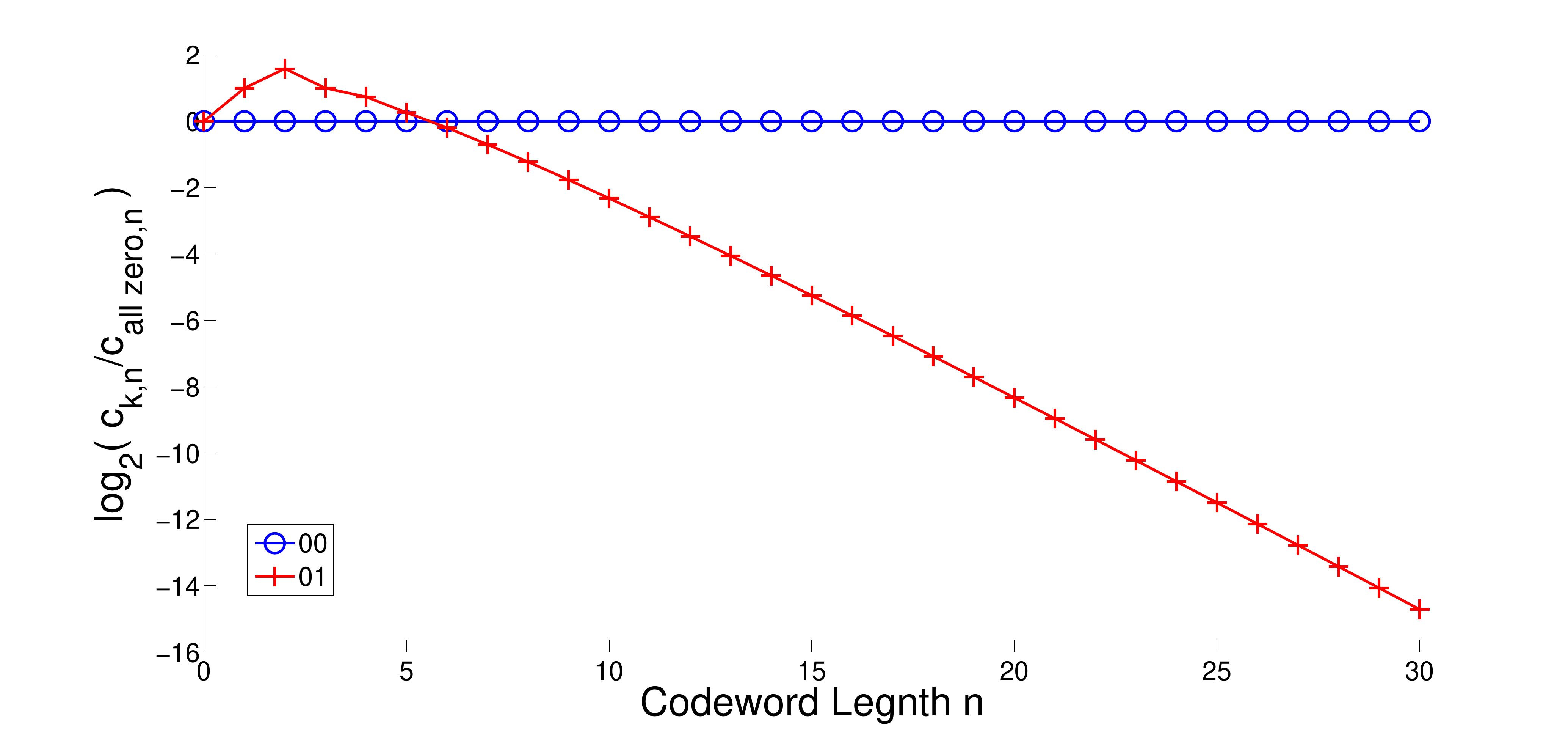}
&\includegraphics[width=0.48\textwidth]{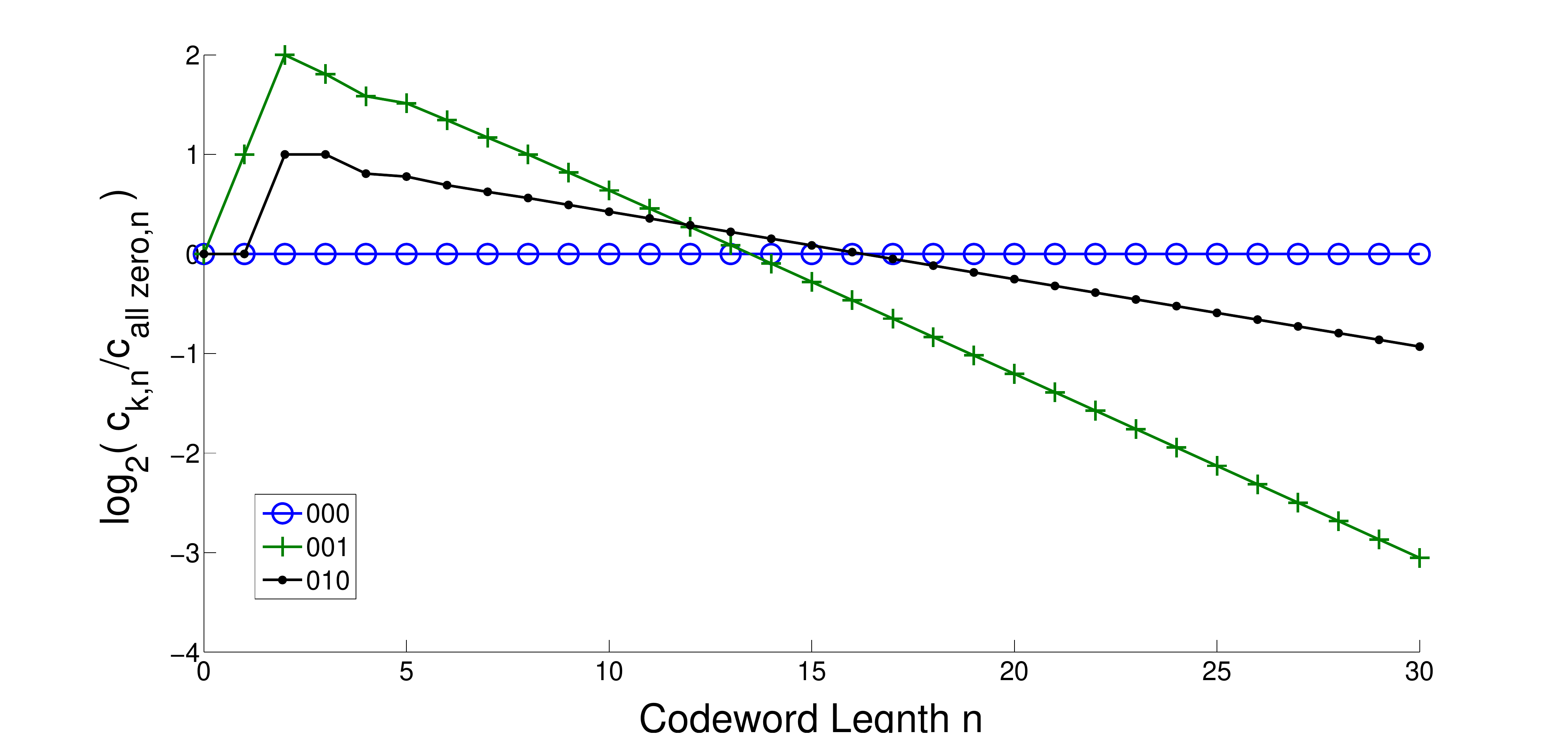}\\
{\footnotesize (a) $L=2$}&{\footnotesize  (b) $L=3$}\\[2mm]
\includegraphics[width=0.48\textwidth]{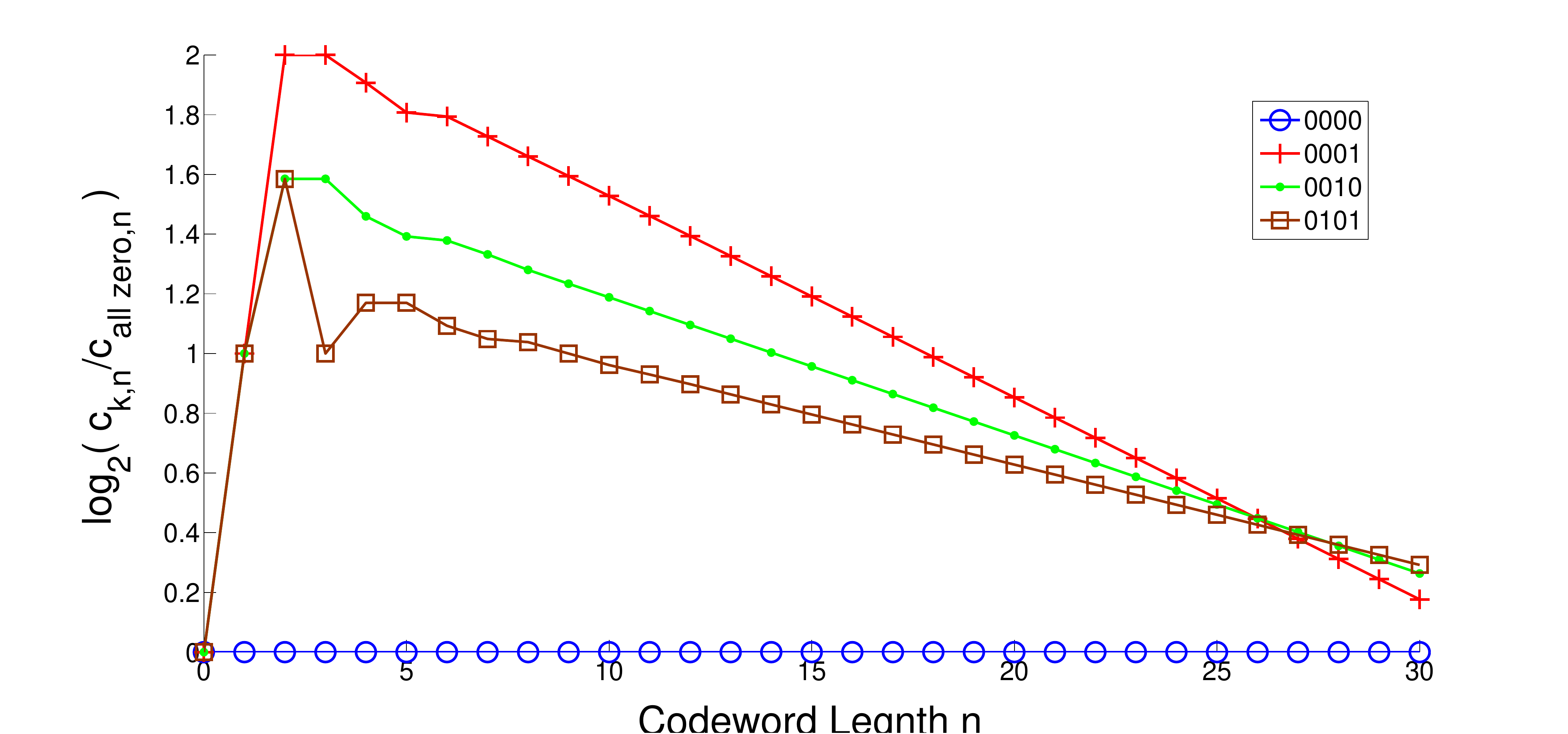}
&\includegraphics[width=0.48\textwidth]{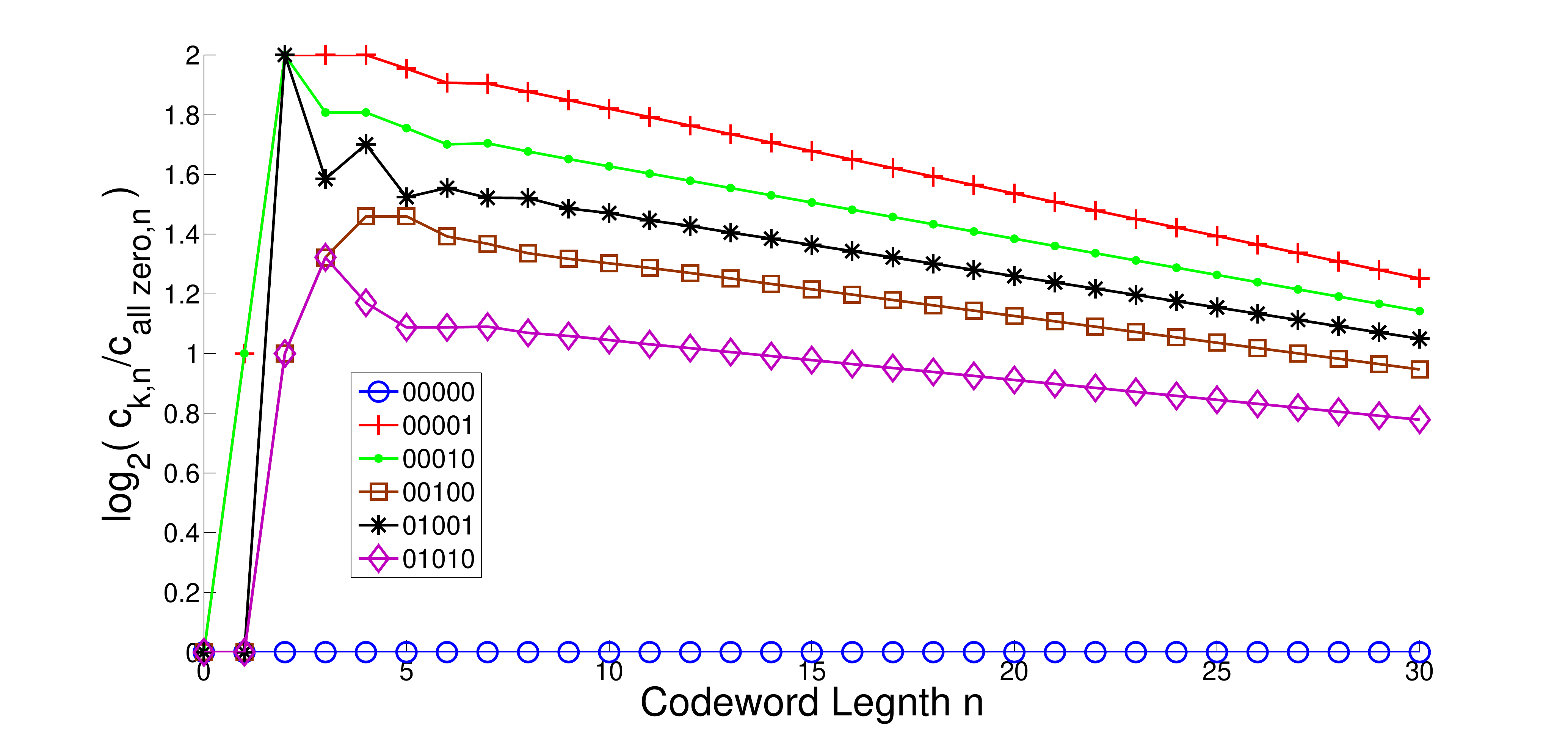}\\
{\footnotesize (c) $L=4$}&{\footnotesize (d) $L=5$}\\[2mm]
\end{tabular}
\caption{Normalized numbers of length-$n$ codewords for UWs of lengths $L=2,3,4,5$}\label{result2345}
\end{figure*}

We next investigate the compression rates of UDOOCs
and compare them with those 
of the Huffman and Lempel-Ziv (specifically, LZ77 and LZ78) codes.
In this experiment, the standard Huffman code in the communication toolbox of Matlab
is used instead of the adaptive Huffman code. The LZ77 executable is obtained from the basic compression library in \cite{LZ77},
while the LZ78 is self-implemented using C++ programing language. As a convention, 
the data is binary ASCII encoded before 
it is fed into the two Lempel-Ziv compression algorithms.
The sliding window for the LZ77 is set as $10,000$ bits,
and the tree-structured LZ78 is implemented without any windowing.

Three different English text sources are used, in which the uppercase and lowercase of each English letter are treated as the same symbol. The first English text source is distributed uniformly over the 26 symbols.
The second English text source is assumed independent and identically distributed (i.i.d.)
with marginal statistics from \cite{oxfdic}.
The third one is a realistic English text source from \emph{Alice's Adventure in 
Wonderland}, in which any symbols other than the 26 English alphabets are regarded as a ``space." In addition, the effect of grouping $t$ symbols as a grouped source for compression
is studied, which will be termed $t$-grouper in  remarks below.
The results are summarized in Tables \ref{avgtable1} and
\ref{avgtable2}, in which the average codeword length of UDOOCs has already taken into account the length of UWs.
We remark on the experimental results as follows.

\ben
\item First of all, it can be observed from Table \ref{avgtable1} that
the length-2 UW $\kb=01$ gives a good per-letter average codeword length
only when $t=1$. When the size of source alphabet increases by grouping $t=2$ or $t=3$ letters as one symbol for UDOOC compression, the per-letter average codeword length dramatically grows.
Note that $\kb=01$ is the only UW, whose number of length-$n$ codewords
has a linear growth with respect to $n$, i.e., we have $c_{01,n}=n+1$.
Since the size of source alphabets increases exponentially in $t$
when $t$-grouper is employed,
the resulting per-letter average codeword length also increases
exponentially as $t$ grows. Therefore, when $\text{UW}=01$, $t$-grouper 
will result in an extremely poor performance
for moderately large $t$.

\item By independently generating $10^6$ letters according to the statistics in \cite{oxfdic}
for compression, we record the per-letter average codeword in the second row 
of Table \ref{avgtable1}.
As expected, the Huffman coding scheme
gives the smallest per-letter average codeword length of $4.253$ bits per letter, when 
$3$-grouper is used.
The gap of per-letter average codeword lengths
between the 3-grouper Huffman and the 3-grouper UDOOC
however can be made as small as $4.795-4.253=0.542$ bits per source letter if
$\text{UW}=0001$. This is in contrast to the gap of $1.007$ bits 
when uniform independent English text source is the one to be compressed (cf.~the first row in Table \ref{avgtable1}).
We would like to point out that
the error propagation of UDOOCs
is limited firmly by at most two codewords, while
that of the Huffman code may be statistically beyond this range.
In comparison with the LZ77 and LZ78,
the UDOOC clearly performs better in compression rate
for usual independent English text source.

\item When the compression of a source with memory
such as the book titled \emph{Alice's Adventures in Wonderland} \cite{alice}
is concerned, the third row in Table \ref{avgtable1}
shows that the gap of per-letter average codeword lengths between the optimal $3$-grouper Huffman and the $3$-grouper
UDOOC with $\text{UW}=0001$ is narrowed down to $0.305$ bits per letter.
The $3$-grouper UDOOC with the all-zero UW also performs well for this source. 
Note that part of the per-letter average codeword length of UDOOCs 
is contributed by the UW, i.e., $L/t$; hence, in a sense, a larger $t$ and a smaller $L$ are favored (except for $L=2$). As can be seen from Table \ref{avgtable1}, the best compression performance is given by $t=3$, $L=4$, and $\text{UW}=0001$. 

\item For the third English text source,
the LZ77 performs better than all of the $1$-grouper UDOOC compression schemes
but one. We then compare the running time of both algorithms. We reduce the window size of LZ77 so that it has a similar running time to the $1$-grouper UDOOC scheme. The compression performance of LZ77 degrades down
to $5.234$ bits per letter, which is larger than that of the $1$-grouper UDOOC.
Note that we only compare their running time in encoding in Table \ref{avgtable2} as
the decoding efficiency of UDOOCs is seemingly better than that of the LZ77.
Considering also the low memory consumption of UDOOCs when a specific 
UW is pre-given in addition to its simplicity in implementation, the UDOOC can be regarded as a 
cost-effective compression scheme for practical applications.

\een

\begin{table*}[!t]
\caption{Average codeword lengths in bits per source symbol for the compression of three different sources. The best one among $1$-grouper, $2$-grouper and $3$-grouper
of the same compression scheme is boldfaced. 
}\label{avgtable1}
\scriptsize

\begin{center}
\begin{tabular}{ c | c c c | c | c | c c c | c | c c c | c c c }
\hline \hline
 Type & \multicolumn{3}{c|}{\bf Entropy} & \bf LZ77 & \bf LZ78 & \multicolumn{3}{c|}{\bf Huffman}&  & \multicolumn{3}{c|}{\bf UW = $00\cdots 0$} & \multicolumn{3}{c}{\bf UW = $00\cdots 01$}  \\ 
 \hline\hline
  & $t=1$ & $t=2$ & $t=3$& &  &$t=1$ & $t=2$ & $t=3$ & & $t=1$ & $t=2$ & $t=3$ & $t=1$ & $t=2$ & $t=3$\\ \cline{2-16} 
 Independent& & & & & & & & & $L=2$ & 6.961 & 6.820 & {\bf 6.790} & {\bf 5.846} & 12.76 &  41.99\\
 English Letter with& 4.700 & 4.700 & 4.700 & 7.992 & 7.178 & 4.768 & 4.738 & {\bf 4.702} & $L=4$ & 8.576 & 6.831 & {\bf 6.213} &  7.000 & 5.899 & {\bf 5.709}\\ 
 Uniform distribution & & & & & & & & & $L=6$ & 10.58 & 7.748 & {\bf 6.746} & 9.000 & 6.768 & {\bf 6.104}\\ \hline \hline
 
  & $t=1$ & $t=2$ & $t=3$& & &$t=1$ & $t=2$ & $t=3$ &  & $t=1$ & $t=2$ & $t=3$ & $t=1$ & $t=2$ & $t=3$\\ \cline{2-16}
 Independent& & & & & & & & & $L=2$ & 5.591 & {\bf 5.550} & 5.637 & {\bf 4.557} & 7.771 &  20.907\\
 English Letter with & 4.246 & 4.246 & 4.246 & 7.925 &6.626 & 4.274 & 4.261 & {\bf 4.253} & $L=4$ & 7.411 & 5.872 & {\bf 5.351} &  6.185 & 4.970 & {\bf 4.795}\\ 
 Usual distribution & & & & & & & & &$L=6$ &  9.411 & 6.818 & {\bf 5.924} & 8.185 & 5.882 & {\bf 5.274}\\ \hline \hline

& $t=1$ & $t=2$ & $t=3$ & & &  $t=1$ & $t=2$ & $t=3$ & & $t=1$ & $t=2$ & $t=3$ & $t=1$ & $t=2$ & $t=3$\\  \cline{2-16}
Alice's & & & & & & & & & $L=2$ & 4.887 & 4.340 & {\bf 3.958} & {\bf 4.068} & 4.975 & 7.573\\ 
Adventures & 3.914 & 3.570 & 3.215 &4.661 &6.028 & 3.940 & 3.585 & {\bf 3.226}& $L=4$ & 6.757 & 4.920 & {\bf 4.089} & 5.774 & 4.133 & {\bf 3.531}\\ 
in Wonderland & & & & & & & & &$L=6$ & 8.757 & 5.890 & {\bf 4.709} & 7.774 & 5.089 & {\bf 4.115}\\ \hline \hline

\end{tabular} 
\end{center}
\end{table*}

\begin{table}[!h]
\caption{Average codeword lengths in bits per source symbol and running time in seconds for the UDOOC encoding and the LZ77 encoding on \emph{Alice's Adventures in Wonderland}.
The programs are implemented using C++, and are executed in a Microsoft Windows-based desktop with intel-Core7 2.4G CUP and 8G memory.
}\label{avgtable2}
\scriptsize
\begin{center}

\begin{tabular}{c|c|c|c}
\hline\hline
\multicolumn{2}{c|}{Type} & Average Codewrod length & Running Time \\ \hline
\multirow{2}{*}{UDOOC} & UW $\kb=00$ & 4.887 & 0.0162 sec \\ \cline{2-4}
      & UW $\kb=01$ & 4.068 & 0.0158 sec\\ \hline
\multirow{2}{*}{LZ77}  & Window Size = $10^4$ bits & 4.661 & 0.0328 sec\\ \cline{2-4}
      & Window Size $= 3000$ bits & 5.234 & 0.01607 sec\\ 
\hline\hline
\end{tabular}
\end{center}
\end{table}

\section{Conclusion}\label{sec:6}

	In this paper, we have provided a general construction of UDOOCs with arbitrary UW. Combinatorial properties of UDOOCs are subsequently investigated. 
Based on our studies, the appropriate UW for the UDOOC compression of a given source
can be chosen. Various encoding and decoding algorithms for general UDOOCs,
as well as their efficient counterparts for specific UWs like $\kb=00\ldots0$, $00\ldots01$,
are also provided.
Performances of UDOOCs are then compared with the Huffman and Lempel-Ziv codes.
Our experimental results show that the UDOOC 
can be a good practical candidate for lossless data compression
when a cost-efficient solution is desired.

\appendices

\section{Proof of Theorem \ref{thm:1}} \label{app:a}

In this section, we will prove \eqref{eq:skbz}, the enumeration of $s_{\kb,n}$  in Theorem \ref{thm:1}. Our proof technique is similar to that in \cite{DIFF99}. 

Let ${\F^\infty} := \bigcup_{n \geq 0} \F^n$ be the set of all binary sequences. For a word $\wb=w_1 \ldots w_n \in {\F^\infty}$ of length $n$, let ${\cal F}_\kb(\wb)$ be the set of index pairs  indicating the places that $\wb$ contains $\kb$ as a subword, i.e., 
\[
{\cal F}_\kb(\wb) \ = \ \left\{ (i,j) \ : \ \kb = w_i^j \right\}.
\]
Further denote by $\ell(\wb)$ the length of word $\wb$.  Then 
\bean
f(z) &=& \sum_{n \geq 0} s_{\kb,n} z^n \\
&\stackrel{\text{(i)}}{=}& \sum_{\wb \in {\F^\infty}} z^{\ell(\wb)} 0^{\abs{{\cal F}_\kb(\wb)}}\\
&=& \sum_{\wb \in {\F^\infty}} z^{\ell(\wb)} \prod_{a \in {\cal F}_\kb(\wb)} (1+(-1))\\
&\stackrel{\text{(ii)}}{=} & \sum_{\wb \in {\F^\infty}}  z^{\ell(\wb)} \sum_{A \subseteq {\cal F}_\kb(\wb)} (-1)^{\abs{A}} \yes \label{eq:f1}
\eean
where  
in (i) we have adopted the convention of $0^0=1$, and (ii) follows from the inclusion-exclusion principle. In light of \eqref{eq:f1}, we will regard the pair $(\wb,A)$ with $A \subseteq {\cal F}_\kb(\wb)$ as a {\em marked word}. The set of all marked words is thus defined as
\[
{\cal M}_\kb \ := \ \left\{ (\wb, A) \ : \ \wb \in {\F^\infty}\text{ and } A \subseteq {\cal F}_\kb(\wb)\right\}.
\]
Define the following weight function for elements in ${\cal M}_\kb$
\beq
\pi(\wb,A) \ := \ z^{\ell(\wb)} (-1)^{\abs{A}};
\eeq
then \eqref{eq:f1} can be rewritten as
\beq
f(z) \ = \ \sum_{(\wb,A) \in {\cal M}_\kb} \pi(\wb,A). \label{eq:f2}
\eeq
To determine $f(z)$, below we introduce the concept of a {\em cluster}. 

\medskip

\begin{defn}[Cluster] \label{defn:cluster}
We say the marked word $(\wb,A)$ is a {\em cluster} if, and only if, 
\[
\bigcup_{(i_t,j_t) \in A}\,  [i_t,j_t] = [1,\ell(\wb)]
\]
where by $[a,b]$ we mean the closed interval $\{ x \in \R : a \leq x \leq b\}$ on the  real line. The set of all clusters is thus 
\[
{\cal T}_\kb  \ = \ \left\{ (\wb,A) \in {\cal M}_\kb \ : \ (\wb,A) \text{ is a cluster}\right\}.
\]
\end{defn}

\medskip

\begin{defn}[Concatenation of sets of marked words]
For any two sets of marked words ${\cal A}_\kb$ and ${\cal B}_\kb$, we define the concatenation of ${\cal A}_\kb$ and ${\cal B}_\kb$ as
\[
{\cal A}_\kb \vee {\cal B}_\kb  :=  \left\{ (\ab \bb, A \cup {\frak J}(B,\ell(\ab))  :  (\ab,A) \in {\cal A}_\kb, (\bb,B) \in {\cal B}_\kb \right\}
\]
where by $\ab\bb$ we meant the usual concatenation of strings $\ab$ and  $\bb$, and  the function ${\frak J}(B,\ell(\ab))$ is 
\[
{\frak J}(B,\ell(\ab)) \ := \ \left\{ (i_t + \ell(\ab), j_t + \ell(\ab)) : (i_t,j_t) \in B \right\}.
\]
\end{defn}

\medskip

Having defined the concatenation operation $\vee$
for sets of marked words, we next claim the following decomposition for the set ${\cal M}_\kb$
\beq
{\cal M}_\kb \ = \ \{(\text{null},\emptyset)\} \cup \left( {\cal M}_\kb \vee {\cal F} \right) \cup \left( {\cal M}_\kb \vee {\cal T}_\kb \right), \label{eq:decomp}
\eeq
where ${\cal F} := \{(b,\emptyset) : b \in \F\}$. 

To show \eqref{eq:decomp}, for any $(\wb,A) \in {\cal M}_\kb$ we distinguish the following three disjoint cases:
\ben
\item If $\ell(\wb)=0$, it is obvious that $\wb$ is a null word and $A = \emptyset$ from the definition of ${\cal F}_\kb(\wb)$.
\item For $\ell(\wb) \geq 1$, appending an arbitrary binary word to  $\wb$ results in another marked word $(\wb b, A)$, which cannot be a cluster since
\[
\bigcup_{(i_t,j_t) \in A} [i_t, j_t] \subset [1,\ell(\wb)+1].
\]
Conversely, take any marked word $(\wb,A)$ from ${\cal M}_\kb$ with $\ell(\wb)=n$. If $j_t < \ell(\wb)=n$ for all $(i_t,j_t) \in A$, then we can delete the rightmost bit from $\wb$, and the resulting pair $(w_1^{n-1},A)$ is still a marked word. Summarizing the above gives the following equalities between two sets of marked words
\bea
 \IEEEeqnarraymulticol{3}{l}{ \left\{ (\wb,A) \in {\cal M}_\kb: j_t < \ell(\wb) \text{ for all $(i_t,j_t) \in A)$}\right\}}\no\\
\quad &=&  \left\{ (\wb b, A) : (\wb,A)\in {\cal M}_\kb, b \in \F \right\} \nonumber\\
&=& {\cal M}_\kb \vee {\cal F}, \label{eq:case2}
\eea
where the last equality follows from the definition of concatenation operation $\vee$. 
\item The last case concerns the situation when $(\wb,A)$ 
satisfies $\ell(\wb)=n\geq 1$, $A=\{(i_1, j_1), \ldots, (i_m, j_m)\}$ and $i_1 < \cdots < i_m < j_m=n$. In other words, this is the case when $\max \{ j_t : (i_t,j_t)\in A\} = \ell(\wb)$,
which is disjoint from the second case. For this, let $u$ be the smallest index such that $[i_{u+t}, j_{u+t}] \cap [i_{u+t+1}, j_{u+t+1}] \neq \emptyset$ for all $t=0,1,\ldots,m-u+1$. Then obviously we have the following de-concatenation of $(\wb,A)$
\bean
 \IEEEeqnarraymulticol{3}{l}{
(\wb,A) \ = \ (w_1^{i_u-1}, \{ (i_t, j_t): t=1,\ldots,u-1\})}\\
\quad && \vee\,  (w_{i_u}^{n}, \{ (i_t - i_u+1, j_t - i_u+1): t=u, \ldots, m\}).
\eean
Clearly, the first marked word $(w_1^{i_u-1}, \{ (i_t, j_t): t=1,\ldots,u-1\}) \in {\cal M}_\kb$. The second marked word $(w_{i_u}^{n}, \{ (i_t - i_u+1, j_t - i_u+1): t=u, \ldots, m\})$ is a cluster since 
\[
\bigcup_{t=u}^m [i_t - i_u+1, j_t - i_u+1] = [1,n-i_u+1]
\]
by the choice of $u$. Hence we arrive at the following equality between two sets of marked words
\bea
\IEEEeqnarraymulticol{3}{l}{
\left\{ (\wb, A)  \in {\cal M}_\kb \ : \ \max \{ j_t : (i_t,j_t)\in A\} = \ell(\wb) \right\}}\no\\
\quad &=&  {\cal M}_\kb \vee {\cal T}_\kb. \label{eq:case3}
\eea
\een
Combining the case of null word and equations \eqref{eq:case2} and \eqref{eq:case3} proves the desired claim of \eqref{eq:decomp}. 

Using the decomposition in \eqref{eq:decomp}, we can rewrite \eqref{eq:f2} in terms of the three sets, i.e., the set for null word, ${\cal M}_\kb \vee {\cal F}$, and ${\cal M}_\kb \vee {\cal T}_\kb$. In particular, we have 
\bean
 \IEEEeqnarraymulticol{3}{l}{ \sum_{(\wb,W) \in {\cal M}_\kb \vee {\cal T}_\kb} \pi(\wb,W)}\\
\quad &=& \sum_{(\ab,A) \in {\cal M}_\kb} \sum_{ (\bb,B) \in {\cal T}_\kb} z^{\ell(\ab \bb)} (-1)^{\abs{A \cup {\frak J}(B,\ell(\ab))}}\\
&=&  \sum_{(\ab,A) \in {\cal M}_\kb} \sum_{ (\bb,B) \in {\cal T}_\kb} z^{\ell(\ab) + \ell( \bb)} (-1)^{\abs{A}+\abs{B}}\\
&=& \left( \sum_{(\ab,A) \in {\cal M}_\kb} \pi(\ab,A) \right) \left( \sum_{(\bb,B) \in {\cal T}_\kb} \pi(\bb,B) \right). \yes \label{eq:f32}
\eean
Similarly, one can show that
\beq
\sum_{(\wb,A) \in {\cal M}_\kb \vee {\cal F}_2} \pi(\wb,A) \ = \ 2z \sum_{(\wb,A) \in {\cal M}_\kb} \pi(\wb,A). \label{eq:f33}
\eeq
Substituting \eqref{eq:f32} and \eqref{eq:f33} into \eqref{eq:f2} gives 
\[
f(z) = \sum_{(\wb,A) \in {\cal M}_\kb} \pi(\wb,A) = 1 + 2z f(z) + f(z) T(z),
\]
or equivalently,
\beq
f(z) \ = \ \frac{1}{1 - 2 z - T(z)} \label{eq:f4},
\eeq
where $T(z)$ is the weight enumerator of elements in ${\cal T}_\kb$ given by 
\beq
T(z) \ := \ \sum_{(\bb,B) \in {\cal T}_\kb} \pi(\bb,B) . \label{eq:Tz}
\eeq

Determining $T(z)$ is now relatively easy. Recall that the overlap function $r_\kb(i)={\bf 1}(k_1^{L-i} =  k_{i+1}^L)$, where ${\bf 1}(\cdot)$ is the usual indicator function,  shows exactly whether the length-$(L-i)$ prefix of $\kb$ is also a suffix of $\kb$. Let ${\cal R}_\kb = \left\{ i : 1 \leq i \leq L-1, r_\kb(i)=1\right\}$. For any cluster $(\bb,B) \in {\cal T}_\kb$ with $\bb=b_1 \ldots b_n$, we must have $b_{n-L+1}^n = \kb$ by Definition \ref{defn:cluster}. So for any $i \in {\cal R}_\kb$, i.e., $r_\kb(i)=1$, we have $b_{n-L+i+1}^n=k_{i+1}^L = k_1^{L-i}$. 
Hence the pair
\[
\left( \bb k_{L-i+1} \ldots k_L, B \cup \left\{ (n+i-L+1,n+i)\right\}\right) 
\]
is a cluster in ${\cal T}_\kb$. It implies that for $i \in {\cal R}_\kb$, the set
\beq
{\cal T}_{\kb,i} := \left\{ 
\begin{array}{l}
\left(\bb k_{L-i+1}^L, B \cup \left\{ (n+i-L+1,n+i)\right\}\right) : \\
\hspace{1.3in} (\bb,B)\in {\cal T}_\kb, n=\ell(\bb) 
\end{array}\right\} \label{eq:Tkbi}
\eeq
is a subset of ${\cal T}_\kb$.

On the other hand, take any $(\bb,B) \in {\cal T}_\kb$ with $\ell(\bb)=n$ and $B=\{(i_t, j_t): t=1, \ldots, m\}$, where $1=i_1 < i_2 < \cdots < i_m < j_m=n$ and $i_m=n-L+1$. If $m=1$, then $\bb=\kb$ and $B=\{(1,L)\}$. Hence we consider the case when $m > 1$. As $(\bb,B)$ is a cluster, $[i_{m-1},j_{m-1}] \cap [i_m, j_m] \neq \emptyset$ and $b_{i_{m-1}}^{j_{m-1}} = b_{i_m}^{j_m}=\kb$. Therefore, we must have $b_{i_m}^{j_{m-1}} = k_1^{v}=k_{L-v+1}^L$, where $v=j_{m-1}-i_m+1$. Thus, $r_\kb(L-v)=1$ and $(\bb,B) \in {\cal T}_{\kb,L-v}$. The above discussion then gives the following decomposition for ${\cal T}_\kb$
\beq
{\cal T}_\kb \ = \  \{ \left(\kb,\{(1,L)\} \right) \} \cup \left( \bigcup_{i \in {\cal R}_\kb} {\cal T}_{\kb,i} \right). \label{eq:decompT}
\eeq
For enumerating the weights of elements in ${\cal T}_\kb$, we further claim that ${\cal T}_{\kb,i} \cap {\cal T}_{\kb,j} = \emptyset$ for all $i \neq j$. This simply follows from the definition of ${\cal T}_{\kb,i}$ in \eqref{eq:Tkbi} that for any $(\bb,B) \in {\cal T}_{\kb,i}$ and $(\bb',B') \in {\cal T}_{\kb,j}$, say $B=\{(i_t,j_t): t=1, \ldots, m\}$ and $B'=\{(i_t,j_t): t=1, \ldots, m'\}$,  where the pairs $(i_t,j_t)$ are arranged in ascending order,
we have that $j_m - j_{m-1}=i$ for $B$ and $j_{m'}-j_{m'-1}=j$ for $B'$. This proves our claim. Finally, using \eqref{eq:decompT} and the fact that the sets $\{{\cal T}_{\kb,i}\}$ are disjoint, we obtain 
\bean
T(z) &=&  \pi(\kb,\{(1,L)\}) + \sum_{i=1}^{L-1} r_\kb(i)  \sum_{(\bb,B) \in {\cal T}_\kb,i} \pi(\bb,B)\\
&=& z^{\ell(\kb)} (-1) + \sum_{i=1}^{L-1} r_\kb(i)  \sum_{(\bb,B) \in {\cal T}_\kb} z^{\ell(\bb)+i} \left( -1 \right)^{\abs{B}+1}\\
&=& - z^L - \sum_{i=1}^{L-1} r_\kb(i) z^i T(z).
\eean
Hence
\[
T(z) \ = \ -\frac{z^L}{1+ \sum_{i=1}^{L-1} r_\kb(i) z^i}.
\]
Substituting the above into \eqref{eq:f4} proves \eqref{eq:skbz} of Theorem \ref{thm:1}. 

\section{Degree of $ \det\left( \mI - \Am_\kb z \right)$} \label{app:b}

In this section, we will determine the degree of  polynomial  $\det\left( \mI - \Am_\kb z \right)$ that is required in the proof of Theorem \ref{thm:1}.

\medskip

\begin{prop} \label{prop:deg}
Let $\Am_\kb$ be the adjacency matrix for the digraph $G_\kb$ associated with UW $\kb$ defined in Section \ref{sec:3}. Then
\beq
\deg \det\left( \mI - \Am_\kb z \right) \ = \ L.
\eeq
\end{prop}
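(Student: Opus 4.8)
The quantity $\deg\det(\mI-\Am_\kb z)$ equals the number of nonzero eigenvalues of $\Am_\kb$ counted with algebraic multiplicity, which in turn equals the stable rank $\lim_{p\to\infty}\rank(\Am_\kb^p)$. Since $\rank(\Am_\kb^p)$ is nonincreasing in $p$, this stable rank is at most $\rank(\Am_\kb^{L-1})$. So the plan is to trap the degree between a combinatorial upper bound $\rank(\Am_\kb^{L-1})\le L$, obtained by writing out $\Am_\kb^{L-1}$ explicitly, and a matching lower bound $\deg\det(\mI-\Am_\kb z)\ge L$, obtained from the two generating-function descriptions of $s_{\kb,n}$ already available in the paper.

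For the upper bound I would first record the entrywise form of $\Am_\kb^{L-1}$. A length-$(L-1)$ walk from $\ib$ to $\jb$ reads off the $(2L-2)$-bit string $\ib\jb$, whose $L-1$ length-$L$ windows are exactly the windows tested by the $L-1$ edge constraints; hence $(\Am_\kb^{L-1})_{\ib,\jb}=\mathbf 1[\kb\text{ is not a subword of }\ib\jb]$. Every length-$L$ window of $\ib\jb$ straddles the junction, so $\kb$ is a subword of $\ib\jb$ iff for some $s\in\{1,\dots,L-1\}$ the length-$s$ suffix of $\ib$ equals $k_1\cdots k_s$ and the length-$(L-s)$ prefix of $\jb$ equals $k_{s+1}\cdots k_L$. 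By a standard border argument, the set of $s$ for which the first condition holds is exactly the set of borders of $k_1\cdots k_{s^*(\ib)}$, where $s^*(\ib)$ is the length of the longest suffix of $\ib$ that is a prefix of $\kb$; it therefore depends on $\ib$ only through $s^*(\ib)\in\{0,1,\dots,L-1\}$. Consequently the row of $\Am_\kb^{L-1}$ indexed by $\ib$ is determined by $s^*(\ib)$ alone, so $\Am_\kb^{L-1}$ has at most $L$ distinct rows and $\rank(\Am_\kb^{L-1})\le L$, giving $\deg\det(\mI-\Am_\kb z)\le L$.

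For the lower bound I would compare the two independent expressions for $\sum_n s_{\kb,n}z^n$. The Goulden--Jackson form (the first identity of Theorem~\ref{thm:1}, established self-containedly in Appendix~\ref{app:a}) equals $(1+\sum_{i=1}^{L-1}r_\kb(i)z^i)/h_\kb(z)$, and its numerator and denominator are coprime, since a common root $u$ would force $u^L=0$ from $h_\kb(u)=0$ while the numerator has constant term $1$. The walk-counting identity \eqref{eq:skbnz1} writes the same series as a rational function with denominator $\det(\mI-\Am_\kb z)$. Equating the two and clearing denominators, coprimality forces $h_\kb(z)\mid\det(\mI-\Am_\kb z)$, so $\deg\det(\mI-\Am_\kb z)\ge\deg h_\kb(z)=L$, the last equality from inspection of \eqref{eq:hkbz} (the coefficient of $z^L$ is $1-2r_\kb(L-1)\in\{1,-1\}$). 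The two bounds together give the claim.

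The delicate step is the collapse claim in the upper bound: that after $L-1$ steps the $2^{L-1}$-dimensional walk structure reduces to the $L$ KMP states, i.e.\ that the matching lengths of $\ib$ are precisely the borders of $k_1\cdots k_{s^*(\ib)}$. This border bookkeeping carries essentially all of the combinatorial content; the two generating-function inputs for the lower bound are already in place, so once the collapse is made rigorous the remainder is routine. An alternative, purely linear-algebraic lower bound would instead exhibit $L$ vertices realizing the $L$ distinct values of $s^*$, verify that the corresponding rows of $\Am_\kb^{L-1}$ are linearly independent, and check that the rank has already stabilized at step $L-1$; I expect the generating-function route to be the shorter of the two.
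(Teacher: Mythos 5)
Your proposal is correct, and it splits cleanly into a half that coincides with the paper and a half that is genuinely different. The lower bound is exactly the paper's argument: both proofs compare \eqref{eq:skbnz1} with the Goulden--Jackson form \eqref{eq:skbz}, use irreducibility of the latter to force $h_\kb(z)\mid\det(\mI-\Am_\kb z)$, and read off $\deg h_\kb=L$. For the upper bound, both proofs reduce to showing $\rank(\Am_\kb^{L-1})\le L$, but by different means. The paper writes $\Am_\kb=\Hm-\eb_{\kb_1}\eb_{\kb_2}^\top$, where $\Hm$ is the unconstrained de Bruijn adjacency matrix with $\Hm^{L-1}=\underline{\bf 1}\,\underline{\bf 1}^\top$, expands $(\Hm-\eb_{\kb_1}\eb_{\kb_2}^\top)^{L-1}$ by the telescoping identity $(A-B)^{L-1}=A^{L-1}-\sum_{i=0}^{L-2}A^{L-2-i}B(A-B)^i$, and applies subadditivity of rank to the resulting $1+(L-1)$ rank-one terms. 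You instead identify $(\Am_\kb^{L-1})_{\ib,\jb}$ as the indicator that $\kb$ is not a subword of $\ib\jb$ (correct: there is at most one length-$(L-1)$ walk between two vertices, and its $L-1$ edge tests are exactly the $L-1$ windows of $\ib\jb$, all of which straddle the junction) and then argue via the border structure of $\kb$ that the row indexed by $\ib$ depends only on the longest suffix of $\ib$ that is a prefix of $\kb$, giving at most $L$ distinct rows. Both arguments are sound; the paper's is shorter to make rigorous since it needs no string combinatorics beyond $\Hm^{L-1}$ being all-ones, while yours is more informative, as it exhibits the actual row collapse onto the $L$ KMP-type states rather than merely bounding the rank. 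One cosmetic point: the set $S(\ib)$ of matching suffix lengths is $\{s^*(\ib)\}$ together with the proper border lengths of $k_1\cdots k_{s^*(\ib)}$, not only the borders, but since it is in any case determined by $s^*(\ib)$ the conclusion stands.
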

\begin{proof}
First, from \eqref{eq:skbnz1} and \eqref{eq:skbz}, the two equivalent
formulas for the enumeration of $s_{\kb,n}$, we see $\det\left( \mI - \Am_\kb z \right)$ is divisible by $h_\kb(z)=(1-2z)(1+\sum_{i=1}^{L-1} r_\kb(i)z^i)+z^L $. It follows that 
\[
\deg \det\left( \mI - \Am_\kb z \right) \geq L.
\]
To establish the converse of the above inequality, i.e., $\deg\det\left( \mI - \Am_\kb z \right) \leq L$, it suffices to show that $\rank(\Am_\kb ^{L-1}) \leq L$, which in turns implies $\rank(\Am_\kb ^L) \leq L$. As a result, the algebraic multiplicity of eigenvalue $0$ for $\Am_\kb$ is at least $2^{L-1}-L$. Hence, the degree of $\det(\mI-\Am_\kb z)$ is at most $L$. 

To prove the claim, given the UW $\kb=k_1 \ldots k_L$ of length $L$ and the corresponding adjacency matrix $\Am_\kb$ for digraph $G_\kb$, let 
\[
\Hm = \Am_\kb + \eb_{\kb_1} \eb_{\kb_2}^\top
\]
where $\kb_1 = k_1^{L-1}$ and $\kb_2 = k_2^L$, and where by $\eb_\db \in \F^{2^{L-1}}$ with $\db =d_1 \ldots d_{L-1} \in \F^{L-1}$ we mean $(\eb_\db)_{j+1}=1$ if $j$ has the binary representation $\db$, and $(\eb_\db)_{j+1}=0$, otherwise. 

Apparently, $\Hm$ is the adjacency matrix for the digraph without UW forbidden constraint and is therefore independent of the choice of $\kb$. As an example, if $L=3$, then 
\[
\Hm = \begin{bmatrix}
1 & 1 & 0 & 0 \\ 
0 & 0 & 1 & 1 \\ 
1 & 1 & 0 & 0 \\ 
0 & 0 & 1 & 1
\end{bmatrix}. 
\] 
Furthermore, it can be easily verified that $\Hm^{L-1}=\underline{\bf 1}\, \underline{\bf 1}^\top$ is the all-one matrix. Armed with the above, we now have 
\begin{eqnarray}
\Am_\kb ^{L-1} &=& \left(\Hm-\eb_{\kb_1} \eb_{\kb_2}^\top\right) ^{L-1}\nonumber\\
&=& \Hm^{L-1} - \sum_{i=0}^{L-2} \Hm^{L-2-i} \left( \eb_{\kb_1} \eb_{\kb_2}^\top\right) \Am_\kb^i, \label{eq:AkLdeg}
\end{eqnarray}
where the last equality is due to the following identity for square matrices $A$ and $B$:
\[
(A-B)^{L-1} \ = \ A^{L-1} - \sum_{i=0}^{L-2} A^{L-2-i} B (A-B)^i.
\]
Applying the standard rank inequality of $\rank(A+B) \leq \rank(A) + \rank(B)$ \cite{HJ} to \eqref{eq:AkLdeg} yields 
\bean
 \IEEEeqnarraymulticol{3}{l}{\rank\left(\Am_\kb ^{L-1}\right)}\\
\quad & \leq&  \rank\left( \Hm^{L-1} \right) + \sum_{i=0}^{L-2} \rank \left( \Hm^{L-2-i} \left( \eb_{\kb_1} \eb_{\kb_2}^\top\right) \Am_\kb^i  \right)\\
&=& 1 + \sum_{i=0}^{L-2} 1 = L,
\eean
and the proof is completed. 
\end{proof}

\section{Verification of Algorithms \ref{alg:5} and \ref{alg:6}} \label{app:c}

For completeness, we verify Algorithms \ref{alg:5} and \ref{alg:6} in this section.

For message $u_1$, i.e., the most likely message, we have from line 1 in Algorithm \ref{alg:5} that $m=1$ and $n=0$ since $F_{\kb,0}=c_{\kb,0}=1$. This results in the encoding output of the null codeword. In parallel, when receiving the null codeword, we have $n=0$. Algorithm \ref{alg:6} then sets $m=1$ at line 2 as $F_{\kb,-1}=0$.
This verifies the correctness of Algorithms \ref{alg:5} and \ref{alg:6} for message $u_1$.

For $m\geq 2$, we shall show that for each $n \geq 1$,  the encoding function $\phi_\kb$ is a bijection
between ${\cal U}_\kb(n)=\{u_m : F_{\kb,n-1} < m \leq F_{\kb,n}\}$
and ${\cal C}_\kb(n)$, and the decoding function $\psi_\kb$ is the functional inverse of $\phi_\kb$. Equivalently, it suffices to show that 
\begin{enumerate}
\item $\psi_\kb$ is a bijection
between ${\cal C}_\kb(n)$ and ${\cal U}_\kb(n)$ for each $n \geq 1$, and 
\item $\phi_\kb$ is the functional inverse of $\psi_\kb$
\end{enumerate}
We will proceed with this approach. 

Prior to establishing the claims, we first introduce below a well-ordering of binary sequences. This is in fact a key concept embedded in Algorithms \ref{alg:5} and \ref{alg:6}. 

\begin{defn}[Lexicographical ordering] For any two binary sequences $\ab=a_1 \ldots a_i$  and $\bb=b_1 \ldots b_j$, we say $\ab \succ \bb$ if $i > j$, or if $i=j$ and there exists a smallest integer $s$, $1 \leq s \leq i$, such that $a_u=b_u$ for $u=1,\ldots,s-1$, $a_s=1$, and $b_s=0$. 
\end{defn}

\medskip

Obviously, such ordering is a total-ordering of binary sequences. How the lexicographical ordering of binary sequences plays a key role in the encoding and decoding of UDOOCs is due to the following lemma. 

\begin{lemma} \label{lem:1}
For any two length-$n$ codewords $\ab,\bb \in {\cal C}_\kb(n)$, we have $\ab \succ \bb$ if, and only if,
\beq
\sum_{i=1}^{n} a_i \abs{{\cal C}_\kb(a_1^{i-1} 0, n)} > \sum_{i=1}^{n} b_i \abs{{\cal C}_\kb(b_1^{i-1} 0, n)}. \label{eq:lem1}
\eeq
\end{lemma}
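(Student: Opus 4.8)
The plan is to show that for any length-$n$ codeword $\wb = w_1 \ldots w_n \in {\cal C}_\kb(n)$, the quantity $\Phi(\wb) := \sum_{i=1}^n w_i \abs{{\cal C}_\kb(w_1^{i-1}0,n)}$ is precisely the lexicographic rank of $\wb$, i.e.\ the number of length-$n$ codewords strictly below $\wb$ in the ordering $\succ$. Once this combinatorial identity is in hand, the lemma follows immediately: in a finite totally ordered set the rank function is strictly increasing, so $\ab \succ \bb$ holds exactly when $\Phi(\ab) > \Phi(\bb)$, which is the claim in \eqref{eq:lem1}. Note that since both $\ab$ and $\bb$ have length $n$, the length clause in the definition of $\succ$ is vacuous and the comparison is governed solely by the first position of disagreement, which is what makes the rank interpretation clean.

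First I would analyze the summands of $\Phi(\wb)$. A term is nonzero only when $w_i = 1$, and in that case $\abs{{\cal C}_\kb(w_1^{i-1}0,n)}$ counts the length-$n$ codewords $\cb$ whose first $i$ bits equal $w_1 \ldots w_{i-1} 0$. Every such $\cb$ agrees with $\wb$ on positions $1,\ldots,i-1$ and carries a $0$ in position $i$ where $\wb$ carries a $1$; hence $i$ is the first position at which $\cb$ and $\wb$ disagree, and by the definition of $\succ$ this forces $\cb \prec \wb$. I would therefore identify ${\cal C}_\kb(w_1^{i-1}0,n)$ (for indices $i$ with $w_i=1$) with exactly the set of codewords whose first disagreement with $\wb$ occurs at position $i$ with value $0$.

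The key step is to assemble these contributions into a partition of $\{\cb \in {\cal C}_\kb(n) : \cb \prec \wb\}$. Disjointness across different indices $i$ is automatic because the first disagreement position of a codeword with $\wb$ is unique; formally this is the disjointness ${\cal C}_\kb(\db,n) \cap {\cal C}_\kb(\tilde\db,n) = \emptyset$ for distinct equal-length prefixes noted right after \eqref{eq:cdn}. Exhaustiveness is the converse direction: given any $\cb \prec \wb$, let $s$ be its first disagreement position with $\wb$; the definition of $\succ$ forces $w_s = 1$ and $c_s = 0$, so $\cb$ lies in the class ${\cal C}_\kb(w_1^{s-1}0,n)$ indexed by a position with $w_s = 1$. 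Together these yield $\Phi(\wb) = \abs{\{\cb \in {\cal C}_\kb(n): \cb \prec \wb\}}$. The main obstacle I anticipate is purely bookkeeping: making the first-disagreement argument airtight, in particular checking that $\wb$ itself is never counted (which holds since $w_i = 1$ keeps $\wb \notin {\cal C}_\kb(w_1^{i-1}0,n)$) and confirming the partition covers all strictly smaller codewords and nothing more. With the identity established, strict monotonicity of the rank in the total order $\succ$ settles both directions of the equivalence simultaneously.
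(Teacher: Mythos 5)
Your proposal is correct. It rests on the same combinatorial fact as the paper's proof --- the partition of a set of codewords according to the first position of disagreement with a reference word --- but you package it differently. The paper compares the two sums $\sum_i a_i\abs{{\cal C}_\kb(a_1^{i-1}0,n)}$ and $\sum_i b_i\abs{{\cal C}_\kb(b_1^{i-1}0,n)}$ head-to-head: it anchors at the first index $s$ where $\ab$ and $\bb$ differ, drops the tail of the $\ab$-sum, and bounds the tail of the $\bb$-sum strictly by $\abs{{\cal C}_\kb(a_1^{s-1}0,n)}$ using that the sets ${\cal C}_\kb(a_1^{s-1}0\,b_{s+1}^{i-1}0,n)$ are disjoint subsets of ${\cal C}_\kb(a_1^{s-1}0,n)$ whose union misses $\bb$ itself; the converse direction is then implicit from totality of $\succ$. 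You instead prove the stronger exact identity $\Phi(\wb)=\abs{\{\cb\in{\cal C}_\kb(n):\cb\prec\wb\}}$ for every codeword $\wb$ and deduce both directions at once from strict monotonicity of the rank in a total order. Your route buys a cleaner statement and, as a bonus, essentially subsumes the paper's subsequent corollary: the rank identity immediately shows $\psi_\kb$ maps ${\cal C}_\kb(n)$ bijectively onto $\{F_{\kb,n-1}+1,\ldots,F_{\kb,n}\}$, whereas the paper re-derives the extremal values $\psi_\kb(\bb)=F_{\kb,n-1}+1$ and $\psi_\kb(\db)=F_{\kb,n}$ separately and invokes injectivity plus a cardinality count. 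The only detail worth spelling out when you write it up is the same one the paper glosses over: for two distinct equal-length codewords the first disagreement position exists and determines the order, so exhaustiveness and disjointness of your partition are exactly the uniqueness of that position.
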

\begin{proof}
As $\ell(\ab) = \ell(\bb)$ and $\ab \succ \bb$, there exists a smallest integer $s$, $1 \leq s \leq n$, such that $a_u=b_u$ for $u=1,\ldots,s-1$, $a_s=1$, and $b_s=0$. Thus,
\bean
 \IEEEeqnarraymulticol{3}{l}{\sum_{i=1}^{n} a_i \abs{{\cal C}_\kb(a_1^{i-1} 0, n)}} \\
\quad &\geq& \sum_{i=1}^{s-1} a_i \abs{{\cal C}_\kb(a_1^{i-1} 0, n)} +  \abs{{\cal C}_\kb(a_1^{s-1} 0, n)} \\
& > & \sum_{i=1}^{s-1} a_i \abs{{\cal C}_\kb(a_1^{i-1} 0, n)} + \sum_{i=s+1}^{n} b_i \abs{{\cal C}_\kb(a_1^{s-1} 0 b_{s+1}^{i-1}0, n)} \\
&=&   \sum_{i=1}^{n} b_i \abs{{\cal C}_\kb(b_1^{i-1} 0, n)},
\eean
where the second inequality follows from the fact that the sets ${\cal C}_\kb(a_1^{s-1} 0 b_{s+1}^{i-1}0, n)$, where $i=s+1, \ldots, n$ and $b_i=1$, are disjoint proper subsets of ${\cal C}_\kb(a_1^{s-1} 0, n)$. 
\end{proof}

\medskip

With the above lemma, given a codeword $\cb=c_1 \ldots c_n$, Algorithm \ref{alg:6} 
outputs $\psi_\kb(\cb)=m$ with 
\bea
m &=& \sum_{i=1}^{n} c_i \xb_\kb^\top \left( \prod_{j=1}^{i-1}\Am_{\kb,c_j} \right)\Am_{\kb,0} \Am_\kb^{(n+L-1)-i} \underline{y}_\kb + F_{\kb,n-1}+1 \no\\
&=& \sum_{i=1}^{n} c_i \abs{{\cal C}_\kb(c_1^{i-1} 0, n)} + F_{\kb,n-1}+1. \label{eq:appceq2}
\eea
We remark that the first term in the above, i.e., $\sum_{i=1}^{n} c_i \abs{{\cal C}_\kb(c_1^{i-1} 0, n)}$, is the only term dependent on $\cb$, and it also appears in \eqref{eq:lem1}. It means that the encoding and decoding algorithms of UDOOC given in Algorithms \ref{alg:5} and \ref{alg:6} are indeed based on the lexicographical ordering of length-$n$ codewords in ${\cal C}_\kb(n)$. 
Using Lemma \ref{lem:1} we can establish the range of $\psi_\kb$ when restricted to ${\cal C}_\kb(n)$. 

\begin{cor}
The range of $\psi_\kb$ when restricted to ${\cal C}_\kb(n)$ is the set ${\cal U}_\kb(n)=\{u_m : F_{\kb,n-1} < m \leq F_{\kb,n}\}$. Therefore, $\psi_\kb$ is a bijection between ${\cal C}_\kb(n)$ and ${\cal U}_\kb(n)$ for all $n \geq 1$. 
\end{cor}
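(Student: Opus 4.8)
The plan is to prove both assertions simultaneously by recognizing the codeword-dependent quantity $\sigma(\cb):=\sum_{i=1}^{n} c_i \abs{{\cal C}_\kb(c_1^{i-1}0,n)}$ appearing in \eqref{eq:appceq2} as the \emph{lexicographic rank} of $\cb$ within ${\cal C}_\kb(n)$, i.e., the number of length-$n$ codewords that lie strictly below $\cb$ under the ordering $\succ$. First I would establish this counting identity directly from the prefix decomposition \eqref{eq:prefixdecom}: a codeword $\bb\in{\cal C}_\kb(n)$ satisfies $\cb\succ\bb$ precisely when $\bb$ agrees with $\cb$ on its first $s-1$ bits and has $b_s=0<1=c_s$ at the first position $s$ where they differ; equivalently, $\bb$ lies in ${\cal C}_\kb(c_1^{s-1}0,n)$ for some index $s$ with $c_s=1$. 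Since the sets ${\cal C}_\kb(c_1^{s-1}0,n)$ attached to distinct such positions are pairwise disjoint (they are determined by distinct prefixes of the same length), summing their cardinalities yields exactly $\sigma(\cb)$, proving that $\sigma(\cb)$ equals the number of codewords strictly smaller than $\cb$.

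With this identity in hand, the remainder is bookkeeping. Because $\succ$ is a total ordering of the finite set ${\cal C}_\kb(n)$, which has $c_{\kb,n}$ elements by the definition of $c_{\kb,n}$, the rank function $\sigma$ is a bijection from ${\cal C}_\kb(n)$ onto $\{0,1,\ldots,c_{\kb,n}-1\}$: distinct codewords receive distinct ranks (this also re-derives the injectivity guaranteed by Lemma \ref{lem:1}), and every integer in that range is attained since the total order enumerates all $c_{\kb,n}$ codewords consecutively. Consequently $\psi_\kb(\cb)=\sigma(\cb)+F_{\kb,n-1}+1$ is a bijection from ${\cal C}_\kb(n)$ onto $\{F_{\kb,n-1}+1,\ldots,F_{\kb,n-1}+c_{\kb,n}\}$.

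Finally I would close the loop using the definition \eqref{eq:Fkbn}, which gives $F_{\kb,n}-F_{\kb,n-1}=c_{\kb,n}$, so the target set is exactly $\{F_{\kb,n-1}+1,\ldots,F_{\kb,n}\}$, precisely the index set of ${\cal U}_\kb(n)=\{u_m:F_{\kb,n-1}<m\leq F_{\kb,n}\}$. This simultaneously identifies the range of $\psi_\kb$ with ${\cal U}_\kb(n)$ and certifies that $\psi_\kb$ is a bijection for every $n\geq1$.

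I expect the only genuine step to be the counting identity of the first paragraph; once $\sigma(\cb)$ is seen as a rank function, injectivity, surjectivity, and the exact range all follow from the total-ordering property together with the elementary relation $F_{\kb,n}-F_{\kb,n-1}=c_{\kb,n}$. The point demanding care is the disjointness and exhaustiveness of the prefix decomposition — verifying that \emph{every} codeword below $\cb$ is captured by exactly one term $\abs{{\cal C}_\kb(c_1^{s-1}0,n)}$, indexed by the first position $s$ at which $c_s=1$ and the codeword can drop to $0$ — which is where a careless argument could either double-count or omit a boundary case.
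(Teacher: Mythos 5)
Your proof is correct, and it takes a genuinely different route from the paper's. You prove the stronger statement that $\sigma(\cb)=\sum_{i=1}^{n} c_i \abs{{\cal C}_\kb(c_1^{i-1}0,n)}$ is exactly the lexicographic rank of $\cb$ in ${\cal C}_\kb(n)$ — the disjoint union $\bigcup_{i:\,c_i=1}{\cal C}_\kb(c_1^{i-1}0,n)$ being precisely the set of codewords strictly below $\cb$ — and then read off injectivity, surjectivity, and the exact range $\{F_{\kb,n-1}+1,\ldots,F_{\kb,n}\}$ in one stroke from the fact that a rank function on a totally ordered set of size $c_{\kb,n}$ is a bijection onto $\{0,\ldots,c_{\kb,n}-1\}$. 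The paper instead splits the work three ways: it invokes Lemma \ref{lem:1} (strict monotonicity of $\sigma$ under $\succ$) for injectivity, computes $\psi_\kb$ only at the two extremal codewords $\bb$ and $\db$ (showing $\psi_\kb(\bb)=F_{\kb,n-1}+1$ and, via the identity $\bigcup_{i:\,d_i=1}{\cal C}_\kb(d_1^{i-1}0,n)={\cal C}_\kb(n)\setminus\{\db\}$, that $\psi_\kb(\db)=F_{\kb,n}$), and then closes with a cardinality/pigeonhole argument. Your rank identity is exactly the generalization of the paper's computation for the maximal codeword $\db$ to an arbitrary $\cb$, and it subsumes Lemma \ref{lem:1} as an immediate consequence; what it buys is a single self-contained argument with no pigeonhole step, at the cost of having to verify the disjointness-and-exhaustiveness claim for every codeword rather than just the largest one — which you do correctly, since the first-differing-position argument shows each $\bb\prec\cb$ lies in exactly one set ${\cal C}_\kb(c_1^{s-1}0,n)$ with $c_s=1$.
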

\begin{proof}
Given ${\cal C}_\kb(n)$, let $\bb$ be the smallest member and $\db$ be the largest member according to the lexicographical ordering, i.e. $\bb \preceq \cb \preceq \db$ for all $\cb \in {\cal C}_\kb(n)$. It then follows from Lemma \ref{lem:1} that 
\[
\min_{\cb \in {\cal C}_\kb(n)} \psi_\kb(\cb) \ = \ \psi_\kb(\bb) \quad \text{ and } \quad \max_{\cb \in {\cal C}_\kb(n)} \psi_\kb(\cb) \ = \ \psi_\kb(\db) .
\]
For the minimum, from \eqref{eq:appceq2} we have 
\[
\psi_\kb(\bb) \ = \  \sum_{i=1}^{n} b_i \abs{{\cal C}_\kb(b_1^{i-1} 0, n)} + F_{\kb,n-1}+1.
\]
Since $\bb$ is the smallest member, it follows that for all $i$, $i=1,\ldots,n$, $ \abs{{\cal C}_\kb(b_1^{i-1} 0, n)}=0$ if $b_i=1$. Hence 
\[
\min_{\cb \in {\cal C}_\kb(n)} \psi_\kb(\cb) \ = \ \psi_\kb(\bb)  \ = \ F_{\kb,n-1}+1.
\]
To see the maximum, again from \eqref{eq:appceq2} 
\[
\psi_\kb(\db) \ = \  \sum_{i=1}^{n} d_i \abs{{\cal C}_\kb(d_1^{i-1} 0, n)} + F_{\kb,n-1}+1.
\]
Since $\db$ is the largest member in ${\cal C}_\kb(n)$, the sets  ${\cal C}_\kb(d_1^{i-1} 0, n)$, where $i=1, \ldots, n$ and $d_i=1$, are disjoint and proper subsets of ${\cal C}_\kb(n)$. Moreover, for any $\cb \in {\cal C}_\kb(n)$ and $\cb \prec \db$, there exists a smallest integer $s$, $1 \leq s \leq n$, such that $d_u=c_u$ for $u=1,\ldots,s-1$, $d_s=1$, and $c_s=0$. This in turn implies $\cb \in {\cal C}_\kb(d_1^{s-1} 0, n)$. Therefore,
\color{black}
\[
\bigcup_{i=1\atop d_i=1}^n {\cal C}_\kb(d_1^{i-1} 0, n) \ = \ {\cal C}_\kb(n) \setminus \{ \db\}
\]
and
\[
\psi_\kb(\db) \ = \  c_{\kb,n}-1 + F_{\kb,n-1}+1 = F_{\kb,n}.
\]
Finally, noting that $\abs{{\cal C}_\kb(n)} = \abs{{\cal U}_\kb(n)}$ and that $\psi_\kb$ is injective by Lemma \ref{lem:1}, we conclude that $\psi_\kb$ is bijective.
\end{proof}

\medskip

So far we have established the first claim that $\psi_\kb$ is a bijection
between ${\cal C}_\kb(n)$ and ${\cal U}_\kb(n)$. To prove the second claim that $\phi_\kb$ is the functional inverse of $\psi_\kb$, given a codeword $\cb=c_1 \ldots c_n$, Algorithm \ref{alg:6} outputs 
\[
m = \psi_\kb(\cb)=  \sum_{i=1}^{n} c_i \abs{{\cal C}_\kb(c_1^{i-1} 0, n)} + F_{\kb,n-1}+1
\]
and $F_{\kb,n-1} < m \leq F_{\kb,n}$. 
Line 2 of Algorithm \ref{alg:5} would produce the correct $n$ for $m$. Then, from line 3 of Algorithm \ref{alg:5}, we get 
\[
\rho_0 = \sum_{i=1}^{n} c_i \abs{{\cal C}_\kb(c_1^{i-1} 0, n)} +1.
\]
For the loop of lines 3-10 of Algorithm \ref{alg:5}, when $i=1$, \emph{dummy} has value 
\[
\emph{dummy} \ = \ \underline{x}_\kb^\top \Am_{\kb,0} \Am_{\kb}^{n+L-2} \underline{y}_\kb = \abs{{\cal C}_\kb(0, n)}. 
\]
We distinguish two cases:
\begin{enumerate}
\item if $c_1=0$, then we must have 
\[
\rho_0 = \sum_{i=2}^{n} c_i \abs{{\cal C}_\kb(0c_2^{i-1} 0, n)} +1 \leq \emph{dummy}
\]
since $\sum_{i=2}^{n} c_i \abs{{\cal C}_\kb(0c_2^{i-1} 0, n)}$ is the sum of the cardinalities of certain disjoint subsets (with different prefixes) of ${\cal C}_\kb(0, n)$. Hence lines 5-9 of Algorithm \ref{alg:5} output $c_1=0$ as desired. 
\item if $c_1=1$, then
\[
\rho_0 = \abs{{\cal C}_\kb(0, n)} + \sum_{i=2}^{n} c_i \abs{{\cal C}_\kb(1c_2^{i-1} 0, n)} +1 > \emph{dummy}
\]
and lines 5-9 of Algorithm \ref{alg:5} gives the correct $c_1=1$. 
\end{enumerate}
Furthermore, it can be seen that at the end of line 9, we have 
\[
\rho_1 = \sum_{i=2}^{n} c_i \abs{{\cal C}_\kb(c_1^{i-1} 0, n)} +1 
\]
for the next iteration. Now suppose we are at the $t$th iteration of Algorithm \ref{alg:5} for some integer $t$ with $1 < t < n$. We have already determined $c_1, c_2, \ldots, c_{t-1}$, and have 
\[
\rho_{t-1} = \sum_{i=t}^{n} c_i \abs{{\cal C}_\kb(c_1^{i-1} 0, n)} +1. 
\]
Line 4 of Algorithm \ref{alg:5} then gives
\bean
\emph{dummy} &=& \underline{x}_\kb^\top \left( \prod_{i=1}^{t-1} \Am_{\kb,c_i} \right)\Am_{\kb,0} \Am_{\kb}^{(n+L-1)-t} \underline{y}_\kb\\
& =& \abs{{\cal C}_\kb(c_1^{t-1}0, n)}. 
\eean
Using the same reasoning as the above it can be easily shown that lines 5-9 of Algorithm \ref{alg:5} always produce the correct value for $c_t$. Finally at the $n$th iteration we have 
\[
\rho_{n-1} = c_n \abs{{\cal C}_\kb(c_1^{n-1} 0, n)} +1
\]
and
\[
\emph{dummy} = \underline{x}_\kb^\top \left( \prod_{i=1}^{n-1} \Am_{\kb,c_i} \right)\Am_{\kb,0} \Am_{\kb}^{L-1} \underline{y}_\kb= \abs{{\cal C}_\kb(c_1^{n-1}0, n)}. 
\]
It should be noted that $c_1^{n-1}0$ is a length-$n$ word, hence $\emph{dummy} =0$ or $1$. We distinguish the following cases:
\begin{enumerate}
\item If $\emph{dummy} =0$, then $c_1^{n-1}0$ cannot be a valid codeword for UDOOC. Lines 5-9 of Algorithm \ref{alg:5} achieve exactly the above, since we have 
\[
\rho_{n-1} = c_n \cdot \emph{dummy} + 1 = 1 > \emph{dummy} =0
\]
and the algorithm always outputs $c_n=1$. 
\item If $\emph{dummy} =1$, then $\rho_{n-1}  = c_n + 1$. The same reasoning as the above shows that lines 5-9 of Algorithm \ref{alg:5} always produce the correct value for $c_n$. 
\end{enumerate}

We therefore complete the proof that $\phi_\kb$ is the functional inverse of $\psi_\kb$. 

\section{$\lim_{t\rightarrow\infty}L_{\kb,t}$ for all-zero UW
and uniform i.i.d.~source} \label{app:d}

Let $\ab=00\ldots0$ be the all-zero UW of length $L$. From \eqref{ha}, \eqref{eq:case1ckbn1}, and \eqref{eq:case1ckbn2}, it can be easily verified that 
\beq
\sum_{n=0}^\infty c_{\ab,n}z^n=1+\dfrac{z}{1-\sum_{i=1}^L z^i}. \label{eq:z0}
\eeq
Furthermore, from \eqref{ha} we have $g(z) = (1-z) h_\ab(z) = 1-2z+z^{L+1}$. It is straightforward to show that the two polynomials $g(z)$ and $\frac{d}{dz}g(z)$ are co-prime to each other; hence there are no repeating zeros in $h_\ab(z)$. It then implies that all the nonzero eigenvalues of $\Am_{\ab}$ are simple. 

Denote by $\lambda_1 \cdots \lambda_L$ the nonzero eigenvalues of $\Am_{\ab}$,
and assume without loss of generality that $|\lambda_1|>|\lambda_2|\geq \cdots\geq |\lambda_L|.$
Then
\[
c_{\ab.n}=\delta_n+\sum_{i=1}^L a_i(\lambda_i)^n
\]
where $a_1 \cdots a_L$ are constants such that \eqref{eq:z0} holds. We can also obtain the closed-form expression for $F_{\ab,n}$ as
\[
F_{\ab,n}=1+\sum_{i=1}^L a_i \dfrac{\lambda_i^{n+1}-1}{\lambda_i-1}, \quad \text{for all $n \geq 0$}
\]

Consider a uniform i.i.d.~source $\cal U$ of alphabet size $M$ with $M>1$. Let ${\cal U}^t$ be the grouped source obtained by grouping any $t$ source symbols (with repetition) in $\cal U$. It is clear that ${\cal U}^t$ is also a uniform i.i.d.~source. 
The per-letter average codeword length is given by
\bean
L_{\ab,t}&=&\frac{1}{t}\left( L +\sum_{i=2}^{M^t} p_i \ell(\phi(u_i))\right)\\
&=&\frac{1}{t}\left( L +   \dfrac{1}{M^t}\sum_{i=2}^{M^t}\ell(\phi(u_i))\right).
\eean
Let $N$ be the smallest integer such that $F_{\ab,N} \geq M^t > F_{\ab,N-1}$.
Then
\bean
L_{\ab,t} &\geq& \frac{1}{t}\left( L +   \dfrac{1}{F_{\ab,N}}\sum_{i=2}^{M^t}\ell(\phi(u_i))   \right)\\
&\geq&\frac{1}{t}\left( L +   \dfrac{1}{F_{\ab,N}}\sum_{i=2}^{N-1} i \cdot c_{\ab,i}  \right)\\
&\geq&\frac{1}{\log_M (F_{\ab,N})}\left( L +   \dfrac{1}{F_{\ab,N}}\sum_{i=2}^{N-1} i \cdot c_{\ab,i}  \right).
\eean
Consequently,
\bean
 \IEEEeqnarraymulticol{3}{l}{\lim_{t\rightarrow \infty} L_{\ab,t}}\\
\ & \geq& \lim_{N \rightarrow \infty} \frac{1}{\log_M (F_{\ab,N})}\left( L + \frac{1}{F_{\ab,N}}\sum_{i=2}^{N-1} i \cdot c_{\ab,i}\right)\\
&=&
\lim_{N \rightarrow \infty}\dfrac{\sum_{i=2}^{N-1} i\cdot c_{\ab,i}}{F_{\ab,N}\log_M (F_{\ab,n})}\\
&=& \lim_{N \rightarrow \infty}\! \dfrac{\displaystyle\sum_{i=1}^L a_i\dfrac{\lambda_i[(N-1)\lambda_i^{N}-N\lambda_i^{N-1}+1]}{(\lambda_i-1)^2}}{\left(\! 1\!+\!\displaystyle\sum_{i=1}^L a_i \dfrac{\lambda_i^{N+1}-1}{\lambda_i-1}\right)\!\log_M \!\!\left(\! 1\!+\!\displaystyle\sum_{i=1}^L a_i \dfrac{\lambda_i^{N+1}-1}{\lambda_i-1}\right)}\\
&=&\frac{1}{\log_M (\lambda_1)}=\frac{1}{\log_M (g_\ab)}.
\eean
This implies
\[
\lim_{t\rightarrow\infty}L_{\ab,t} \geq \frac{\log_2 (M)}{\log_2 (g_\ab)}
=\frac{\text{H}({\cal U})}{\log_2(g_\ab)}.
\]


\begin{thebibliography}{1}

\bibitem{LEN94} N.~Alon and A.~Orlitsky,
``A lower bound on the expected length of one-to-one codes,"
\emph{IEEE Trans.~Inf.~Theory}, vol.~40, no.~5, pp.~1670-1672, September 1994.

\bibitem{DIGRAPH} J.~Bang-Jensen and G.~Z.~Gutin, \emph{Theory, Algorithms and Applications}, Springer Monographs in Mathematics, 2009.

\bibitem{std} Information Technology-Telecommunications And Information Exchange Between Systems-Local and Metropolitan Area Networks-Specific Requirements-Part 11: Wireless LAN Medium Access Control (MAC) and Physical Layer (PHY) Specifications, IEEE Standard 802.11-1999.

\bibitem{GRAPH} N.~ Biggs, \emph{Algebraic Graph Theory}, Cambridge Mathematical Library, 1994.

\bibitem{LEN96} C.~Blundo and R.~D.~Prisco,
``New bounds on the expected length of one-to-one codes,"
\emph{IEEE Trans.~Inf.~Theory}, vol.~42, no.~1, pp.~246-250, January 1996.

\bibitem{LEN07} J.~Cheng, T.-K.~Huang and C.~Weidmann,
``New bounds on the expected length of optimal one-to-one codes,"
\emph{IEEE Trans.~Inf.~Theory}, vol.~53, no.~5, pp.~1884-1895, May 2007.

\bibitem{Cover} T.~M.~Cover and J.~A.~Thomas, \emph{Elements of Information Theory},
New York, NY: John Wiley \& Sons, 1991. 

\bibitem{MATH293}
R.~Doroslova\v{c}ki, ``The set of all the words of length $n$ over any alphabet with a forbidden good subword," 
\emph{Univ.~u Novom Sadu, Zb.~Rad.~Prirod.-Mat.~Fak.~Ser.~Mat.}, 23:2, pp.~239-244, 1993.

\bibitem{MATH195}
R.~Doroslova\v{c}ki, ``The set of all the words of length $n$ over alphabet $\{0,1\}$ with any forbidden subword of length three," 
\emph{Univ.~u Novom Sadu, Zb.~Rad.~Prirod.-Mat.~Fak.~Ser.~Mat.}, 25: 2, pp.~111-115, 1995.

\bibitem{MATH98}
R.~Doroslova\v{c}ki, ``Binary $n$-Words without the subword $1010\cdots 10$," 
\emph{Novi Sad J.~Math.}, vol.~28, no.~2, pp.~127-133, 1998.

\bibitem{MATH99}
R.~Doroslova\v{c}ki, ``On binary $n$-words with forbidden 4-subwords," 
\emph{Novi Sad J.~Math.}, vol.~29, no.~1, pp.~27-32, 1999.

\bibitem{MATH00}
R.~Doroslova\v{c}ki, ``$n$-words over any alphabet with forbidden any $3$-subwords," \emph{Novi Sad J.~Math.}, vol.~30, no.~2, pp.~159-163, 2000.

\bibitem{LEN80} J.~G.~Dunham,
``Optimal noiseless coding of random variables,"
\emph{IEEE Trans.~Inf.~Theory}, vol.~IT-26, no.~3, p.~345, May 1980.

\bibitem{AMM59} Sam E.~Ganis, \emph{Notes on the Fibonacci Sequence},
Amer.~Math.~Monthly, 1959, pp.~129-130. 

\bibitem{GR} C.~Godsil and G.~F.~Royle, \emph{Algebraic Graph Theory},
Springer, 2001.

\bibitem{LNMATH} I.~Goulden and D.~M.~Jackson, ``An inversion theorem for cluster decompositions of sequences with distinguished subsequences,"
\emph{J.~London Math.~Soc}, pp.~567-576, 1979.

\bibitem{MATH81}
L.~J.~Guibas and A.~M.~Odlyzko, ``Periods in strings,"
\emph{J.~Combinatorial Theory}, series A 30, pp.~19-42, 1981.

\bibitem{HK} K.~M.~Hoffman and R.~Kunze, \emph{Linear Algebra},
2nd edition, Pearson,1971.

\bibitem{HJ} R.~A.~Horn and C.~R.~Johnson, \emph{Matrix Analysis}, 2nd edition,
Cambridge University Press, 2012.

\bibitem{DIFF10}
E.~J.~Kupin and D.~S.~Yuster, ``Generalizations of the Goulden-Jackson cluster method," 
\emph{J.~Difference Eq.~Appl.}, 16:12, pp.~1463-1480, 2010.

\bibitem{LEN78}S.~K.~Leung-Yan-Cheong and T.~M.~Cover,
``Some equivalences between Shannon entropy and Kolmogorov complexity,"
\emph{IEEE Trans.~on Information theory}, vol.~IT-24, no.~3, pp.~331-338, May 1978.

\bibitem{SP98}
J.~Noonan, ``New upper bounds for the connective constants of self-avoiding walks," 
\emph{J.~Statistical Physics}, vol.~91, nos.~5/6, 1998.

\bibitem{DIFF99}
J.~Noonan and D.~Zeilberger, ``The Goulden-Jackson cluster method: extensions, applications, and implementations," 
\emph{J.~Difference Eq.~Appl.}, 5: pp.~355-377, 1999.

\bibitem{LEN82} J.~Rissanen, 
``Tight lower bounds for optimum code length," 
\emph{IEEE Trans.~Inf.~Theory}, vol.~IT-28, no.~2, pp.~348-349, March 1982.

\bibitem{MATH03}
E.~Rivals  and S.~Rahmann, ``Combinatorics of periods in strings,"
\emph{J.~Combinatorial Theory}, series A 104, pp.~95-113, 2003.

\bibitem{LEN08J} S.~A.~Savari,
``On one-to-one codes for memoryless cost channels,"
\emph{IEEE Trans.~Inf.~Theory}, vol.~54, no.~1, pp.~367-379, January 2008.

\bibitem{LEN04} S.~A.~Savari and A.Naheta,
``Bounds on the expected cost of one-to-one codes,"
\emph{IEEE International Symposium on Information Theory}, June 2004.

\bibitem{EC} R.~Stanley, \emph{Enumerative Combinatorics}, vols.~1 and 2, Cambridge Studies in Advanced Mathematics, 2011.

\bibitem{LEN08O} W.~Szpankowski,
``A one-to-one code and its anti-redundancy,"
\emph{IEEE Trans.~Inf.~Theory}, vol.~54, no.~10, pp.~4762-4766, October 2008.

\bibitem{LEN09} W.~Szpankowskia and S.~Verd\,{u},
``Minimum expected length of fixed-to-variable lossless compression of memoryless sources,"
\emph{IEEE International Symposium on Information Theory}, Seoul, Korea, July 2009.

\bibitem{LEN86} E.~I.~Verriest, 
``An achievable bound for optimal noiseless coding of a random variable," \emph{IEEE Trans.~Inf.~Theory}, vol.~IT-32, no.~4, pp.~592-594, July 1986.

\bibitem{DIFF06}
X.~Wen, ``The symbolic Goulden-Jackson cluster method,"
\emph{J.~Difference Eq.~Appl.}, 11:2, pp.~173-179, 2006.

\bibitem{LEN72} A.~D.~Wyner,
``An upper bound on the entropy series,"
\emph{Inf.~Control}, vol.~20, 30: pp.~176-181, 1972.

\bibitem{LZ77}  http://bcl.comli.eu/download-en.html

\bibitem{alice} http://corpus.canterbury.ac.nz/descriptions/

\bibitem{oxfdic} http://oxforddictionaries.com/words/what-is-the-frequency-of-the-letters-of-the-alphabet-in-english
\end{thebibliography}
\end{document}